\documentclass[a4paper,11pt]{article}
\usepackage{jheppub} 
\usepackage{amssymb,amsmath,bm}
\usepackage{amsthm}
\usepackage{physics}
\usepackage{tikz}
\usepackage{mathrsfs}
\usetikzlibrary{positioning,decorations,decorations.pathmorphing,decorations.shapes,shapes.symbols,shapes.geometric,arrows.meta,calc,calligraphy,fit,tikzmark,overlay-beamer-styles}
\usepackage{subcaption}
\usepackage{graphicx}  
\usepackage{multicol,multirow}
\usepackage{booktabs}
\usepackage{pifont}
\usepackage{comment}
\usepackage{cleveref}
\crefname{equation}{Eq.}{Eqs.}
\crefname{figure}{Fig.}{Figs.}
\crefname{section}{Sec.}{Secs.}
\crefname{appendix}{App.}{Apps.}
\crefname{lemma}{Lem.}{Lems.}
\crefname{proposition}{Prop.}{Props.}
\crefname{table}{Tbl.}{Tbls.}
\crefname{theorem}{Thm.}{Thms.}
\crefname{conjecture}{Conj.}{Conjs.}
\crefname{definition}{Def.}{Defs.}
\usepackage{rotating}
\usepackage{float}
\usepackage{adjustbox}
    
\newcommand{\cS}{\mathcal S}

\DeclareMathOperator{\dlog}{Li}
\newcommand{\ord}{\mathcal O}
\newcommand{\half}{\frac{1}{2}}
\newcommand{\textoverline}[1]{$\overline{\mbox{#1}}$}

\DeclareMathOperator{\Gr}{Gr}
\DeclareMathOperator{\Fl}{Fl}

\newtheorem*{theorem*}{Theorem}
\newtheorem{theorem}{Theorem}[section]
\newtheorem{lemma}[theorem]{Lemma}
\newtheorem{proposition}[theorem]{Proposition}

\newtheorem{conjecture}[theorem]{Conjecture}
\theoremstyle{definition}
\newtheorem{definition}[theorem]{Definition}

\newtheorem{remark}[theorem]{Remark}

\title{\boldmath Landau and cluster structures of one-loop amplitudes in ${\mathcal N}=4$ SYM in dimensional regularization}

\usepackage{orcidlink}
\author[\!a \orcidlink{0009-0002-5095-8308}]{Athanasia-Konstantina Angelopoulou,}

\author[\!a \orcidlink{0000-0003-2462-6481}]{Ruth Britto,}

\author[b \orcidlink{0000-0001-9556-9555}]{Matteo Parisi}

\affiliation{$^a$School of Mathematics and Hamilton Mathematics Institute, Trinity College, Dublin 2, Ireland}

\affiliation{$^b$Okinawa Institute of Science and Technology (OIST), Onna, Okinawa, Japan}

\emailAdd{angelopa@tcd.ie}
\emailAdd{britto@maths.tcd.ie}
\emailAdd{matteo.parisi@oist.jp}

\abstract{Generalized unitarity, Landau analysis, and cluster adjacency encode complementary aspects of scattering amplitudes. We use one-loop planar $\mathcal N=4$ SYM amplitudes in dimensional regularization, at arbitrary multiplicity and helicity, to make their interface explicit. The weight-two symbol decomposes into an \emph{LS part}, in which maximal-cut leading singularities furnish the coefficients and Landau loci associated with nested cuts organize the ordered symbol entries, an algebraic four-mass sector, and residual terms. Cancellations of certain letters contributed by individual box integrals, as well as further simplifications, are explained by the \emph{two-mass triangle relations} among box coefficients. We prove these relations using a BCFW-like application of the global residue theorem and show that they can be understood geometrically as different dissections of the same region in the tree amplituhedron obtained by projecting the loop geometry of a triple cut. We then prove that the full rational symbol, including its infrared-divergent part, obeys cluster adjacency in the flag cluster algebra $\mathrm{Fl}_{2,4;n}$ for all multiplicities and helicities. Within the sector depending only on momentum-twistor four-brackets, we conjecture a stronger cluster property in $\operatorname{Gr}(4,n)$ for all helicities and prove it for NMHV amplitudes: the amplitude admits a representation in which every pole of each coefficient is compatible with both symbol entries. Finally, we observe that the algebraic four-mass letters, although non-rational, exhibit a suggestive Sklyanin-bracket pattern.}

\begin{document}
\maketitle
\flushbottom

\section{Introduction}
\label{sec:intro}

Scattering amplitudes admit several complementary descriptions of their analytic structure: generalized unitarity organizes them in terms of master integrals
and leading singularities;
Landau analysis studies candidate singularities through the geometry of
cut equations;
while the cluster bootstrap constrains the integrated answer through
symbol alphabets, Steinmann relations, and cluster adjacency.

In this paper we use one-loop amplitudes in planar $\mathcal N=4$
super Yang--Mills theory as a setting in which these three perspectives
can be compared directly. 
The master integrals are simply scalar boxes \cite{Bern:1994zx}, and their coefficients are determined from tree-level amplitudes by quadruple cuts in generalized unitarity \cite{Britto:2004nc}.  
Moreover, infrared consistency gives linear
relations among them \cite{Roiban:2004ix}, while anomalous
dual conformal symmetry imposes further constraints
\cite{Brandhuber:2009rel,
Brandhuber:2009proof}. As a result, 
the symbol of the amplitude is considerably simpler than the symbols of the separate box functions would suggest.
We would like to determine which features
of the integrated amplitude are directly controlled by maximal cuts and
leading singularities, how relations among different cuts lead to
cancellations, what residual structure remains in the full symbol, and what cluster structure it exhibits.

A complementary description of the amplitude starts from the singularities themselves \cite{Hannesdottir:2024hke}.
The Landau equations give necessary conditions for singularities of
Feynman integrals, and their solutions can be organized geometrically as
discriminant loci in the space of external kinematics
\cite{Landau1959,Pham1967,MizeraTelen2022}. In planar $\mathcal N=4$ SYM,
Landau singularities have been compared directly with symbol alphabets
\cite{DennenSpradlinVolovich2016} and studied through the geometry of
the amplituhedron
\cite{Dennen:2016mdk}. The amplituhedron
analysis was subsequently extended to arbitrary helicity, recovering
the known branch points of all one-loop amplitudes
\cite{Prlina:2017azl}.

The relation between Landau geometry and symbols is subtler than a
simple identification of Landau loci with symbol letters. The local
behavior of an integral near a Landau singularity constrains the
positions at which the corresponding vanishing loci occur in its
symbol \cite{Hannesdottir:2021kpd}, while the geometry
of on-shell spaces places further constraints on allowed sequences of
discontinuities
\cite{Bourjaily:2020wvq,HannesdottirMcLeodSchwartzVergu:2022,Hannesdottir:2024cnn}, including constraints valid to all orders in dimensional regularization \cite{Bargiela:2026lje}.
Geometric
Landau analysis can be combined with information about positive
geometries to distinguish physical from spurious Landau solutions and
to determine symbol alphabets which can subsequently be used as input
for a bootstrap \cite{ChicherinEtAl:2025}. The work by one of the authors with Hollering, Mazzucchelli and Sturmfels develops Landau analysis in
Grassmannian geometry
\cite{HolleringMazzucchelliParisiSturmfels:2026Landau,
HolleringMazzucchelliParisiSturmfels:2026Positivity}. In this framework,
the degree of a maximal cut counts its leading singularities, while
\emph{leading-singularity discriminants}, or \emph{LS discriminants},
detect degenerations in which leading singularities collide. For a
large class of rational Landau problems, these discriminants exhibit
positivity and factorization into cluster variables. 

Cluster coordinates on $\operatorname{Gr}(4,n)$ have been observed to organize much
of the singularity and polylogarithmic structure of planar
$\mathcal N=4$ SYM amplitudes
\cite{GoldenGoncharovSpradlinVerguVolovich:2013}, and cluster adjacency
was subsequently proposed as a restriction on consecutive symbol
letters \cite{Drummond:2017ssj}. These ideas have become important ingredients of a bootstrap program that largely fixes the particle multiplicity and pushes to increasingly high loop order, most notably for six- and seven-particle amplitudes \cite{boot_review2020}. There is, however, a complementary
all-multiplicity story at low loop order. Suitably regulated one- and
two-loop MHV amplitudes satisfy cluster adjacency at arbitrary
multiplicity \cite{Golden:2019kks}, while the
all-multiplicity one-loop NMHV ratio function satisfies Steinmann cluster
adjacency \cite{Mago:2020eua}. The latter work
also established adjacency between Yangian-invariant coefficients and
final symbol entries for the BDS-like normalized one-loop NMHV amplitude
through nine points.

Relatedly, in \cite{Gurdogan:2020tip} a direct cluster relation was
proposed between the Landau singularities of a maximal cut and the poles
of each Yangian invariant appearing in the corresponding leading
singularity. We refer to this relation as \emph{LL-adjacency}. This
suggests that cluster structure may organize simultaneously the
kinematic singularities appearing as symbol letters and the rational
singularities carried by their leading-singularity coefficients.

In the present paper we take a different point of view from
much of the bootstrap literature. We work with the \emph{full
dimensionally regulated one-loop amplitude}, without passing to a ratio
function or BDS-like normalized quantity. This
keeps the connection with generalized unitarity and the scalar-box
expansion completely explicit, and retains rather than removes the
infrared-divergent sector. Our results are therefore complementary to
the usual high-loop bootstrap program: we keep the loop order fixed and
derive statements valid at arbitrary multiplicity and in all helicity sectors.

Keeping the full amplitude also changes the appropriate cluster
structure. Dual-conformal quantities are naturally expressed in terms of
momentum-twistor four-brackets and the cluster algebra of
$\operatorname{Gr}(4,n)$. The full dimensionally regulated amplitude
additionally contains spinor-helicity two-brackets, or equivalently
momentum-twistor brackets involving the infinity twistor. The natural
space containing both types of coordinates is therefore the
momentum-twistor partial flag variety
$
\operatorname{Fl}_{2,4;n}$.
Cluster structures on spinor-helicity and momentum-twistor partial flag
varieties have recently been constructed
\cite{BossingerLi:2024}, and related flag cluster algebras have begun to
organize symbol alphabets for massless scattering beyond the
dual-conformal setting
\cite{PokrakaSpradlinVolovichWeng:2025}.

Our first goal is to make the relation between cuts, Landau
singularities and symbol words precise. We introduce the
\emph{leading-singularity part}, or \emph{LS part}, of a symbol: the terms whose rational coefficients are leading singularities of maximal cuts, and whose ordered
sequence of letters can be associated to Landau singularities of a
nested sequence of cuts terminating on that maximal cut. In this way,
the ordering of the cut hierarchy is reflected in the ordering of the
symbol entries, in a way closely related in spirit to the
constraints on ordered discontinuities of
\cite{Abreu:2017ptx, Hannesdottir:2021kpd,
HannesdottirMcLeodSchwartzVergu:2022,Berghoff:2022mqu}, but applied here directly
to the generalized-unitarity representation of the full amplitude. For the scalar-box representation, this construction gives an explicit
all-multiplicity and helicity-independent decomposition 
\begin{equation}
\label{eq:intro-symbol-decomposition}
\mathcal S[A_{n,k}^{(1)}]
=
\mathcal S^{\rm LS}[A_{n,k}^{(1)}]
+
\mathcal S^{\rm 4m}
+
\sum_{i,j,l}
d_{ijl}\,
x_{ij}^{2}\otimes x_{il}^{2}.
\end{equation}
This decomposition is most meaningful at weight two, since lower-weight terms are tightly controlled by universal infrared behaviour.
The first term here has final entries which are the Landau discriminants associated with the corresponding maximal cuts, while earlier entries
arise from Landau loci of less restrictive cuts in the nested sequence. The second term in the above decomposition is
qualitatively different: for a generic four-mass box the maximal-cut
solutions are algebraic, 
rather than rational. Thus,
this sector must be isolated throughout. The third term is what remains beyond the leading-singularity
picture, and part of it can be simplified using relations among box coefficients.

A particularly important family of such identities is given by what we
call the \emph{two-mass triangle relations}. These relations arose previously in the study of the one-loop
dual conformal anomaly
\cite{Brandhuber:2009rel,
Brandhuber:2009proof}. Here we give a new derivation
which reveals a stronger statement. After cutting the three propagators of a two-mass
triangle, the remaining cut is a one-dimensional on-shell variety with two
components. 
The global residue theorem can be applied separately on each of the two branches. Summing the two
branchwise identities gives a two-mass triangle relation among box
coefficients. At the level of the symbol, these relations explain the
cancellation of the letters involving differences of Mandelstam invariants
and simplify part, but not all, of the residual sector. The branchwise form
also suggests a geometric interpretation: intersecting the one-loop
amplituhedron with a non-maximal cut and projecting to the tree
amplituhedron, different completions of the cut to maximal cuts give
collections of regions whose canonical forms are the corresponding leading
singularities. The two sides of the residue identity can then be read as two
dissections of the same projected cut region. We verify this picture in
cases where the relevant projected geometries can be controlled explicitly.

Our second main set of results concerns cluster structure. After
rewriting dual coordinates in momentum-twistor variables and expanding
symbol letters multiplicatively, we prove ordinary cluster adjacency for
the \emph{rational sector of the full dimensionally regulated one-loop
amplitude}, for arbitrary multiplicity and all helicity sectors, in the
flag cluster algebra $\operatorname{Fl}_{2,4;n}$. In particular, this
statement retains the infinity-twistor letters associated with the
infrared-divergent part of the amplitude.

The four-mass algebraic sector requires separate treatment. From the viewpoint of the geometric Landau analysis of \cite{HolleringMazzucchelliParisiSturmfels:2026Landau,HolleringMazzucchelliParisiSturmfels:2026Positivity}, its discriminant is an LS discriminant associated with a non-rational maximal-cut fiber. Although positivity of the LS discriminant continues to hold in this case, the discriminant is not itself a cluster variable. In the integrated amplitude it  enters instead through algebraic letters involving its square root. We nevertheless find that the \emph{Sklyanin bracket test} \cite{Golden:2019kks, HeLiYang2022Constraints}, usually employed as a test of compatibility between rational cluster coordinates, exhibits a remarkably simple behavior when applied directly to the algebraic four-mass letters. To our knowledge, this is the first such application of the Sklyanin bracket to algebraic symbol letters. We regard this as evidence that the Poisson structure underlying cluster adjacency may admit a natural extension beyond the rational cluster alphabet.

Finally, we refine ordinary symbol adjacency by retaining information
about the rational coefficients of the words. After expressing the
amplitude in momentum-twistor variables and multiplicatively expanding
the symbol, we restrict to the sector in which both symbol entries are
momentum-twistor four-brackets, rather than infinity-twistor
two-brackets.
For the one-loop NMHV amplitude, at arbitrary multiplicity, we show that
this sector admits a representation in which the poles of each
$R$-invariant coefficient are compatible with both entries of the
corresponding symbol word. For LS terms, compatibility with the final
entry refines the relation between Landau and leading singularities
suggested by LL-adjacency, while compatibility with the first entry can
be analyzed directly from the box structure. For the non-LS sector,
identities among $R$-invariants give a representation in which the same
compatibility becomes manifest.

The paper is organized as follows. In \cref{sec:backgr} we review the
one-loop box expansion, cuts, leading and Landau singularities, symbols,
infrared behavior, and dual conformal symmetry. In \cref{sec:res} we
define the LS part of the symbol and derive the decomposition of
\cref{eq:intro-symbol-decomposition}, together with the coefficients of
all rational symbol words. In \cref{sec:2mtri} we study the two-mass
triangle relations, their role in simplifying the one-loop symbol,
their branchwise proof from on-shell forms, their amplituhedron
interpretation, and further relations among box coefficients. In
\cref{sec:clust} we study the flag-cluster structure of the full
one-loop amplitude, the algebraic four-mass sector, and cluster
adjacency involving rational coefficients. We conclude in
\cref{sec:discussion} with open questions concerning the passage from
Landau geometry to integrated amplitudes and its extension to higher
loops.

\section{One-loop amplitudes, singularities, and symbols}
\label{sec:backgr}
We consider one-loop superamplitudes in planar $\mathcal{N} = 4 \text{ SYM}$ with $n>4$ external legs, taken in dimensional regularization with $D=4-2\epsilon$. Throughout, all particle labels are understood $\bmod \;n$. For the external momenta, we use dual coordinates $x_i$, defined by
\begin{equation}\label{eq:dualcoord}
    p_i = x_i - x_{i+1},
    \qquad
    x_{ij}:= x_i - x_j =p_i+p_{i+1}+\cdots + p_{j-1},
\end{equation}
where $p_k$ is the incoming momentum of leg $k$.

The sixteen on-shell states of the theory are packaged into a single on-shell superfield $\Phi(p_i,\eta_i)$ using Grassmann variables $\eta_{i}^{\alpha}$, labeled by the $\mathrm{SU}(4)$ R-symmetry index $\alpha=1,\ldots,4$. Each state appears as the coefficient of a distinct monomial in $\eta_{i}^\alpha$, with the positive-helicity gluon at degree zero and the negative-helicity gluon at degree four. A superamplitude $A_n(p_i,\eta_i)$ thus encodes all component amplitudes at once, and its expansion in overall Grassmann degree $4(k+2)$ defines the N$^k$MHV sectors \cite{bible}, 
\begin{equation}
    A_n = A_{n,\mathrm{MHV}} + A_{n,\mathrm{NMHV}} + \cdots +A_{n,\overline{\mathrm{MHV}}}.
\end{equation}
After separating color factors, the superamplitude up to order $\epsilon^0$ can be expressed as a linear combination of scalar box functions  $F_{ijkl}$ \cite{Bern:1994zx},
\begin{equation}\label{eq:boxdecomp}
    A^{(1)}_n=\sum_{i,j,k,l} c_{ijkl} F_{ijkl} + \mathcal{O}(\epsilon),
\end{equation}
where the superscript $(1)$ on the amplitude refers to one-loop order. The sum runs over all partitions of the $n$ cyclically ordered external legs into four non-empty sets of adjacent legs. Each partition is labeled by the four indices $(i,j,k,l)$, denoting the first leg of each set: the box $F_{ijkl}$ carries the legs
\begin{equation}\label{eq:corners}
    \{i,\ldots,j-1\},\quad
    \{j,\ldots,k-1\},\quad
    \{k,\ldots,l-1\},\quad
    \{l,\ldots,i-1\}
\end{equation}
at its four corners, so that the corner momenta are
$x_{ij}$, $x_{jk}$, $x_{kl}$ and $x_{li}$, respectively. A corner
carrying a single leg is massless. The box functions $F_{ijkl}$ are obtained from the scalar box integrals $I_{ijkl}$, through the relations
\begin{align}
F_{ijkl}&= -\frac{\Gamma(1-2\epsilon)}{2\Gamma(1+\epsilon)\Gamma^2(1-\epsilon)}
\sqrt{\rho_{ijkl}}I_{ijkl},\label{eq:boxfcns}\\ I_{ijkl}&=-i(4\pi)^{2-\epsilon}\int  \frac{\mathrm{d}^{4-2\epsilon}\ell}{(2\pi)^{4-2\epsilon}}\frac{1}{\ell^2(\ell+x_{ij})^2(\ell+x_{ik})^2(\ell+x_{il})^2},
\end{align}
where 
\begin{equation}\label{eq:jacobrho}
\rho_{ijkl}
:=
\left(
  x_{ik}^2x_{jl}^2
  -x_{ij}^2x_{kl}^2
  -x_{jk}^2x_{il}^2
\right)^2
-4x_{ij}^2x_{jk}^2x_{kl}^2x_{il}^2 .
\end{equation}
The box coefficients $c_{ijkl}$ can be found by performing a maximal cut, giving a leading singularity of the amplitude.
\subsection{Leading and Landau singularities}
We first state our conventions for cuts, leading singularities, and
Landau singularities.  Let $G$ be a Feynman
diagram, and let $E$ be a collection of internal edges of $G$.  For
fixed external kinematics, the cut associated to $E$ is the subvariety
\begin{equation}
  \mathcal C_E
  :=
  \left\{ \ell \;:\; q_e(\ell)^2=0 \ \text{for all } e\in E \right\}
\end{equation}
of complexified loop-momentum space $\mathcal{L}$.  Thus, on $\mathcal C_E$, the
momenta flowing through all edges in $E$ are on shell.  We
will say that $\mathcal C_E$ is a maximal cut if, for generic external
kinematics, it is zero-dimensional when $\epsilon=0$.  In that case the loop momenta are
localized to a finite set of points
\begin{equation}
  \mathcal C_E=\left\{
  \ell_*^{(1)},\ldots,\ell_*^{(r_E)}
  \right\},
\end{equation}
where $r_E$ denotes the number of solutions, counted with the
appropriate multiplicities. In the planar case, in addition to the dual variables associated to the external faces, we introduce the dual variables $y_1,\ldots,y_L$ for the internal faces. If an edge $e$ separates the internal faces $y_a$ and $y_b$, then its inverse propagator is $q_e^2=(y_a-y_b)^2$.
If instead $e$ separates the external face $x_i$ from the internal face $y_a$, then $q_e^2=(x_i-y_a)^2$.

In planar $\mathcal N=4$ SYM we may apply the same language directly to
a fixed representation of the full integrand $\mathcal I[A^{(L)}_{n,k}]$ \cite{ArkaniHamedBourjailyCachazoCaronHuotTrnka2011}:
a propagator cut is the vanishing locus of a collection of equations of
the form
\begin{equation}
  (x_i-y_a)^2=0,
  \qquad
  (y_a-y_b)^2=0.
\end{equation}

The leading singularities of a fixed integrand representation are the
multivariate residues of the corresponding rational top form on maximal
cuts \cite{Cachazo2008,Arkani-Hamed:2009ljj}.  More precisely, if $\mathcal C_E$ is maximal and
$\ell_*^{{\sigma}}\in \mathcal C_E$, we write
\begin{equation}
  \mathrm{LeS}^{{\sigma}}_{\mathcal C_E}
  \bigl[\mathcal A^{(L)}_{n,k}\bigr]
  :=
  \operatorname*{Res}_{\ell=\ell_*^{{\sigma}}}
  \mathcal I[A^{(L)}_{n,k}] .
\end{equation}
For example, suppose that the cut is locally a transverse complete
intersection defined by $4L$ independent inverse propagators
$f_a(\ell)$ and that
\begin{equation}
\mathcal I[A^{(L)}_{n,k}]
=
\frac{N(\ell)\,\mathrm d^{4L}\ell}
     {f_1(\ell)\cdots f_{4L}(\ell)}
+\cdots .
\end{equation}
Then, up to the orientation of the residue contour,
\begin{equation}\label{eq:resjacobian}
\mathrm{LeS}^{{\sigma}}_{\mathcal C_E}
\bigl[A^{(L)}_{n,k}\bigr]
=
\left.
\frac{N(\ell)}
     {\det\!\left(
        \partial f_a/\partial\ell_A
      \right)}
\right|_{\ell=\ell_*^{{\sigma}}} ,
\end{equation}
where $\ell_A$ runs over the $4L$ loop-momentum components. 

\paragraph{One-loop maximal cuts and box coefficients.}

\begin{figure}
\centering
\begin{tikzpicture}
\def\rblob{0.45}
\def\rext{0.5}
\def\l{0.48}
\coordinate (A) at (-1,-1);
\coordinate (B) at (1,-1);
\coordinate (C) at (1,1);
\coordinate (D) at (-1,1);
\draw[thick] (A) -- (B) -- (C) -- (D) -- cycle;
\node[circle, draw, fill=white, thick, minimum size=0.5cm,inner sep=0pt] (a) at (A) {${A}_{li}^{(0)}$};
\node[circle, draw, fill=white, thick, minimum size=0.5cm,inner sep=0pt] (b) at (B) {${A}_{kl}^{(0)}$};
\node[circle, draw, fill=white, thick, minimum size=0.5cm,inner sep=0pt] (c) at (C) {${A}_{jk}^{(0)}$};
\node[circle, draw, fill=white, thick, minimum size=0.5cm,inner sep=0pt] (d) at (D) {${A}_{ij}^{(0)}$};

\draw[thick] ($ (A) + (-\rblob,0) $) -- ++(-\rext,0) node(g)[left]{\footnotesize $i-1$}; 
\draw[thick] ($ (A) + (0,-\rblob) $) -- ++(0,-\rext) node(h)[below]{\footnotesize $l$}; 
\path (g) -- node[sloped,xshift=2pt,yshift=2pt]{$\cdots$} (h);

\draw[thick] ($ (B) + (\rblob,0) $) -- ++(\rext,0) node(e)[right]{\footnotesize $k$}; 
\draw[thick] ($ (B) + (0,-\rblob) $) -- ++(0,-\rext) node(f)[below]{\footnotesize $l-1$}; 
\path (e) -- node[sloped]{$\cdots$} (f); 

\draw[thick] ($ (C) + (\rblob,0) $) -- ++(\rext,0) node(d)[right]{\footnotesize $k-1$}; 
\draw[thick] ($ (C) + (0,\rblob) $) -- ++(0,\rext) node(c)[above]{\footnotesize $j$}; 
\path (c) -- node[sloped,xshift=2pt,yshift=-2pt]{$\cdots$} (d);

\draw[thick] ($ (D) + (-\rblob,0) $) -- ++(-\rext,0) node(b)[left]{\footnotesize $i$}; 
\draw[thick] ($ (D) + (0,\rblob) $) -- ++(0,\rext) node(a)[above]{\footnotesize $j-1$}; 
\path (a) -- node[sloped,xshift=2pt,yshift=-2pt]{$\cdots$}  (b);

\coordinate (M1) at ($ (A)!0.5!(B) $);
\draw[dashed,red,thick] ($ (M1) + (0,\l) $) -- ($ (M1) + (0,-\l) $);
\coordinate (M2) at ($ (B)!0.5!(C) $);
\draw[dashed,red,thick] ($ (M2) + (-\l,0) $) -- ($ (M2) + (\l,0) $);
\coordinate (M3) at ($ (C)!0.5!(D) $);
\draw[dashed,red,thick] ($ (M3) + (0,-\l) $) -- ($ (M3) + (0,\l) $);
\coordinate (M4) at ($ (D)!0.5!(A) $);
\draw[dashed,red,thick] ($ (M4) + (\l,0) $) -- ($ (M4) + (-\l,0) $);
\end{tikzpicture}
\caption{Maximal cut of the one-loop amplitude used to extract the box coefficient $c_{ijkl}$ of \cref{eq:boxcoef}. The dashed red lines indicate that the propagators are on-shell, which localizes the loop momentum onto the two solutions $\ell_*^{\pm}$. On the cut, the integrand factorizes into the four tree-level superamplitudes of \cref{eq:maxcutsusy}. All indices are understood $\bmod \;n$.}
\label{fig:maxcutbox}
\end{figure}
Consider the box labeled by the cyclically ordered indices
$(i,j,k,l)$ and let $\mathcal C_{ijkl}$ denote its quadruple cut.
Writing the loop dual point as $y$ and setting $\ell_i=y-x_i$, its
cut equations are
\begin{equation}\label{eq:quadcutconditions}
\ell^2=0,\qquad
(\ell+x_{ij})^2=0,\qquad
(\ell+x_{ik})^2=0,\qquad
(\ell+x_{il})^2=0 .
\end{equation}
For generic external kinematics, these equations have two solutions,
denoted by $\ell_*^\sigma$ with $\sigma=\pm$.
On either solution, generalized unitarity factorizes the cut
integrand into the four tree-level superamplitudes located at the
corners of the box
\cite{Britto:2004nc,Drummond:2008bq}, as shown in \cref{fig:maxcutbox}. Define
\begin{equation}\label{eq:maxcutsusy}
\mathcal U^\sigma_{ijkl}
:=
\left.
\int\prod_{a=i,j,k,l}\mathrm d^4\eta_{\ell_a}\,
A_{ij}^{(0)}
A_{jk}^{(0)}
A_{kl}^{(0)}
A_{li}^{(0)}
\right|_{\ell=\ell_*^\sigma},
\end{equation}
where $\ell_i,\ell_j,\ell_k,\ell_l,$ are the four cut momenta.  Each tree
amplitude depends on the two adjacent cut legs and on the external
legs belonging to the corresponding corner.  We use
$\mathrm d^4\eta_{\ell_a}
:=\prod_{\alpha=1}^4\mathrm d\eta_{\ell_a}^{\alpha}$, where $\alpha$ is the R-symmetry index. 
The Grassmann integrations implement the sum over the complete
$\mathcal N=4$ supermultiplet propagating across each cut line.

Including the bosonic cut Jacobian and using the
normalizations of \cref{eq:boxfcns}, one obtains
\begin{equation} \label{eq:maxcutsusyL}
\mathrm{LeS}_{ijkl}^{\sigma}
\bigl[A_n^{(1)}\bigr]
=
\frac{2}{\sqrt{\rho_{ijkl}}}\,
\mathcal U^\sigma_{ijkl}.
\end{equation}
The branch of $\sqrt{\rho_{ijkl}}$ and the orientations of the two
residue contours are chosen consistently.  With the same
conventions, the factor $\sqrt{\rho_{ijkl}}$ in
\cref{eq:boxfcns} normalizes the four-dimensional integrand of the
box function to have unit residue on either branch,
\begin{equation}
\mathrm{LeS}_{ijkl}^{\sigma}[F_{ijkl}]=1.
\end{equation}
Here and below, the notation on the left denotes the residue of the
four-dimensional integrand associated with $F_{ijkl}$.
  The
scalar-box coefficient is determined by the contour enclosing both
quadruple-cut solutions, or equivalently by the branch-even
combination of their residues \cite{Cachazo2008}:
\begin{equation}\label{eq:boxcoef}
c_{ijkl}
=
\frac{
  \displaystyle\sum_{\sigma=\pm}
  \mathrm{LeS}_{ijkl}^{\sigma}[A_n^{(1)}]
}{
  \displaystyle\sum_{\sigma=\pm}
  \mathrm{LeS}_{ijkl}^{\sigma}[F_{ijkl}]
}
=
\frac12
\sum_{\sigma=\pm}
\mathrm{LeS}_{ijkl}^{\sigma}[A_n^{(1)}]
=
\frac{1}{\sqrt{\rho_{ijkl}}}
\sum_{\sigma=\pm}\mathcal U^\sigma_{ijkl}.
\end{equation}
Thus, the individual leading singularities retain branchwise
information about the four-dimensional integrand, while their
average determines the coefficient of the parity-even scalar box
function.

\paragraph{Landau Singularities.}
We will also need the Landau singularity associated with a cut.  Let
$\mathcal K$ denote the complexified space of external kinematics
and consider the incidence variety
\begin{equation}
\mathfrak C_E
=
\left\{
  (\kappa,\ell)\in\mathcal K\times\mathcal L:
  q_e(\ell;\kappa)^2=0
  \ \text{for all }e\in E
\right\},
\end{equation}
together with the projection
$\pi_E:\mathfrak C_E\rightarrow\mathcal K$.
The Landau
locus of the cut is the codimension-one locus in external kinematics where the fiber of this projection becomes non-generic,
for instance because it develops a singularity or
points collide. We denote
by $\mathrm{LaS}[\mathcal C_E]$ a reduced polynomial cutting out this locus,
understood up to multiplication by a nonzero constant and, for a
reducible locus, as the product of its irreducible hypersurface
components \cite{Landau1959,Pham1967,MizeraTelen2022,
BourjailyVerguvonHippel2023}.

If $\mathcal C_E$ is maximal, the map $\pi_E$ is generically finite
of degree $r_E$, and
$\mathrm{LaS}[\mathcal C_E]=0$ detects kinematics for which its
fiber ceases to consist of $r_E$ distinct reduced points.  At
a generic point of this discriminant, two solutions of the maximal
cut coincide.  For the one-loop box cut $\mathcal C_{ijkl}$, this
finite branch locus is $\rho_{ijkl}=0$ \cite{Britto:2004nc,Abreu:2017ptx}.  If $\mathcal C_E$ is
next-to-maximal and its generic fiber is a smooth curve, then
$\mathrm{LaS}[\mathcal C_E]=0$ detects kinematics for which this
curve becomes singular
\cite{HolleringMazzucchelliParisiSturmfels:2026Landau}.
\subsection{Symbols of box functions}
\label{sec:symbols-boxes}
One-loop amplitudes can be expressed in terms of multiple polylogarithms (MPLs), a class of iterated integrals graded by the transcendental weight $w$. They contain the logarithm $\log z$ at weight 1 and the classical polylogarithms $\operatorname{Li}_n z$ at weight $n$. The analytic structure of these functions is encoded in the \emph{symbol} \cite{Chen:1977oja,Goncharov:2005sla,Brown:2009qja,Goncharov:2010jf,Duhr:2011zq,Duhr:2012fh}, defined recursively in the weight as follows. A general transcendental function $F_w$ of weight $w$ has a total differential that can be written as a finite sum
\begin{equation}
 \mathrm{d} F_w=\sum_{i}F^i_{w-1}\,\mathrm{d}\log R_i,
\end{equation}
where $R_i$ are algebraic functions of the kinematic variables and $F^i_{w-1}$ are transcendental functions of weight $w-1$. The symbol is then defined as
\begin{equation}
\mathcal{S}[F_w]=\sum_{i}\mathcal{S}[F^i_{w-1}]\otimes R_i,
\end{equation}
with the recursion starting at weight 1. The symbol annihilates transcendental constants. Iterating, the symbol of a weight $w$ function is a linear combination of tensor products made up of $w$ entries, which are called \emph{letters}.

If a function $f$  can be expressed in terms of MPLs, then its symbol can be written as
\begin{equation}
 \mathcal{S}[f]=\sum_{i} c_i w_i, \quad w_i=b_i^{(1)} \otimes \ldots \otimes b^{(w)}_i,   
\end{equation}
where $b_i^{(j)}$ are letters of the symbol $\mathcal{S}$ of $f$, and the $c_i$ are algebraic functions. The set of possible letters is called the \emph{alphabet}. We also call $c_i$ the \emph{coefficient} of the word $w_i$ and denote it by ${\rm Coeff}[w_i]$. 
The functions we encounter at one-loop have at most weight two. The relevant symbols are, at weight one,
\begin{equation}
    \mathcal{S}\left[\log (a)\right]=\otimes a
\end{equation}
and at weight two,
\begin{align}
\mathcal{S}\left[\log(a)\cdot\log(b)\right]&=a \otimes b + b\otimes a,\\
\mathcal{S}\left[\dlog_2(a)\right]&=-(1-a)\otimes a.
\end{align}

Before considering the symbol of the amplitude, let us first look at the symbol of the box functions of \cref{eq:boxfcns}, whose expression in terms of polylogarithms was given in \cite{Bern:1994zx}. The boxes can be classified according to the number of nonzero external masses, with the appropriate indexing, as shown in \cref{fig:boxtypes}.  
The symbol for each type of box, including divergent terms and denoting weight two terms by the subscript 2, is given below. 
\begin{figure}[ht]
\centering
\begin{subfigure}[h]{\textwidth}
\centering
\begin{subfigure}[t]{0.32\textwidth}
\centering
\begin{tikzpicture}
	\coordinate (1) at (-1,-1); 
    \coordinate (2) at (-1, 1); 
    \coordinate (3) at (1,1); 
    \coordinate (4) at (1,-1); 
    \path[draw,thick] (1) -- (4) -- (3) -- (2) -- cycle;
	\draw[thick](1) -- ++(-135:0.75) node[below]{\footnotesize $i+2$};
    \draw[thick](2)--(-1.75,1) node(a){} node[below]{\footnotesize $i+3$} ;
    \draw[thick](2)--(-1,1.75) node(b){} node[right]{\footnotesize $i-1$} ;
    \path (a) -- node[sloped]{$\cdots$} node[near start,above,xshift=-2pt, yshift=2pt]{} (b);
    \draw[thick](3) -- ++(45:0.75) node[above]{\footnotesize $i$};
    \draw[thick](4) -- ++(-45:0.75) node[below]{\footnotesize $i+1$};
    \path[use as bounding box] (-2.2,-2.2) rectangle (2.2,2.2);
\end{tikzpicture}
\caption{One-mass box}
\end{subfigure}
\begin{subfigure}[t]{0.32\textwidth}
\centering
\begin{tikzpicture}
	\coordinate (1) at (-1,-1); 
    \coordinate (2) at (-1, 1); 
    \coordinate (3) at (1,1); 
    \coordinate (4) at (1,-1); 
    \path[draw,thick] (1) -- (4) -- (3) -- (2) -- cycle;
    
    \draw[thick](1)--(-1.75,-1) node(g){} node[above]{\footnotesize $k-1$};
    \draw[thick](1)--(-1,-1.75) node(h){} node[right]{\footnotesize $i+2$};
    \path (g) -- node[sloped]{$\cdots$} node[left,xshift=-2pt, yshift=-4pt]{} (h);
    
    \draw[thick](3) -- ++(45:0.75) node[above]{\footnotesize $i$};
    
    \draw[thick](2)--(-1.75,1) node(a){} node[below]{\footnotesize $k$};
    \draw[thick](2)--(-1,1.75) node(b){} node[right]{\footnotesize $i-1$};
    \path (a) -- node[sloped]{$\cdots$} node[near start,above,xshift=-2pt, yshift=2pt]{} (b);
   
    \draw[thick](4) -- ++(-45:0.75) node[below]{\footnotesize $i+1$};
    \path[use as bounding box] (-2.2,-2.2) rectangle (2.2,2.2);
\end{tikzpicture}
\caption{Two-mass hard box}
\end{subfigure}
\end{subfigure}
\begin{subfigure}[h]{\textwidth}
\centering
\begin{subfigure}[t]{0.32\textwidth}
\centering
\begin{tikzpicture}
	\coordinate (1) at (-1,-1); 
    \coordinate (2) at (-1, 1); 
    \coordinate (3) at (1,1); 
    \coordinate (4) at (1,-1); 
    \path[draw,thick] (1) -- (4) -- (3) -- (2) -- cycle;
	\draw[thick](1) -- ++(-135:0.75) node[below]{\footnotesize $j$} ;
    \draw[thick](2)--(-1.75,1) node(a){} node[below]{\footnotesize $j+1$};
    \draw[thick](2)--(-1,1.75) node(b){} node[right]{\footnotesize $i-1$} ;
    \path (a) -- node[sloped]{$\cdots$} node[near start,above,xshift=-2pt, yshift=2pt]{} (b);
    \draw[thick](3) -- ++(45:0.75) node[above]{\footnotesize $i$};
    \draw[thick](4)--(1.75,-1) node(e){} node[above]{\footnotesize $i+1$};
    \draw[thick](4)--(1,-1.75) node(f){} node[left]{\footnotesize $j-1$};
    \path (e) -- node[sloped]{$\cdots$} node[right,xshift=-2pt,yshift=-4pt]{}(f); 
    \path[use as bounding box] (-2.3,-2.3) rectangle (2.3,2.3);
\end{tikzpicture}
\caption{Two-mass easy box}
\end{subfigure}
\begin{subfigure}[t]{0.32\textwidth}
\centering
\begin{tikzpicture}
	\coordinate (1) at (-1,-1); 
    \coordinate (2) at (-1, 1); 
    \coordinate (3) at (1,1); 
    \coordinate (4) at (1,-1); 
    \path[draw,thick] (1) -- (4) -- (3) -- (2) -- cycle;
    \draw[thick](3) -- ++(45:0.75) node[above]{\footnotesize $i$};
    
    \draw[thick](1)--(-1.75,-1) node(g){} node[above]{\footnotesize $k-1$};
    \draw[thick](1)--(-1,-1.75) node(h){} node[right]{\footnotesize $j$};
    \path (g) -- node[sloped]{$\cdots$} node[left,xshift=-2pt, yshift=-4pt]{} (h);
    
    \draw[thick](2)--(-1.75,1) node(a){} node[below]{\footnotesize $k$} ;
    \draw[thick](2)--(-1,1.75) node(b){} node[right]{\footnotesize $i-1$};
    \path (a) -- node[sloped]{$\cdots$} node[near start,above,xshift=-2pt, yshift=2pt]{} (b);
   
    \draw[thick](4)--(1.75,-1) node(e){} node[above]{\footnotesize $i+1$};
    \draw[thick](4)--(1,-1.75) node(f){} node[left]{\footnotesize $j-1$};
    \path (e) -- node[sloped]{$\cdots$} node[right,xshift=-2pt,yshift=-4pt]{}(f);
    \path[use as bounding box] (-2.3,-2.3) rectangle (2.3,2.3);
\end{tikzpicture}
\caption{Three-mass box}
\end{subfigure}
\begin{subfigure}[t]{0.32\textwidth}
\centering
\begin{tikzpicture}
	\coordinate (1) at (-1,-1); 
    \coordinate (2) at (-1, 1); 
    \coordinate (3) at (1,1); 
    \coordinate (4) at (1,-1); 
    \path[draw,thick] (1) -- (4) -- (3) -- (2) -- cycle;
	\draw[thick](1)--(-1.75,-1) node(g){}  node[above]{\footnotesize $l-1$};
    \draw[thick](1)--(-1,-1.75) node(h){}  node[right]{\footnotesize $k$};
    \path (g) -- node[sloped]{$\cdots$} node[left,xshift=-2pt, yshift=-4pt]{} (h);
    \draw[thick](2)--(-1.75,1) node(a){}  node[below]{\footnotesize $l$};
    \draw[thick](2)--(-1,1.75) node(b){}  node[right]{\footnotesize $i-1$};
    \path (a) -- node[sloped]{$\cdots$} node[near start,above,xshift=-2pt, yshift=2pt]{} (b);
    \draw[thick](3)--(1,1.75) node(c){} node[left]{\footnotesize $i$};
    \draw[thick](3)--(1.75,1) node(d){} node[below]{\footnotesize $j-1$};
    \path (c) -- node[sloped]{$\cdots$} node[above,near end,xshift=3pt,yshift=2pt]{}(d);
    \draw[thick](4)--(1.75,-1) node(e){} node[above]{\footnotesize $j$} ;
    \draw[thick](4)--(1,-1.75) node(f){}  node[left]{\footnotesize $k-1$};
    \path (e) -- node[sloped]{$\cdots$} node[right,xshift=-2pt,yshift=-4pt]{}(f);
    \path[use as bounding box] (-2.3,-2.3) rectangle (2.3,2.3);
\end{tikzpicture}
\caption{Four-mass box}
\end{subfigure}
\end{subfigure}
\caption{Different types of scalar boxes appearing at one-loop, classified by the placement of massive (off-shell) legs. 
}
\label{fig:boxtypes}
\end{figure}
\begin{align}
\begin{split}
   \cS[F^{\rm 1m}_{i,i+1,i+2,i+3}] = -\frac{1}{\epsilon
   ^2}+ \frac{\otimes x^2_{i,{i+2}}-\otimes
   x^2_{i,{i+3}}+\otimes
   x^2_{{i+1},{i+3}}}{\epsilon}
   +\cS_2[F^{1m}_{i,i+1,i+2,i+3}] + \ord(\epsilon) 
\end{split}\\[1ex]
\begin{split}
    \cS[F^{\rm 2me}_{i,i+1,j,j+1}] = \frac{\otimes x^2_{ij}-\otimes x^2_{i,j+1}-\otimes x^2_{{i+1},j}+\otimes   x^2_{{i+1},{j+1}}}{\epsilon } + \cS_2[F^{\rm 2me}_{i,i+1,j,j+1}]  + \ord(\epsilon)
\end{split}\\[1ex]
\begin{split}
   \cS[F^{\rm 2mh}_{i,i+1,i+2,k}] 
= -\frac{1}{2 \epsilon ^2}+\frac{\otimes
   x^2_{i,{i+2}}-\otimes x^2_{ik}+2
   \left(\otimes x^2_{{i+1},k}\right)-\otimes
   x^2_{{i+2},k}}{2 \epsilon }+ \cS_2[F^{\rm 2mh}_{i,i+1,i+2,k}]  + \ord(\epsilon)  
\end{split}\\[1ex]
\begin{split}
    \cS[F^{\rm 3m}_{i,i+1,j,k}] = 
\frac{\otimes x^2_{ij}-\otimes x^2_{ik}-\otimes
   x^2_{{i+1},j}+\otimes x^2_{{i+1},k}}{2
   \epsilon } + \cS_2[F^{\rm 3m}_{i,i+1,j,k}]  + \ord(\epsilon) 
\end{split}\\[1ex]
\begin{split}
    \cS[F^{\rm 4m}_{ijkl}] = 
 \cS_2[F^{\rm 4m}_{ijkl}]  + \ord(\epsilon)
\end{split}
\end{align}
Focusing on weight two terms,
\begin{align}
\begin{split}\label{eq:1mbox-symbol}
    \cS_2[F^{\rm 1m}_{i,i+1,i+2,i+3}] &=
\frac{x^2_{i,{i+3}}}{x^2_{i,{i+2}}x^2_{{i+1},{i+3}}}\otimes x^2_{i,{i+2}}x^2_{{i+1},{i+3}}+x^2_{i,{i+3}}\otimes  x^2_{i,{i+3}}\\
&
+\frac{x^2_{{i+1},{i+3}}}{x^2_{i,{i+3}}} \otimes \left(x^2_{i,{i+3}}-x^2_{{i+1},{i+3}}\right) 
+ \frac{x^2_{i,{i+2}}}{x^2_{i,{i+3}}} \otimes \left(x^2_{i,{i+3}}-x^2_{i,{i+2}}\right)    
\end{split}\\[1ex]
\begin{split}\label{eq:2mhbox-symbol}
\cS_2[F^{\rm 2mh}_{i,i+1,i+2,k}] 
&= \frac{x^2_{{i+2},k}x^2_{ik}}{x^2_{i,{i+2}} x^2_{{i+1},k}}\otimes x^2_{i,{i+2}} x^2_{{i+1},k}\\
&
+\frac{x^2_{{i+1},k}}{x^2_{ik}}\otimes   \left(x^2_{ik}-x^2_{{i+1},k}\right)
   +  \frac{x^2_{{i+1},k}}{x^2_{{i+2},k}}\otimes \left(x^2_{{i+2},k} -x^2_{{i+1},k}\right ) \\
   &
   +\frac{1}{2} \frac{x^2_{i,{i+2}}x^2_{ik}}{x^2_{{i+2},k}}\otimes  x^2_{ik}
   +\frac{1}{2} \frac{x^2_{i,{i+2}}x^2_{{i+2},k}}{x^2_{ik}}\otimes  x^2_{{i+2},k} \\
   &
   -\frac{1}{2} \frac{x^2_{{i+2},k}x^2_{ik}}{x^2_{i,{i+2}}}\otimes  x^2_{i,{i+2}}
   -x^2_{{i+1},k}\otimes x^2_{{i+1},k}
\end{split}
\end{align}
\begin{align}
\begin{split}\label{eq:2mebox-symbol}
\cS_2[F^{\rm 2me}_{i,i+1,j,j+1}]  &= 
\frac{x^2_{{i+1},j}x^2_{i,j+1}}{x^2_{ij} x^2_{{i+1},{j+1}}} \otimes
   \left(x^2_{i,j+1} x^2_{{i+1},j}-x^2_{ij} x^2_{{i+1},{j+1}}\right) \\
   &
   +\frac{x^2_{ij}}{x^2_{{i+1},j}}\otimes \left(x^2_{{i+1},j}-x^2_{ij}\right)   
   +\frac{x^2_{{i+1},{j+1}}}{x^2_{i,j+1}}\otimes \left(x^2_{i,j+1}-x^2_{{i+1},{j+1}}\right) \\
   &
   +\frac{x^2_{ij}}{x^2_{i,j+1}}\otimes \left(x^2_{i,j+1}-x^2_{ij}\right)
   +\frac{x^2_{{i+1},{j+1}}}{x^2_{{i+1},j}}\otimes  \left(x^2_{{i+1},j}-x^2_{{i+1},{j+1}}\right) \\
   &
   -x^2_{ij}\otimes x^2_{ij}
   +x^2_{i,j+1}\otimes x^2_{i,j+1}\\&
   +x^2_{{i+1},j}\otimes x^2_{{i+1},j}
   -x^2_{{i+1},{j+1}}\otimes x^2_{{i+1},{j+1}} 
\end{split}\\[1ex]
\begin{split}\label{eq:3mbox-symbol}
\cS_2[F^{\rm 3m}_{i,i+1,j,k}] 
&= 
 \frac{ x^2_{ik} x^2_{{i+1},j}}{x^2_{ij} x^2_{{i+1},k}}\otimes \left(x^2_{ik} x^2_{{i+1},j}-x^2_{ij} x^2_{{i+1},k}\right)\\
    &
   + \frac{x^2_{ij}}{x^2_{{i+1},j}}\otimes \left(x^2_{{i+1},j}-x^2_{ij}\right)   
   + \frac{x^2_{{i+1},k}}{ x^2_{ik}}\otimes \left(x^2_{ik}-x^2_{{i+1},k}\right) \\
   &
   +\frac{1}{2} \frac{x^2_{ij}x^2_{{i+1},k}}{ x^2_{{i+1},j}x^2_{ik}} \otimes x^2_{jk}
   +\frac{1}{2} \frac{x^2_{ij}x^2_{ik}}{x^2_{jk}} \otimes \frac{ x^2_{ik}}{x^2_{ij}}
   +\frac{1}{2} \frac{x^2_{{i+1},k}x^2_{{i+1},j}}{x^2_{jk}} \otimes \frac{x^2_{{i+1},j}}{x^2_{{i+1},k}}
\end{split}\\[1ex]
\begin{split}\label{eq:4mbox-symbol}
\cS_2[F^{\rm 4m}_{ijkl}] 
&= \frac{x^2_{ij} x^2_{kl}}{x^2_{ik} x^2_{jl}} \otimes
\frac{x^2_{ik} x^2_{jl}+x^2_{jk} x^2_{il}-x^2_{ij} x^2_{kl} - \sqrt{\rho_{ijkl}}}{x^2_{ik} x^2_{jl}+x^2_{jk} x^2_{il}-x^2_{ij} x^2_{kl} + \sqrt{\rho_{ijkl}}} \\
& + \frac{x^2_{jk} x^2_{il}}{x^2_{ik} x^2_{jl}} \otimes 
\frac{x^2_{ik} x^2_{jl}- x^2_{jk} x^2_{il} + x^2_{ij} x^2_{kl} -\sqrt{\rho_{ijkl}}}{x^2_{ik} x^2_{jl} - x^2_{jk} x^2_{il} + x^2_{ij} x^2_{kl} + \sqrt{\rho_{ijkl}}}
\end{split}
\end{align}
\subsection{Infrared behaviour and dual conformal symmetry}
From what we have discussed so far, it is straightforward to obtain the symbol of the one-loop amplitude using the expression for the coefficients given in \cref{eq:boxcoef} and the symbols of the box functions. However, the result can be simplified by taking into account that the box coefficients are not independent but satisfy linear relations. For example, the IR behaviour of $\mathcal{N} = 4 \text{ SYM}$ imposes specific constraints. One important simplification is the cancellation of certain symbol letters. In the symbols of the boxes, we find four types of letters, of the form $x_{ij}^2$, $x_{ij}^2-x_{kl}^2$, $x_{ij}^2 x_{kl}^2 - x_{il}^2 x_{jk}^2$, as well as the letters containing square roots associated to four-mass boxes. However, all terms with letters of the form $x_{ij}^2-x_{kl}^2$ cancel in the full amplitude. This property is a consequence of a certain class of relations that we refer to as \textit{two-mass triangle relations}.

The two-mass triangle relations can be understood as a consequence of the form of the one-loop dual conformal anomaly, as shown in \cite{Brandhuber:2009rel,Brandhuber:2009proof}. Dual conformal symmetry is exact at tree level, as conjectured in \cite{Drummond:2008vq} and proven in \cite{Brandhuber:2008pf}. At loop level the symmetry is broken in a well-defined way due to the presence of infrared divergences. These are treated in the framework of dimensional regularization, where they appear as poles in $\epsilon$. On general grounds, infrared divergences of $\mathcal{N}=4\text{ SYM}$ amplitudes are captured by the corresponding tree amplitude and have a known helicity independent universal form,  \cite{Giele:1991vf,Catani:1998bh,Kunszt:1994np} 
\begin{align}\label{eq:generalir}
A_n^{(1)\mathrm{IR}}&=-\frac{1}{\epsilon^2}\sum_{i=1}^n\left(-x^2_{i,i+2}\right)^{-\epsilon}A^{(0)}_n\\&=\sum_{i=1}^n\left[-\frac{1}{\epsilon^2}+\frac{1}{\epsilon} \log (-x^2_{i,i+2})-\frac{1}{2} \log^2 (-x^2_{i,i+2})+\mathcal{O}(\epsilon)\right]A^{(0)}_n,
\label{eq:N4div}
\end{align}
where in the second equality we have expanded the series to highlight that the IR divergences will contribute to the finite part of the amplitude through the term $A_n^{(1)\mathrm{IR}}\big|_{\epsilon^0}=\sum_i-\frac{1}{2} \log^2 (-x^2_{i,i+2})A^{(0)}_n$. This gives us a complete understanding of the terms that break dual conformal invariance and their coefficients, which is relevant for the analysis of \cref{sec:clust}. One also expects the letters $x_{ij}^2-x_{kl}^2$ to cancel and would not include them in a possible alphabet used for bootstrapping the amplitude, as they are not dual conformal invariant and cannot be produced from the divergent part.

There are various ways to isolate a part of the amplitude that is IR finite and dual conformal invariant, for example using ratio functions \cite{Drummond:2008vq}, since the MHV amplitude captures the anomaly completely. The ratio function can then be expressed in terms of dual conformal cross-ratios. Here, we do not use any normalization but work directly with dimensional regularization, meaning that our results include the IR divergent part. The consequence is that the full amplitude cannot be written in cross-ratios but the part which spoils this property has the specific structure of \cref{eq:generalir}. In the literature, one also finds BDS or BDS-like normalized amplitudes. These differ only in terms involving two-particle invariants compared to the unnormalised amplitudes, since they are the only invariants involved in the IR divergent terms of the scalar integrals. This is important when comparing with results in \cite{Gurdogan:2020tip}, since it means that the coefficients of two-particle invariants can be different.

\section{Leading-singularity decomposition of the one-loop symbol}
\label{sec:res}
Having extracted the divergences in \cref{eq:N4div}, we focus in this section on the finite part of the one-loop amplitude, which is the part of transcendental weight two. We show that the symbol of the one-loop amplitude at weight two\footnote{From now on, we drop the subscript 2 from $\mathcal{S}_2$.} can be written as 
\begin{equation}\label{eq:symbolresult}
\mathcal{S}[A^{(1)}_{n,k}]=\mathcal{S}^{\mathrm{LS}}[A^{(1)}_{n,k}]+\mathcal{S}^{\mathrm{4m}}+\sum_{i,j,l} d_{ijl}\,x^2_{ij} \otimes x^2_{il},
\end{equation}
where the first term is the \emph{leading-singularity part} of the symbol introduced in \cref{subsec:LS-part}. It is the part directly controlled by maximal-cut leading
singularities. The second term is the contribution from all possible 4-mass boxes,
\begin{equation}
    \mathcal{S}^{\mathrm{4m}} = \sum_{i,j,k,l} c^{\mathrm{4m}}_{ijkl}\mathcal{S}[F^{\rm 4m}_{ijkl}],
\end{equation}
where the sum range is: $1\leq i<j-1 < k-2 < l-3 < \min(n-2,n+i-4)$.
Its contribution is qualitatively different, since a generic
four-mass box has maximal-cut solutions which are algebraic and the
corresponding discriminant enters the symbol through square-root letters. It gives the simplest example in our analysis in which a
Landau discriminant associated with a non-rational cut fiber naturally
leads to algebraic symbol letters and requires an extension of ordinary
cluster adjacency.

Finally, regarding the third term of \cref{eq:symbolresult}, the terms which contribute come from cases of words $x^2_{ij} \otimes x^2_{il}$ with exactly three distinct indices, and the case of a repeated letter $x^2_{i,i+2} \otimes x^2_{i,i+2}$, which is a two-particle invariant. In the first case, we get 
\begin{equation}\label{eq:dijl}
    d_{ijl}=\begin{cases}
         \frac{1}{2}(C^+_{ij,l}- C^+_{jl,i}-C^-_{l i,j}),  \, &\text{if $l=i-2$}.\\
          \frac{1}{2}(C^+_{ij,l}- C^+_{jl,i}+C^-_{l i,j}), \, &\text{otherwise,}
          \end{cases}
\end{equation}
where 
\begin{equation} \label{def:Cpm}
    C^\pm_{ij,l}:= \begin{cases}
c^{\mathrm{3m}}_{i,i+1,j,l}\pm c^{\mathrm{3m}}_{i,j-1,j,l}, \, &\text{if $|i-j|\geq 3$,}
        \\
         c^{\mathrm{2mh}}_{i,i+1,j,l},  \, &\text{if $j=i+2$}.
        \end{cases}
\end{equation}
In the case of a repeated letter, we have
\begin{equation}\label{eq:dijj}
    d_{ijj}=\begin{cases}
        \frac{1}{2} \left(c^{\mathrm{1m}}_{i,i+1,i+2,i+3}+ c^{\mathrm{1m}}_{i-1,i,i+1,i+2}+\sum_{k=i+4}^{i-2}c^{\mathrm{2mh}}_{i,i+1,i+2,k} \right),\, &\text{if $|i-j|=2$} \\
        0,\, &\text{otherwise}.
    \end{cases}
\end{equation}
The above terms demonstrate
that the integrated amplitude is not exhausted by the
leading-singularity picture. We determine the coefficient of every
rational symbol word in  \cref{sec:wordscoef} and distinguish three phenomena. Some words cancel
completely in the full amplitude; some have coefficients already
captured by their LS contribution; and others retain a genuine residual,
non-LS coefficient. These residual terms provide a concrete measure of
the gap between the structure inferred directly from maximal cuts and
the symbol of the integrated, dimensionally regulated amplitude.
Importantly, they do not all have the same origin, and only part of the
residual sector can be simplified using relations among box
coefficients.

\subsection{The LS part of a symbol}
\label{subsec:LS-part}

We now define the leading-singularity (LS) part of a symbol.  Since rational
prefactors may satisfy linear relations, we first fix a representation
of the symbol
\begin{equation}
  \mathcal S[A]
  =
  \sum_\alpha
  f_\alpha \,
  a_\alpha^{(1)}\otimes \cdots \otimes a_\alpha^{(2L)}
\end{equation} 
where $\alpha$ indexes the terms of the symbol.
Here $A$ is an amplitude of uniform weight $2L$.
The following definition is therefore relative to this chosen
representation. 
\begin{definition} \label{def:LS_part}
A term
$
  f\, a^{(1)}\otimes\cdots\otimes a^{(2L)}$
belongs to the \emph{leading-singularity (LS) part}
$\mathcal S^{\mathrm{LS}}[A]$ of $\mathcal S[A]$ if there exists a
nested sequence of cuts
$
  \mathcal C^{[1]}\supseteq
  \mathcal C^{[2]}\supseteq
  \cdots\supseteq
  \mathcal C^{[2L]}$
such that:
\begin{enumerate}
  \item The final cut $\mathcal C^{[2L]}$ is maximal.  Thus, for generic
  external kinematics,
  \begin{equation}
    \mathcal C^{[2L]}
    =
    \left\{\ell_*^{(1)},\ldots,\ell_*^{(\gamma)}\right\}.
  \end{equation}

  \item There is a point $\ell_*^{{\sigma}}\in \mathcal C^{[2L]}$ and a
  nonzero constant $\lambda\in\mathbb C^\times$ such that
  \begin{equation}
    f
    =
    \lambda\,
    \mathrm{LeS}^{{\sigma}}_{\mathcal C^{[2L]}}
    \bigl[A\bigr].
  \end{equation}

  \item  The symbol letter $a^{(j)}$ is an
  irreducible factor of the Landau singularity $\mathrm{LaS}[\mathcal C^{[j]}]$, for each $j=1,\ldots,2L$.
\end{enumerate}
\end{definition}

The superscript $[j]$ on $\mathcal C^{[j]}$ labels the position in the
symbol word.

Given a word
$w=a^{(1)}\otimes\cdots\otimes a^{(2L)}$,
we define its \emph{leading-singularity (LS) coefficient} $ {\rm Coeff}^{\rm LS} [w]$ to be 
\begin{equation}
    {\rm Coeff}^{\rm LS} [w] = \sum_{\alpha} f_{\alpha},
\end{equation}
where the sum is over all terms $f_\alpha$ such that $f_\alpha w$ belongs to the LS part as in \cref{def:LS_part}.

\begin{proposition}
The  \emph{LS part} of the one-loop amplitude in planar $\mathcal{N}=4$ SYM using the box expansion of \cref{eq:boxdecomp} is:
\begin{equation}
\label{eq:ll}
\begin{aligned}
   \mathcal{S}^{\mathrm{LS}}[A^{(1)}_n] &= \sum_{i=1}^n c^{\mathrm{1m}}_{i,i+1,i+2,i+3} \frac{x^2_{i,i+3}}{x^2_{i,i+2} x^2_{i+1,i+3}} \otimes x^2_{i,i+2} x^2_{i+1,i+3} \\
    & + \sum_{i=1}^{n} \sum_{j=i+3}^{t(i,n)} c^{\mathrm{2me}}_{i,i+1,j,j+1} \frac{x^2_{i,j+1} x^2_{i+1,j}}{x^2_{ij} x^2_{i+1,j+1}} \otimes \left(x^2_{ij} x^2_{i+1,j+1}-x^2_{i,j+1} x^2_{i+1,j} \right) \\
    & + \sum_{i=1}^n\sum_{k=i+4}^{i-2} c^{\mathrm{2mh}}_{i,i+1,i+2,k} \frac{x^2_{ik} x^2_{i+2,k}}{x^2_{i+1,k} x^2_{i,i+2}} \otimes x^2_{i,i+2} x^2_{i+1,k}  \\
    & + \sum_{i=1}^n \sum_{j=i+3}^{i-4} \sum_{k=j+2}^{i-2}c^{\mathrm{3m}}_{i,i+1,j,k} \frac{x^2_{i+1,j} x^2_{ik}}{x^2_{i+1,k} x^2_{ij}} \otimes \left(x^2_{ij}x^2_{i+1,k}-x^2_{ik} x^2_{i+1,j}  \right), 
\end{aligned}
\end{equation}
where all indices are understood $\bmod \;n$ and for the two-mass easy box, the upper limit of the inner sum is $t(i,n)=\begin{cases}i+\frac{n}{2}, &\text{if }n\text{ is even and } 1\leq i \leq \frac{n}{2} \\i+\frac{n}{2}-1,&\text{if }n\text{ is even and } \frac{n}{2}< i \leq n\\i+\frac{n-1}{2},&\text{if }n\text{ is odd} \end{cases}$.
\end{proposition}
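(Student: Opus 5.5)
The representation to which \cref{def:LS_part} is applied is the one obtained by substituting the weight-two box symbols \cref{eq:1mbox-symbol,eq:2mhbox-symbol,eq:2mebox-symbol,eq:3mbox-symbol,eq:4mbox-symbol} into the box expansion \cref{eq:boxdecomp}, each term carrying its coefficient $c_{ijkl}$. At one loop $2L=2$, so the plan is to classify every term of every box symbol. For each term we ask whether there is a nested pair of cuts $\mathcal C^{[1]}\supseteq\mathcal C^{[2]}$ with $\mathcal C^{[2]}$ maximal, such that the first and second letters are irreducible factors of $\mathrm{LaS}[\mathcal C^{[1]}]$ and $\mathrm{LaS}[\mathcal C^{[2]}]$, and such that the coefficient is, up to a constant, a leading singularity of $\mathcal C^{[2]}$. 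The four-mass boxes are excluded from the statement because their letters are algebraic. Everything below concerns the rational boxes, for which $\sqrt{\rho_{ijkl}}$ is a rational function of the $x^2$'s.

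\textbf{Step 1: the maximal cut and the final entry.} For a rational box, $\rho_{ijkl}$ is a perfect square. Its square root factorizes as follows:
\begin{itemize}
\item one-mass box: $x^2_{i,i+2}x^2_{i+1,i+3}$;
\item two-mass-easy box: $x^2_{ij}x^2_{i+1,j+1}-x^2_{i,j+1}x^2_{i+1,j}$;
\item two-mass-hard box: $x^2_{i,i+2}x^2_{i+1,k}$;
\item three-mass box: $x^2_{ij}x^2_{i+1,k}-x^2_{ik}x^2_{i+1,j}$.
\end{itemize}
Since $\mathrm{LaS}[\mathcal C_{ijkl}]$ is the reduced polynomial of $\rho_{ijkl}=0$, it is exactly this square root. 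So the only admissible final letters of LS terms attached to $\mathcal C_{ijkl}$ are its irreducible factors. Using \cref{eq:maxcutsusyL,eq:boxcoef}, the residue of the term $c_{ijkl}F_{ijkl}$ on either branch is $c_{ijkl}$ up to a constant, once branch-odd contributions are accounted for. This makes $c_{ijkl}$ the leading singularity in condition 2; I would justify that identification via the branch-even averaging in \cref{eq:boxcoef}. Conversely, a term whose second entry is $x^2_{ab}$ alone, or $x^2_{ab}-x^2_{cd}$, must be ruled out. For such a term I would show that no maximal cut in the box expansion has that letter among the factors of its $\mathrm{LaS}$, or else that the accompanying coefficient is not proportional to the corresponding leading singularity. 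For example, the two-mass-hard term $x^2_{i+1,k}\otimes x^2_{i+1,k}$ has coefficient $-c^{\rm 2mh}$ and a last entry that is only a partial factor. Here I would check that its first entry fails condition 3.

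\textbf{Step 2: first entries from triple cuts.} For each surviving term, the first entry is the ratio appearing in \cref{eq:ll}, and I expand it multiplicatively into individual $x^2$'s. Each $x^2_{ab}$ must be realized as the Landau singularity of a next-to-maximal cut $\mathcal C^{[1]}\supset\mathcal C_{ijkl}$ obtained by releasing one propagator. Such a cut is a triple cut, and its $\mathrm{LaS}$ consists of the masses and invariants of the resulting triangle, computed as singular loci of the conic fiber. I would check that each factor in the listed first entries is a channel invariant $x^2_{ab}$ of one of the four triangles contained in the box. By contrast, the other terms (for instance those with final entries $x^2_{ab}-x^2_{cd}$) have final letters that are Landau singularities of no maximal box cut. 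This establishes both inclusion and exclusion.

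\textbf{Step 3: summation ranges.} The ranges in \cref{eq:ll} simply enumerate each box once. For the two-mass-easy boxes, $F_{i,i+1,j,j+1}=F_{j,j+1,i,i+1}$, so the limit $t(i,n)$ avoids double counting; the even-$n$ case needs care for $j=i+n/2$.

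\textbf{Main obstacle.} The delicate step is the exclusion argument, together with fixing the Landau loci of triple cuts precisely enough. Mixed terms, such as the $\tfrac12$-weighted two-mass-hard and three-mass words with letters like $x^2_{jk}$, must be shown to violate at least one condition, rather than merely being omitted from the representation. I would first compute $\mathrm{LaS}$ of each triangle cut explicitly (one-, two- and three-mass triangles) and then test every term against the list.
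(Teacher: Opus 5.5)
\textbf{Gap 1: first entries from triple cuts.} Step 2 rests on the claim that the Landau locus of every triple cut inside the box contains the masses of the resulting triangle. That claim holds only for three-mass triangles, where the fiber is a smooth conic that degenerates exactly when the modified Cayley determinant, proportional to $m_1^2m_2^2m_3^2$, vanishes. For a two-mass triangle the generic fiber is already the pair of lines $c=0$, $d=0$ used in the proof of \cref{prop:2mtrls}. Sending a mass to zero only moves their intersection point $\hat\ell_1=a\,p_n$ to $0$ or to $p_n$. The fiber becomes non-generic only where $a=x^2_{1i}/(x^2_{1i}-x^2_{in})$ diverges, so its $\mathrm{LaS}$ is $m_1^2-m_2^2$, the origin of the difference letters, and not the masses. (For a one-mass triangle the vanishing Gram determinant does detect its mass, so $s,t$ of the one-mass box are fine.)

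All four triangle contractions of a two-mass-easy box are two-mass triangles. Likewise, $x^2_{i,i+3}$ in the one-mass box and $x^2_{i+1,k}$ in the two-mass-hard box occur only as masses of two-mass triangles. So condition 3 cannot be established this way for these first entries. Nothing in \cref{def:LS_part} forces $\mathcal C^{[1]}$ to be next-to-maximal, and the paper uses double cuts instead. The six bubble contractions of a box (\cref{fig:box-bubbles}) carry the four corner momenta and the $s$- and $t$-channel momenta. The fiber of a massless bubble cut with momentum $P$ is a smooth quadric that becomes singular exactly at $P^2=0$. Hence every first-entry letter in \cref{eq:ll} is a bubble Landau singularity.

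Separately, condition 2 concerns $\mathrm{LeS}^\sigma[A]$, not the residue of $c_{ijkl}F_{ijkl}$. Since $c_{ijkl}$ is the branch average in \cref{eq:boxcoef}, each LS term should be split as $\tfrac12\mathrm{LeS}^+a\otimes b+\tfrac12\mathrm{LeS}^-a\otimes b$.

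\textbf{Gap 2: the exclusion step.} Exclusion cannot be derived from the letter-wise conditions of \cref{def:LS_part}, and the example you chose shows this. Take $-c^{\rm 2mh}_{i,i+1,i+2,k}\,x^2_{i+1,k}\otimes x^2_{i+1,k}$. Its last letter is an irreducible factor of $\sqrt{\rho^{\rm 2mh}}=x^2_{i,i+2}x^2_{i+1,k}$. Its first letter is the $t$-channel bubble threshold of the same box, a letter your own Step 2 accepts as a first entry of this box's LS term. So no condition fails. The same is true of the expanded pieces of $-\tfrac12 c^{\rm 2mh}\frac{x^2_{i+2,k}x^2_{ik}}{x^2_{i,i+2}}\otimes x^2_{i,i+2}$.

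The paper draws the line at the level of the unexpanded terms of the fixed representation. An LS term is one whose final entry is the full $\sqrt{\rho_{ijkl}}$ of \cref{eq:sqrt-rho}, paired with its ratio first entry. This is why \cref{prop:repeated} counts the copy of $x^2_{1j}\otimes x^2_{1j}$ coming from the expanded LS term as LS, but the identical separate term of \cref{eq:2mhbox-symbol} as non-LS (total $-2c^{\rm 2mh}$ per box versus LS $-c^{\rm 2mh}$).

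With this reading, inclusion is the bubble/$\sqrt{\rho}$ matching described above. Exclusion is then immediate by inspection: only the first term of each rational box symbol ends in $\sqrt{\rho}$, and for four-mass boxes $\rho$ appears only inside algebraic letters. That is all the paper's argument does; trying instead to obtain exclusivity from the irreducible-factor conditions will fail.
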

For one-loop amplitudes, the Landau singularity for a maximal cut is at a quadruple cut and is given by $\rho_{ijkl}$ of \cref{eq:jacobrho}. 
For any box with at least one massless leg, $\rho_{ijkl}$ is a perfect square, and we make conventional and consistent branch choices of the square root appearing in Eqs.~(\ref{eq:boxfcns}),(\ref{eq:maxcutsusyL}) such that the term with the $s$ and $t$ invariants comes without a minus sign, as follows for each such type of box: \begin{align}\label{eq:sqrt-rho}\begin{split}
     \sqrt{\rho^{\rm 1m}_{i,i+1,i+2,i+3}} &=  x^2_{i,i+2} x^2_{i+1,i+3}, \\
     \sqrt{\rho^{\rm 2me}_{i,i+1,j,j+1}} &=  x^2_{i j} x^2_{i+1,j+1}- x^2_{i,j+1} x^2_{i+1,j}, \\
     \sqrt{\rho^{\rm 2mh}_{i,i+1,i+2,k}} &=  x^2_{i,i+2} x^2_{i+1,k}, \\
     \sqrt{\rho^{\rm 3m}_{i,i+1,j,k}} &=  x^2_{i j} x^2_{i+1,k}- x^2_{ik} x^2_{i+1,j}, \\
     \end{split}
\end{align} 
These expressions can be recognised
as the second-entry letters of the symbols of boxes with massless legs, for which we take the corresponding first entries and note that their letters are the Landau singularities of their contractions to bubbles, shown in \cref{fig:box-bubbles}.
\begin{figure}[htbp]
    \centering
    \includegraphics[width=1.0\textwidth]{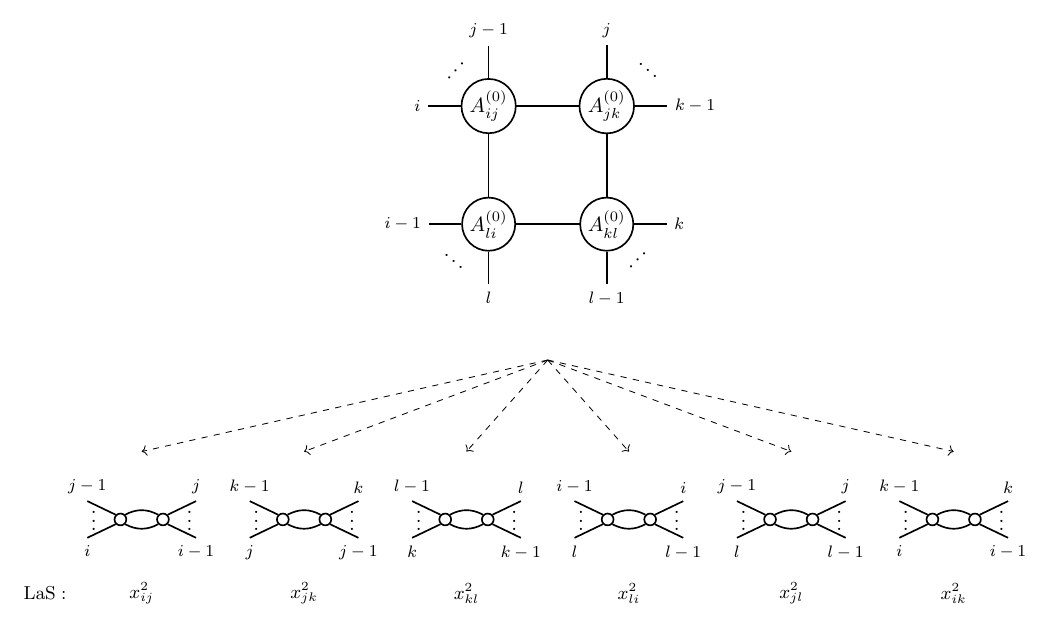}
    \caption{General box and its contractions to all possible different bubbles. The last line shows the Landau singularity of each bubble.}
    \label{fig:box-bubbles}
\end{figure}
Each box coefficient above is the average of two leading singularities associated to the same maximal cut of \cref{eq:boxcoef}. 
The four-mass box is a special case: as seen in \cref{eq:4mbox-symbol}, $\rho_{ijkl}$ does not appear itself as a letter of the final entry, but only within a more complicated algebraic letter. Therefore there is no LS part coming from four-mass boxes in \cref{eq:ll}. 
\subsection{Coefficients of words}
\label{sec:wordscoef}
In this section, we summarise explicitly the coefficients and the LS coefficients of each word for one-loop planar $\mathcal{N}=4$ SYM. 
\begin{proposition}\label{prop:cancel}
    Words of the form $x^2_{ij} \otimes (x^2_{kl}-x^2_{mn})$ vanish in the amplitude, meaning $\mathrm{Coeff}[x^2_{ij} \otimes (x^2_{kl}-x^2_{mn})]=0$.
\end{proposition}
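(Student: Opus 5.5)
The plan is a direct bookkeeping argument. I will collect, across all scalar boxes in \cref{eq:boxdecomp}, every weight-two term whose second entry is a difference of Mandelstam invariants, and show that the total coefficient vanishes by a linear relation among box coefficients (the two-mass triangle relations).

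First, I will classify the difference letters. Inspecting \cref{eq:1mbox-symbol,eq:2mhbox-symbol,eq:2mebox-symbol,eq:3mbox-symbol}, every such letter has the form
\begin{equation*}
L_{a,k}:=x^2_{ak}-x^2_{a+1,k},
\end{equation*}
up to an overall sign, which is irrelevant modulo constants in the symbol. The two invariants share one dual index and the other indices are adjacent. Each occurrence comes in the combination $\frac{x^2_{a+1,k}}{x^2_{ak}}\otimes L_{a,k}$ or its inverse. After multiplicative expansion of the first entry, it contributes to exactly the two words $x^2_{ak}\otimes L_{a,k}$ and $x^2_{a+1,k}\otimes L_{a,k}$, with opposite coefficients. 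So the words in the statement with $x^2_{kl}-x^2_{mn}$ a genuine letter are exactly these. It suffices to fix $(a,k)$ and show that the net coefficient $\Sigma_{a,k}$ of $\frac{x^2_{a+1,k}}{x^2_{ak}}\otimes L_{a,k}$ vanishes. Words with $L_{a,k}$ but any other first entry then automatically have zero coefficient.

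Second, for fixed $(a,k)$, I will list which boxes produce $L_{a,k}$. Geometrically, $L_{a,k}$ is attached to the two-mass triangle with massless corner $p_a$ and massive corners $x_{a+1,k}$ and $x_{ka}$. The boxes contributing are precisely those obtained by splitting one of the two massive corners into two adjacent clusters:
\begin{itemize}
\item splitting $x_{a+1,k}$ gives boxes $c_{a,a+1,j,k}$ with $a+1<j<k$, which are three-mass for generic $j$;
\item splitting $x_{ka}$ gives boxes $c_{a,a+1,k,l}$, or the cyclically equivalent labelling.
\end{itemize}
When a split produces a massless corner, the box degenerates to a two-mass easy, two-mass hard, or one-mass box, and its symbol formula supplies the corresponding term. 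From the explicit symbols I will read off the sign with which each box contributes $\frac{x^2_{a+1,k}}{x^2_{ak}}\otimes L_{a,k}$. The expected outcome is
\begin{equation*}
\Sigma_{a,k}=\sum_{j}c_{a,a+1,j,k}-\sum_{l}c_{a,a+1,k,l},
\end{equation*}
that is, the sum of boxes splitting one massive corner of the triangle minus the sum of boxes splitting the other. This matches the structure of $C^\pm$ in \cref{def:Cpm}.

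Third, I will identify $\Sigma_{a,k}=0$ with the two-mass triangle relation of \cite{Brandhuber:2009rel,Brandhuber:2009proof}, rederived branchwise in \cref{sec:2mtri}. On the triple cut of this triangle, the one-dimensional cut variety has two components. The global residue theorem on either component equates the sum of residues at poles completing the cut by a propagator in one massive corner with the sum for the other corner, because the triangle has no residue at infinity in $\mathcal N=4$ SYM. Summing over branches gives exactly the vanishing of $\Sigma_{a,k}$ after using \cref{eq:boxcoef}. Alternatively, I would cite the dual conformal anomaly argument: the $L_{a,k}$ letters are not dual conformally covariant and cannot appear in \cref{eq:generalir}.

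The main obstacle I expect is the second step: the boundary cases. When $k=a+2$, $k=a+3$, or $k=a-1$, a massive corner of the triangle becomes massless, or a split produces a massless corner. The contributing boxes are then one-mass or two-mass hard, and their symbols carry the difference letters with relative factors and index shifts that differ from the three-mass pattern. I would handle these case by case, mirroring the first branch of \cref{def:Cpm}. I would also check that the sign conventions of \cref{eq:sqrt-rho} make the two-mass-easy difference terms enter with the same orientation as the three-mass ones, so that every contribution lands in the single sum $\Sigma_{a,k}$ with the signs shown above.
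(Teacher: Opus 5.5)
Your proposal is correct and follows essentially the paper's proof. The paper likewise notes that each difference letter occurs only in the single word $\frac{x^2_{1i}}{x^2_{in}}\otimes(x^2_{1i}-x^2_{in})$, reads off its coefficient as the signed sum of boxes splitting one massive corner of $T_{n,1,i}$ minus those splitting the other, and recognizes this as $t_{n,1,i}$, which vanishes by \cref{eq:2mrelation}. One small correction to your list of degenerate cases: the ones you actually need are $k=a+3$ and $k=a-2$, which are covered by the edge relations $t_{i,i+1,i+3}$ and $t_{i,i+1,i-2}$, whereas for $k=a+2$ or $k=a-1$ one of the two invariants is a massless $p^2=0$, so no difference letter arises.
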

A proof is given in \cref{sec:simplifications}.

\subsubsection*{Words of the form $x^2_{ij} \otimes x^2_{kl}$}
We now focus on words of the form $x^2_{ij} \otimes x^2_{kl}$.
We distinguish the different cases according to the number of distinct indices among $i,j,k,l$. 

\paragraph{Four indices distinct.}
\begin{proposition}\label{prop:4distinct}
    In the case where all indices $i,j,k,l$ are distinct, we have:
    \begin{enumerate}
        \item $\mathrm{Coeff}[x^2_{ij} \otimes x^2_{kl}]=0$, 
        \begin{itemize}
            \item if $i,j,k,l$ are cyclically ordered, or
            \item if $i,j,k,l$ are not cyclically ordered and\footnote{This corresponds to the Steinmann relations \cite{Steinmann:1960a,Steinmann:1960b}.} $|i-j|> 2$, $|k-l| >2$.
        \end{itemize}
        \item $\mathrm{Coeff}[x^2_{ij} \otimes x^2_{kl}]=\mathrm{Coeff}^{\mathrm{LS}}[x^2_{ij} \otimes x^2_{kl}]=-c_{\pi(ijkl)}$,\\ if $i,j,k,l$ are not cyclically ordered and $|i-j|= 2$ or $|k-l| =2$, where $\pi$ is the permutation of $i,j,k,l$ that restores the cyclic order. The coefficient $c_{\pi(ijkl)}$ belongs to a box of type $2mh$ or $1m$.
    \end{enumerate}   
\end{proposition}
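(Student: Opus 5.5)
The plan is to expand every box symbol \cref{eq:1mbox-symbol}--\cref{eq:3mbox-symbol} multiplicatively, so that each first entry that is a ratio of $x^2$'s becomes a sum of single-letter terms, and then collect the coefficient of $x^2_{ij}\otimes x^2_{kl}$ with $i,j,k,l$ all distinct. The four-mass symbol \cref{eq:4mbox-symbol} never has a bare $x^2$ as second entry, and the letters $x^2_{ab}-x^2_{cd}$ are irrelevant here by \cref{prop:cancel}, so only second entries of type $x^2_{kl}$ matter. We should also treat the second entries $x^2_{i,i+2}x^2_{i+1,k}$ and the perfect-square Gram letters: factorize $x^2_{i,i+2}x^2_{i+1,k}$ into its two $x^2$ factors. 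The genuinely non-factorizable two-mass-easy and three-mass letters $x^2_{ij}x^2_{i+1,j+1}-x^2_{i,j+1}x^2_{i+1,j}$ do not contribute.

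Inspecting the expansions, a word $x^2_{ab}\otimes x^2_{cd}$ with four distinct indices arises only from three places. The first is the leading term $\frac{x^2_{i,i+3}}{x^2_{i,i+2}x^2_{i+1,i+3}}\otimes x^2_{i,i+2}x^2_{i+1,i+3}$ of the one-mass box. The second is the leading term of the two-mass-hard box, with second entry $x^2_{i,i+2}x^2_{i+1,k}$. The third is the terms of the three-mass box whose second entry is $x^2_{jk}$, together with the terms with second entries $x^2_{ik}/x^2_{ij}$ and $x^2_{i+1,j}/x^2_{i+1,k}$. All other terms are of the form $a\otimes a$ or share an index between the two entries.

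In each case we read off which index quadruples occur. For the one-mass and two-mass-hard boxes, the four-distinct-index words are, for example, $x^2_{i+1,i+3}\otimes x^2_{i,i+2}$ and $x^2_{i,i+2}\otimes x^2_{i+1,k}$ (and variants). These have crossing (non-cyclically-ordered) index pairs, one of which is a two-particle invariant, and they come with coefficient $-c$ of that box. Conversely, given a non-cyclically-ordered quadruple with $|i-j|=2$ (say $j=i+2$), the crossing condition forces $k=i+1$ (up to relabeling). The unique box producing it is then the one-mass box (if $l=i+3$ or $i-1$) or the two-mass-hard box $c^{\rm 2mh}_{i,i+1,i+2,l}$, i.e.\ $c_{\pi(ijkl)}$. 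I would verify that no cancellation against other boxes occurs because each such word appears in exactly one box symbol. This also shows it lies in the LS part, since the term is literally the LS term listed in \cref{eq:ll}.

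For part 1, the three-mass terms are the delicate ones. In $\tfrac12\frac{x^2_{ij}x^2_{i+1,k}}{x^2_{i+1,j}x^2_{ik}}\otimes x^2_{jk}$, the indices $\{i,j\},\{j,k\}$ etc.\ share an index, so they give no four-distinct word. In $\tfrac12\frac{x^2_{ij}x^2_{ik}}{x^2_{jk}}\otimes x^2_{ik}/x^2_{ij}$ the first-entry factor $x^2_{jk}$ with second entry $x^2_{ik}$ again shares an index. So I expect no three-mass word has four distinct indices at all; this needs a careful check and is the main obstacle, along with the bookkeeping of the two-mass-easy first-entry ratio $\frac{x^2_{i+1,j}x^2_{i,j+1}}{x^2_{ij}x^2_{i+1,j+1}}$, whose second entry is non-factorizable and so irrelevant. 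Cyclically ordered quadruples and crossing quadruples with both separations $>2$ then simply never appear in any box symbol, giving coefficient zero (the Steinmann statement).
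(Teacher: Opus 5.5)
Your proposal is correct and follows the paper's own route, which is direct inspection of the multiplicatively expanded box symbols. The only words with four distinct indices are the crossing words $x^2_{i,i+2}\otimes x^2_{i+1,k}$ and their reverses (with $k=i+3$ or $k=i-1$ giving the one-mass cases). Each comes from a unique one-mass or two-mass-hard leading term with coefficient $-c$, while every three-mass word visibly shares an index, so the check you flagged as the main obstacle is immediate. Your appeal to \cref{prop:cancel} is unnecessary: difference letters are simply distinct letters and never contribute to the coefficient of an $x^2\otimes x^2$ word.
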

In the first case, when the indices are cyclically ordered, the vanishing of the coefficient follows by inspection of the possible words appearing in  \cref{eq:1mbox-symbol,eq:2mhbox-symbol,eq:2mebox-symbol,eq:3mbox-symbol,eq:4mbox-symbol}. 
\paragraph{Three indices distinct.}
The repeated index, denoted by $i$, must appear once in each letter, so without loss of generality we consider a word of the form $x^2_{ij} \otimes x^2_{il}$. We denote by $d_{ijl}$ the difference between the total coefficient of this word and the LS part, 
\begin{equation}
    d_{ijl}:=\mathrm{Coeff}[x^2_{ij} \otimes x^2_{il}]-\mathrm{Coeff}^{\mathrm{LS}}[x^2_{ij} \otimes x^2_{il}].
\end{equation}
Moreover, we will assume the cyclic ordering $i<j<l$. The other ordering $i<l<j$ can be obtained by reflection.
\begin{proposition} \label{prop:3distinct}
In the case where  $i<j<l$ are cyclically ordered:
\begin{enumerate} 
    \item If any two of $i,j,l$ differ by 1, then 
    \begin{equation}\mathrm{Coeff}[x^2_{ij} \otimes x^2_{il}]=\mathrm{Coeff}^{\mathrm{LS}}[x^2_{ij} \otimes x^2_{il}],
    \end{equation}
    and specifically, the only nonvanishing instances are of the form
    \begin{align}
        \mathrm{Coeff}[x^2_{i,i+3} \otimes x^2_{i,i+2}] &= c^{\rm 1m}_{i,i+1,i+2,i+3} \\
        \mathrm{Coeff}[x^2_{i,i-3} \otimes x^2_{i,i-2}] &= c^{\rm 1m}_{i-3,i-2,i-1,i} \\
      \mathrm{Coeff}[x^2_{ij} \otimes x^2_{i,j-1}] &= c^{\rm 2mh}_{j-2,j-1,j,i} \\
        \mathrm{Coeff}[x^2_{ij} \otimes x^2_{i,j+1}] &= c^{\rm 2mh}_{j,j+1,j+2,i} 
    \end{align}
    where $|j-i|>3$.
   \item Otherwise we have the following:
\begin{align}
  \mathrm{Coeff}[x^2_{ij} \otimes x^2_{il}]&= \frac{1}{2}(C^+_{ij,l}- C^+_{jl,i}+C^-_{l i,j})\\
\mathrm{Coeff}^{\mathrm{LS}}[x^2_{ij} \otimes x^2_{il}]&=\begin{cases}
         C^-_{l i,j},  \, \text{if $l=i-2$}.\\
         0, \, \text{otherwise.}
        \end{cases}\\
        d_{ijl}&=\begin{cases}
         \frac{1}{2}(C^+_{ij,l}- C^+_{jl,i}-C^-_{l i,j}),  \, \text{if $l=i-2$}.\\
          \mathrm{Coeff}[x^2_{ij} \otimes x^2_{il}], \, \text{otherwise.}
        \end{cases}
\end{align}
\end{enumerate}    
\end{proposition}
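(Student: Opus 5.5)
The proof is a bookkeeping argument on the box expansion \cref{eq:boxdecomp}. The coefficient of $x^2_{ij}\otimes x^2_{il}$ is read off term by term from the weight-two box symbols \cref{eq:1mbox-symbol,eq:2mhbox-symbol,eq:2mebox-symbol,eq:3mbox-symbol}. The four-mass symbol \cref{eq:4mbox-symbol} never produces a word with both entries two-point invariants, so it can be dropped. The words $x^2\otimes(x^2-x^2)$ are handled by \cref{prop:cancel} and do not enter.

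\textbf{Step 1: expand the box symbols.} In each box symbol I would expand every entry multiplicatively, using $\frac{a}{b}\otimes c=a\otimes c-b\otimes c$ and $a\otimes bc=a\otimes b+a\otimes c$. I would keep only those rational words whose first entry contains $x^2_{ij}$ and whose second entry contains $x^2_{il}$. Terms like $\frac{x^2_{i,i+3}}{x^2_{i,i+2}x^2_{i+1,i+3}}\otimes x^2_{i,i+2}x^2_{i+1,i+3}$ then give the one-mass instance $x^2_{i,i+3}\otimes x^2_{i,i+2}$ with coefficient $c^{\rm 1m}_{i,i+1,i+2,i+3}$. The analogous two-mass-hard first line gives $x^2_{ik}\otimes x^2_{i+1,k}$ and $x^2_{i+2,k}\otimes x^2_{i+1,k}$, which after relabelling are the words $x^2_{ij}\otimes x^2_{i,j\mp1}$ with coefficient $c^{\rm 2mh}$.

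\textbf{Step 2: enumerate contributing boxes.} For fixed cyclically ordered $i<j<l$ I would list every box whose expanded symbol contains the word. Inspection shows only four kinds contribute:
\begin{itemize}
\item the $\tfrac12$-weighted rows of the three-mass symbol, whose second entries are $x^2_{ik}/x^2_{ij}$, $x^2_{i+1,j}/x^2_{i+1,k}$ and $x^2_{jk}$;
\item the analogous rows of the two-mass-hard symbol;
\item the first row of the two-mass-hard and one-mass symbols;
\item the first row of the three-mass symbol, containing $x^2_{ik}x^2_{i+1,j}/(x^2_{ij}x^2_{i+1,k})\otimes\sqrt{\rho^{\rm 3m}}$, whose second entry is not a two-point invariant and so drops out.
\end{itemize}

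\textbf{Case 1.} If two of $i,j,l$ are adjacent, the three-mass contributions with a $\tfrac12$ prefactor pair up and cancel. Only the first rows survive, giving the four listed instances. For the LS claim I would check each surviving term against \cref{eq:ll}: its coefficient is a box coefficient, its first entry is a bubble Landau singularity (\cref{fig:box-bubbles}), and its second entry is a factor of $\sqrt{\rho}$ from \cref{eq:sqrt-rho}. So $\mathrm{Coeff}=\mathrm{Coeff}^{\rm LS}$.

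\textbf{Case 2.} Otherwise, the three-mass boxes $F^{\rm 3m}_{i,i+1,j,l}$ and $F^{\rm 3m}_{i,j-1,j,l}$ contribute through the $\tfrac12$ rows, and so do their cyclic partners with the massless leg adjacent to $j$ or to $l$. When a massive corner degenerates to two legs ($j=i+2$), the role is taken by $c^{\rm 2mh}$. Tracking signs, I expect the $x^2_{ij}$-first and $x^2_{il}$-second contributions of the pair adjacent to $i$ to give $\tfrac12 C^+_{ij,l}$. The pair adjacent to $j$ should give $-\tfrac12 C^+_{jl,i}$ through the entry $x^2_{jk}$. The pair adjacent to $l$ should give $+\tfrac12 C^-_{li,j}$, with the relative minus sign coming from the $\frac{x^2_{ik}}{x^2_{ij}}$ versus $\frac{x^2_{i+1,j}}{x^2_{i+1,k}}$ orientation.

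The LS coefficient is then computed by comparing with \cref{eq:ll}. The only LS word of this type arises when $x^2_{il}$ with $l=i-2$ is a two-particle invariant appearing in $\sqrt{\rho^{\rm 2mh}}$ or $\sqrt{\rho^{\rm 1m}}$. This yields $C^-_{li,j}$ and hence the stated $d_{ijl}$.

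\textbf{Main obstacle.} The hard part is the sign and cyclic-relabelling bookkeeping in Case 2. One has to match each three-mass box (indexed by its massless leg) to the correct $C^\pm$, and handle the boundary cases $j=i+2$ or $l=i-2$, where a three-mass box becomes two-mass-hard with a different symbol. I would first verify the formula explicitly at small $n$, e.g.\ $n=8$ with generic spacing and one boundary configuration, to fix signs, and then argue the general case by cyclic covariance.
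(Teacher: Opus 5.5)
This is the paper's own argument: the proposition is proved by direct inspection of the multiplicatively expanded box symbols, with every gap configuration tabulated in \cref{app:coeff}. Your Case~2 bookkeeping matches it, namely the six three-mass boxes grouped into $C^\pm$, their collapse to two-mass-hard boxes at gap two, and the single LS contribution at $l=i-2$, where $F^{\rm 2mh}_{i-2,i-1,i,j}$ gives $1$ from its first row and $-\tfrac12$ from its $\otimes\, x^2_{i-2,i}$ row.

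One correction concerns Case~1: nothing pairs up or cancels there. Every three-index word produced by a half-weighted three-mass or two-mass-hard row lives on a triple of pairwise non-adjacent dual points ($\{a,b,c\}$ or $\{a{+}1,b,c\}$ for $F^{\rm 3m}_{a,a+1,b,c}$, and $\{a,a{+}2,k\}$ for $F^{\rm 2mh}_{a,a+1,a+2,k}$). So only the first rows of the one-mass and two-mass-hard symbols ever produce words with adjacent unshared indices.
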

The above follow by inspection of the symbols of the different boxes, with the $C^\pm_{ij,l}$ as defined in \cref{def:Cpm}. A table where all the cases are spelled out explicitly is found in \cref{app:coeff}. We observe that the latter is the case where most of the coefficients are not captured by the LS terms.
\paragraph{Two indices distinct.}
Due to cyclic symmetry, we can consider the word $x^2_{1j} \otimes x^2_{1j}$ without loss of generality.
\begin{proposition}\label{prop:repeated}
In the case of repeated letters, we have the following:    
\begin{enumerate}
    \item $\mathrm{Coeff}[x^2_{1j} \otimes x^2_{1j}]=\mathrm{Coeff}^{\mathrm{LS}}[x^2_{1j} \otimes x^2_{1j}]=-c^{\mathrm{2mh}}_{1,2,j,n}
- c^{\mathrm{2mh}}_{1,j-1,j,j+1}$, for $n \geq 6$ and $4\leq j\leq n-2$. 
\item 
$\mathrm{Coeff}[x^2_{13} \otimes x^2_{13}]= \frac{1}{2}\mathrm{Coeff}^{\mathrm{LS}}[x^2_{13} \otimes x^2_{13}]=\frac{1}{2} \left(-c^{\mathrm{1m}}_{1,2,3,n}- c^{\mathrm{1m}}_{1,2,3,4}-\sum_{j=5}^{n-1}c^{\mathrm{2mh}}_{1,2,3,j} \right)$, for $n \geq 6$.
\end{enumerate}
\end{proposition}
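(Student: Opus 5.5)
The plan is to compute both sides of the statement by direct bookkeeping on the box symbols \cref{eq:1mbox-symbol,eq:2mhbox-symbol,eq:2mebox-symbol,eq:3mbox-symbol,eq:4mbox-symbol}, weighted by the coefficients of \cref{eq:boxdecomp}. For each rational term I expand both entries multiplicatively into products of the letters $x^2_{ab}$. I then keep only the monomials $x^2_{1j}\otimes x^2_{1j}$, using cyclic symmetry to fix the repeated letter to start at $1$. Four-mass boxes contribute nothing, since their symbols contain no rational word of this form. Letters of the form $x^2_{ab}-x^2_{cd}$ never expand to $x^2_{1j}$, so they can be ignored as well.

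\textbf{Case $4\le j\le n-2$.} Here $x^2_{1j}$ is a genuinely massive invariant. I list every box in which it occurs as a corner or diagonal invariant, and read off the contribution of each:
\begin{itemize}
\item the two two-mass-hard boxes in which $x^2_{1j}$ plays the role of $x^2_{i+1,k}$, each contributing $-1$ through its term $-x^2_{i+1,k}\otimes x^2_{i+1,k}$ (equivalently, through the expansion of its first LS term);
\item two-mass-hard boxes in which $x^2_{1j}$ plays the role of $x^2_{ik}$ or $x^2_{i+2,k}$, each contributing $\pm\tfrac12$ from the half-weighted lines;
\item the four two-mass-easy boxes $F^{\rm 2me}_{1,2,j,j+1}$, $F^{\rm 2me}_{1,2,j-1,j}$, $F^{\rm 2me}_{n,1,j,j+1}$ and $F^{\rm 2me}_{n,1,j-1,j}$, contributing $\mp1$ from their diagonal terms together with the multiplicative expansion of the first line;
\item the three-mass boxes in which $x^2_{1j}$ appears as $x^2_{ij}$, $x^2_{ik}$, $x^2_{i+1,j}$ or $x^2_{i+1,k}$, contributing $\pm\tfrac12$ or $\pm1$ each.
\end{itemize}
The resulting sum has the expected $-c^{\rm 2mh}_{1,2,j,n}-c^{\rm 2mh}_{1,j-1,j,j+1}$ piece (with the labelling of \cref{fig:boxtypes}). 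On top of that it carries a residual combination of 2me, 3m and half-weighted 2mh coefficients.

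The main step is to show that this residual combination vanishes. My first attempt is to group it into sums of the form $C^\pm$ of \cref{def:Cpm} and recognise each group as an instance of the two-mass triangle relations. Each triangle obtained by pinching one propagator of the boxes sharing the channel $x^2_{1j}$ has a two-mass triangle cut. The relations for these cuts say that the sum of the completions of that cut to boxes vanishes, or equals a fixed 2mh/1m coefficient. These relations are proved independently in \cref{sec:2mtri}, by the global residue theorem and without any use of the symbol, so invoking them here involves no circularity. I expect this matching of signs and index ranges, especially near the boundaries $j=4$ and $j=n-2$, to be the delicate part. For these boundary cases I would separately check $n=6,7$ by hand.

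\textbf{LS coefficient for $4\le j\le n-2$.} I apply \cref{def:LS_part} to the representation \cref{eq:ll}. The only LS terms whose multiplicative expansion contains $x^2_{1j}\otimes x^2_{1j}$ are the 2mh LS terms in which $x^2_{1j}=x^2_{i+1,k}$ appears in both the prefactor (in the denominator) and the final entry. These give exactly $-c^{\rm 2mh}_{1,2,j,n}-c^{\rm 2mh}_{1,j-1,j,j+1}$. The 2me and 3m LS terms have final entries $\sqrt{\rho}$ that are irreducible and not $x^2_{1j}$, so they do not contribute. This establishes part 1.

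\textbf{Case $j=3$.} For the two-particle invariant the same count is simpler, because $x^2_{13}$ appears only in 1m and 2mh boxes with $x^2_{13}=x^2_{i,i+2}$. In the LS representation \cref{eq:ll}, the first 1m and 2mh terms each contribute $-1$ times the coefficient. This gives $\mathrm{Coeff}^{\rm LS}=-c^{\rm 1m}_{1,2,3,n}-c^{\rm 1m}_{1,2,3,4}-\sum_{j=5}^{n-1}c^{\rm 2mh}_{1,2,3,j}$. In the full symbol the 1m boxes have no further $x^2_{13}\otimes x^2_{13}$ term. Each 2mh box, however, carries the extra term $-\tfrac12\frac{x^2_{i+2,k}x^2_{ik}}{x^2_{i,i+2}}\otimes x^2_{i,i+2}$, which contributes $+\tfrac12$ and so halves the 2mh contributions. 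The 1m contributions must then also be halved. I would obtain this from the diagonal terms $x^2_{i,i+3}\otimes x^2_{i,i+3}$ of the neighbouring 1m boxes together with the IR consistency relations behind \cref{eq:N4div}; the latter fix the total coefficient of $\log^2(-x^2_{13})$ to $-\tfrac12$ of its LS value. This last identification is the second place where I expect care to be needed.
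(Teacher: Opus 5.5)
\textbf{Part 2 has a genuine gap.} It is not true that $x^2_{13}$ enters only as the $x^2_{i,i+2}$ of one-mass and two-mass-hard boxes. It also enters the following boxes:
\begin{itemize}
\item $F^{\rm 2me}_{n,1,3,4}$, through a diagonal word, with coefficient $+1$;
\item the three-mass boxes $F^{\rm 3m}_{3,4,k,1}$ ($6\le k\le n-1$) and $F^{\rm 3m}_{n,1,3,k}$ ($5\le k\le n-2$), each with $+\tfrac12$;
\item the two-mass-hard boxes $F^{\rm 2mh}_{3,4,5,1}$ and $F^{\rm 2mh}_{n-1,n,1,3}$, where it plays the role of $x^2_{ik}$ and $x^2_{i+2,k}$ respectively, each with $+\tfrac12$.
\end{itemize}
Your count, $-c^{\rm 1m}_{1,2,3,4}-c^{\rm 1m}_{1,2,3,n}-\tfrac12\sum_k c^{\rm 2mh}_{1,2,3,k}$, is therefore incomplete. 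Saying ``the 1m contributions must then also be halved'' assumes the conclusion.

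Neither mechanism you propose supplies the halving. The one-mass diagonal words $x^2_{i,i+3}\otimes x^2_{i,i+3}$ involve three-particle invariants, so for $n\ge 6$ they never produce $x^2_{13}\otimes x^2_{13}$. \cref{eq:N4div} constrains only the poles, so by itself it does not fix the finite coefficient of $\log^2(-x^2_{13})$.

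The paper closes the gap differently. The omitted terms, together with $\tfrac12(c^{\rm 1m}_{1,2,3,4}+c^{\rm 1m}_{1,2,3,n})$, are exactly $\tfrac12(t_{3,4,1}-t_{n,1,3})$. These are the edge-case relations $t_{i,i+1,i-2}$ and $t_{i,i+1,i+3}$, and it is the one-mass coefficients they contain that halve the one-mass terms.

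Your IR idea can also be made rigorous, but it needs one input you did not state. Box by box, the weight-two coefficient of $a\otimes a$ equals minus the coefficient of $\frac1\epsilon\otimes a$ for every letter $a$. The only exception is the letter $x^2_{i+1,k}$ of a two-mass-hard box, and that letter is never a two-particle invariant. Hence $\mathrm{Coeff}[x^2_{13}\otimes x^2_{13}]$ is minus the coefficient of $\frac1\epsilon\otimes x^2_{13}$, which IR universality sets to $A^{(0)}_n$. Separately, \cref{eq:onemassir} at $i=3$ gives $\mathrm{Coeff}^{\rm LS}=-2A^{(0)}_n$, which yields the claimed factor of $\tfrac12$.

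\textbf{Part 1 follows the paper's route.} You count directly from the box symbols, then identify the residual as $\tfrac12(t_{1,2,j}-t_{j-1,j,1}+t_{j,j+1,1}-t_{n,1,j})$ and remove it with \cref{eq:2mrelation}. You need to correct one miscount. Each of $F^{\rm 2mh}_{n,1,2,j}$ and $F^{\rm 2mh}_{j-1,j,j+1,1}$ contributes $-2$, not $-1$. The first-line LS term and the explicit $-x^2_{i+1,k}\otimes x^2_{i+1,k}$ are separate contributions, not equivalent ones. The raw count is therefore $-2c^{\rm 2mh}_{1,2,j,n}-2c^{\rm 2mh}_{1,j-1,j,j+1}+\dots$. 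The residual thus contains full-weight copies of these two coefficients, not only half-weighted two-mass-hard terms, and these copies are supplied by the four $t$'s above. Two smaller corrections: the first line of a two-mass-easy box cannot contribute, since its second entry is irreducible, and each three-mass box contributes only $\pm\tfrac12$.
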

A proof is given in \cref{sec:simplifications}.

\section{Two-mass triangle relations}
\label{sec:2mtri}
In this section, we present the two-mass triangle relations $t_{i,i+1,j}$, which were re-derived in our work, in the process of understanding the cancellation of letters of the form $x_{ij}^2-x_{kl}^2$. They hold for any multiplicity and helicity at one-loop and can be stated as
\begin{equation}\label{eq:2mrelation}
    t_{i,i+1,j}=0,
\end{equation}
where for $j \notin \{i-2,i-1,i,i+1,i+2,i+3\}$,
\begin{equation}\label{eq:2mrelationRHS}
    t_{i,i+1,j}:= c^{\mathrm{2mh}}_{i,i+1,i+2,j}+c^{\mathrm{2me}}_{i,i+1,j-1,j}-c^{\mathrm{2me}}_{i,i+1,j,j+1}-c^{\mathrm{2mh}}_{i,i+1,j,i-1}+\sum_{k=i+3}^{j-2} c^{\mathrm{3m}}_{i,i+1,k,j}-\sum_{k=j+2}^{i-2} c^{\mathrm{3m}}_{i,i+1,j,k},
\end{equation} 
and in the edge cases $j=i-2$ and $j=i+3$, we account for a degeneration in the terms,
\begin{align}
     t_{i,i+1,i+3} &:= c^{\mathrm{1m}}_{i,i+1,i+2,i+3}-c^{\mathrm{2me}}_{i,i+1,i+3,i+4}-c^{\mathrm{2mh}}_{i,i+1,i+3,i-1}-\sum_{k=i+5}^{i-2} c^{\mathrm{3m}}_{i,i+1,i+3,k}, \\
      t_{i,i+1,i-2} &:= c^{\mathrm{2mh}}_{i,i+1,i+2,i-2}+c^{\mathrm{2me}}_{i,i+1,i-3,i-2}-c^{\mathrm{1m}}_{i,i+1,i-2,i-1}+\sum_{k=i+3}^{i-4} c^{\mathrm{3m}}_{i,i+1,k,i-2}.
\end{align}
The relations can be understood in a diagrammatic way by considering a two-mass triangle $T_{i,i+1,j}$ as in \cref{fig:2mtriangle} and partitioning its momenta into four sets in all possible ways compatible with the original triple partition. 
\begin{figure}[ht]
\centering

\begin{tikzpicture}

\def\rblob{0.4}
\def\rext{0.5}

\coordinate (A) at (0,0);
\coordinate (B) at (3,0);
\coordinate (C) at (1.5,2.6); 

\draw[thick] (A) -- (B) -- (C) -- cycle;

\node[circle, draw, fill=white, thick, minimum size=0.8cm, inner sep=0pt] at (A){};
\node[circle, draw, fill=white, thick, minimum size=0.8cm, inner sep=0pt] at (B) {};
\node[circle, draw, fill=white, thick, minimum size=0.8cm, inner sep=0pt] at (C) {};

\draw[thick] ($ (A) + (-\rblob,0) $) -- ++(-\rext,0) node[left]{\footnotesize $i$};
\draw[thick] ($ (C) + (-\rblob,0) $) -- ++(-\rext,0) node(a)[left]{\footnotesize $i+1$}; 
\draw[thick] ($ (C) + (0,\rblob) $) -- ++(0,\rext) node(b)[above]{\footnotesize $j-1$}; 
\path (a) -- node[sloped,yshift=-4pt]{$\cdots$}
(b);
\draw[thick] ($ (B) + (\rblob,0) $) -- ++(\rext,0) node(c)[right]{\footnotesize $j$}; 
\draw[thick] ($ (B) + (0,-\rblob) $) -- ++(0,-\rext) node(d)[below]{\footnotesize $i-1$}; 
\path (c) -- node[sloped,yshift=4pt]{$\cdots$} 
(d);

\end{tikzpicture}
\caption{Two-mass triangle $T_{i,i+1,j}$ corresponding to \cref{eq:2mrelation}.}
\label{fig:2mtriangle}
\end{figure}
\subsection{Simplifications of the one-loop symbol}\label{sec:simplifications}
\par The two-mass triangle relations significantly simplify the one-loop symbol. We can use them to prove two of the propositions of the previous section. 
\paragraph{Proof of Proposition \ref{prop:cancel}.}
Letters of the form $x^2_{ij} - x^2_{kl}$ only appear in the box functions for precisely  those boxes that can be contracted to a two-mass triangle. Boxes contractible to the two-mass triangle $T_{i,i+1,j}$ can include the letter $x^2_{i+1,j} - x^2_{j,i}$. It is enough to prove the absence of this letter for a specific choice of $i$. We will take $i=n$ and then relabel $j$ as $i$. 
By inspection of \cref{eq:1mbox-symbol,eq:2mhbox-symbol,eq:2mebox-symbol,eq:3mbox-symbol,eq:4mbox-symbol},
the letter in question moreover appears only within the word 
$w_{n1i}=\frac{x^2_{1 i}}{x^2_{i n}} \otimes (x^2_{1 i}-x^2_{i n})$.
The coefficient of this word  in the full amplitude is equal to
\begin{equation}
\mathrm{Coeff}[w_{n1i}] = c^{\mathrm{2me}}_{n,1,i-1,i} - c^{\mathrm{2me}}_{n,1,i,i+1} + c^{\mathrm{2mh}}_{n,1,2,i} - c^{\mathrm{2mh}}_{n-1,n,1,i} + \sum_{k=3}^{i-2} c^{\mathrm{3m}}_{n,1,k,i} - \sum_{j=i+2}^{n-2} c^{\mathrm{3m}}_{n,1,i,j}.
\end{equation}
Using \cref{eq:2mrelation} we can generate the two-mass triangle relation $t_{n,1,i}$, which is equivalent to $\mathrm{Coeff}[w_{n1i}]=0$. \qed
\paragraph{Proof of Proposition \ref{prop:repeated}.}
We consider the word $x^2_{ij} \otimes x^2_{ij}$. Without loss of generality, we can set one of the indices to 1 and consider $x^2_{1j} \otimes x^2_{1j}$. In the case where $n \geq 7$ and $4<j<n-2$, its coefficient is
\begin{align}
\mathrm{Coeff}[x^2_{1j} \otimes x^2_{1j}]=& 
-2 c^{\mathrm{2mh}}_{1,2,j,n}
-2 c^{\mathrm{2mh}}_{1,j-1,j,j+1}
+\half c^{\mathrm{2mh}}_{1,j,j+1,j+2}
+\half c^{\mathrm{2mh}}_{1,j,n-1,n}
+\half c^{\mathrm{2mh}}_{1,2,3,j} \nonumber\\ &
 +\half c^{\mathrm{2mh}}_{1,j-2,j-1,j}
+
c^{\mathrm{2me}}_{1,2,j-1,j} + c^{\mathrm{2me}}_{1,j,j+1,n} 
- c^{\mathrm{2me}}_{1,2,j,j+1} - c^{\mathrm{2me}}_{1,j-1,j,n}
 \nonumber \\ &
+ \sum_{k=j+3}^{n-1} \half c^{\mathrm{3m}}_{1,j,j+1,k}
+ \sum_{k=j+2}^{n-2} \half c^{\mathrm{3m}}_{1,j,k,n}
  +\sum_{k=4}^{j-2} \half c^{\mathrm{3m}}_{1,2,k,j}+\sum_{k=3}^{j-3} \half c^{\mathrm{3m}}_{1,k,j-1,j}
\nonumber \\ &
- \sum_{k=j+2}^{n-1} \half c^{\mathrm{3m}}_{1,j-1,j,k}
- \sum_{k=j+2}^{n-1} \half c^{\mathrm{3m}}_{1,2,j,k}
- \sum_{k=3}^{j-2}\half c^{\mathrm{3m}}_{1,k,j,j+1}
- \sum_{k=3}^{j-2} \half c^{\mathrm{3m}}_{1,k,j,n} \nonumber\\
=&\frac{1}{2}\left(t_{1,2,j}-t_{j-1,j,1}+t_{j,j+1,1}-t_{n,1,j}\right)-c^{\mathrm{2mh}}_{1,2,j,n}
- c^{\mathrm{2mh}}_{1,j-1,j,j+1}.
\end{align}
Therefore, the terms appearing in the left-hand sides of various two-mass triangle relations of \cref{eq:2mrelation} vanish. The coefficient is then reduced to
\begin{equation}
\mathrm{Coeff}[x^2_{1j} \otimes x^2_{1j}]=-c^{\mathrm{2mh}}_{1,2,j,n}
- c^{\mathrm{2mh}}_{1,j-1,j,j+1}=\mathrm{Coeff}^{\mathrm{LS}}[x^2_{1j} \otimes x^2_{1j}].
\end{equation} 
We note that this way, $4n-18$ terms reduce to only two.

For the case where the repeated letter is a two-particle invariant, the coefficient of $x^2_{13} \otimes x^2_{13}$, for $n \geq 7$ is
\begin{align}
\mathrm{Coeff}[x^2_{13} \otimes x^2_{13}]=&- c^{\mathrm{1m}}_{1,2,3,n}
- c^{\mathrm{1m}}_{1,2,3,4}
+\half c^{\mathrm{2mh}}_{1,3,4,5}
+\half c^{\mathrm{2mh}}_{1,3,n-1,n}
+ c^{\mathrm{2me}}_{1,3,4,n} 
 \nonumber \\ &
+ \half\sum_{j=6}^{n-1} c^{\mathrm{3m}}_{1,3,4,j}
+ \half\sum_{j=5}^{n-2} c^{\mathrm{3m}}_{1,3,j,n}
- \half\sum_{j=5}^{n-1} c^{\mathrm{2mh}}_{1,2,3,j}\\
=&\frac{1}{2}\left(t_{3,4,1}-t_{n,1,3}- c^{\mathrm{1m}}_{1,2,3,n}- c^{\mathrm{1m}}_{1,2,3,4}-\sum_{j=5}^{n-1}c^{\mathrm{2mh}}_{1,2,3,j}\right).
\end{align}
The expression reduces to
\begin{eqnarray}
    \mathrm{Coeff}[x^2_{13} \otimes x^2_{13}]=\frac{1}{2}\left(- c^{\mathrm{1m}}_{1,2,3,n}- c^{\mathrm{1m}}_{1,2,3,4}-\sum_{j=5}^{n-1}c^{\mathrm{2mh}}_{1,2,3,j}\right)=\frac{1}{2} \mathrm{Coeff}^{\mathrm{LS}}[x^2_{13} \otimes x^2_{13}].
\end{eqnarray}
The special cases of $j=4$ and $n=6$ are proven in \cref{ref:app_special}.  \qed
\subsection{Proof of relations from on-shell forms}\label{sec:sketchproof}

The two-mass triangle relations of \cref{eq:2mrelation} can be proved as a corollary of the following proposition. Without loss of generality, we can take the massless leg to correspond to the external momentum $p_n$.
\begin{proposition}\label{prop:2mtrls}
For any $i$ such that $3\leq i\leq n-2$,
    \begin{eqnarray}
    \sum_{j=2}^{i-1} {\rm LeS}_{1,j,i,n}^{\pm}[A^{(1)}_n]= \sum_{k=i+1}^{n-1}{\rm LeS}^{\pm}_{1,i,k,n}[A^{(1)}_n].
\end{eqnarray}
\end{proposition}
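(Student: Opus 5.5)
Fix $i$ with $3\leq i\leq n-2$ and work on the triple cut of the two-mass triangle $T_{n,1,i}$. Its three cut propagators are
\begin{equation}
\ell^2=0,\qquad (\ell+p_n)^2=0,\qquad (\ell+x_{ni})^2=0,
\end{equation}
where, in the labelling of \cref{fig:2mtriangle}, the corners carry $\{n\}$, $\{1,\dots,i-1\}$ and $\{i,\dots,n-1\}$. The first two conditions share the massless leg $p_n$, so the cut splits into two components. On one, $\ell\propto\lambda_n\tilde\lambda$; on the other, $\ell\propto\lambda\tilde\lambda_n$, with the 3-point vertex at $n$ being MHV or $\overline{\rm MHV}$ respectively. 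Imposing the third condition on each component leaves a copy of $\mathbb{CP}^1$, which we parametrise by a single variable $z$. For example, on the first branch we take $\ell=\lambda_n(\tilde\lambda_a+z\tilde\lambda_b)$, up to an overall scale fixed by the third propagator.

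The plan is to restrict the one-loop integrand, equivalently the product of three tree superamplitudes glued with the triple-cut Jacobian, to each branch $\sigma=\pm$. This gives a meromorphic one-form $\omega_\sigma(z)\,dz$ on $\mathbb{CP}^1$. We then apply the global residue theorem branchwise. The poles of $\omega_\sigma$ come only from a fourth propagator going on shell, i.e.\ from multiparticle factorisation poles of the two massive-corner trees, once spurious and three-particle poles have been shown to be absent. The proof then has three steps.

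\textbf{Identify the poles.} In the massive corner $\{1,\dots,i-1\}$, the poles sit at $(\ell+x_{nj})^2=0$ for $2\leq j\leq i-1$. Each such residue is exactly the quadruple cut $\mathcal C_{1,j,i,n}$ evaluated on branch $\sigma$, namely ${\rm LeS}^{\sigma}_{1,j,i,n}$. Similarly, poles in the corner $\{i,\dots,n-1\}$ at $(\ell+x_{nk})^2=0$, $i+1\leq k\leq n-1$, give ${\rm LeS}^\sigma_{1,i,k,n}$. The boundary values $j=2$ and $k=n-1$ correspond to massless corners, where the tree factorises through a three-point vertex. We must check that on a fixed branch these three-point poles are still simple poles of $\omega_\sigma$ with the right residue, rather than vanishing by three-point kinematics.

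\textbf{Control $z=\infty$ and orientations.} We show that $\omega_\sigma$ has no pole at infinity. This is a BCFW-type large-$z$ statement: along the branch, the shift acts on the massive trees like a supersymmetric BCFW shift of the cut legs, and $\mathcal N=4$ superamplitudes decay under such shifts. Together with the Jacobian factor, this should give $\omega_\sigma=O(z^{-2})$. With infinity excluded, the global residue theorem gives
\begin{equation}
\sum_j {\rm Res}_j+\sum_k {\rm Res}_k=0 .
\end{equation}
The relative sign between the two groups comes from comparing the orientation of each residue with the orientation convention for ${\rm LeS}^\sigma$ in \cref{eq:resjacobian}. The fourth propagator adjacent to the corner $\{1,\dots,i-1\}$ enters the Jacobian determinant with the opposite relative sign to one adjacent to $\{i,\dots,n-1\}$, which turns the vanishing sum into the stated equality.

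\textbf{Main obstacle.} We expect the large-$z$ behaviour, together with the consistent matching of orientations and of the branch labels $\pm$ across the different boxes, to be the hardest part. The difficulty is that the branch label of the quadruple cut must coincide with the triple-cut component, and the signs must be uniform. To handle this, we would first write $\omega_\sigma$ explicitly in momentum twistors. There the triple-cut branch corresponds to the loop line $AB$ passing through $Z_n$ (or lying in the plane $(n-1\,n\,1)$), and the remaining quadruple-cut residues appear as $R$-invariant-like objects. The decay at infinity then follows from the known $1/z$ behaviour of super-BCFW. The two-mass triangle relation \cref{eq:2mrelation} then follows by summing over $\sigma=\pm$ and using \cref{eq:boxcoef}.
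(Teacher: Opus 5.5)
Your route is the paper's: take the residue of the integrand on each component of the triple cut, apply the global residue theorem in $z$, identify the fourth-propagator residues with ${\rm LeS}^\sigma_{1,j,i,n}$ and ${\rm LeS}^\sigma_{1,i,k,n}$, exclude infinity by a BCFW-type falloff, and fix the relative sign by orientation. Your Jacobian-ordering argument does the same bookkeeping as the paper's use of \cref{eq:sqrt-rho}, under which $x^2_{1j}x^2_{ni}-x^2_{1i}x^2_{nj}=-\sqrt{\rho_{1jin}}$ while $x^2_{1k}x^2_{ni}-x^2_{1i}x^2_{nk}=+\sqrt{\rho_{1ikn}}$.

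\textbf{The gap.} You assert that the only finite poles of $\omega_\sigma$ are fourth-propagator poles. This misses the point where the two components of the triple cut meet. There $\ell_1=a\,p_n$ and $\ell_n=(a-1)\,p_n$, with $a=x^2_{1i}/(x^2_{1i}-x^2_{in})$. The residue of the cut measure on either component has a pole at this point. Write $\ell_1=a\lambda_n\tilde\lambda_n+c\,\lambda_n\tilde\lambda_B+d\,\lambda_B\tilde\lambda_n$ with $\lambda_B=x_{1i}\cdot\tilde\lambda_n$ and $\tilde\lambda_B=x_{1i}\cdot\lambda_n$. The remaining on-shell condition is then $\delta(cd)$, which leaves $J\,\mathrm dz/z$ on the component $c=0$ (with $z=d$), and $\mathrm dc/c$ on the other component.

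The three-point vertex at $n$ is also singular there, because its spinors of one chirality become parallel. If this pole survived, the residue theorem would contain an extra term and the identity would not follow. Your caveat about ``spurious'' poles neither locates this point nor supplies the cancellation.

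The missing step is short. On the component with $[\hat\ell_1|=[\hat\ell_n|=[n|$ one has $|\hat\ell_1\rangle=a|n\rangle+z\lambda_B$ and $|\hat\ell_n\rangle=(a-1)|n\rangle+z\lambda_B$. So the MHV denominator $\langle\hat\ell_n\,n\rangle\langle n\,\hat\ell_1\rangle\langle\hat\ell_1\,\hat\ell_n\rangle$ is $O(z^3)$. Projecting $\delta^{(8)}$ onto $\langle n|$, however, gives $\delta^{(4)}\bigl(z\langle n|x_{1i}|n](\eta_{\hat\ell_n}-\eta_{\hat\ell_1})\bigr)\propto z^4$. The vertex therefore vanishes linearly and the $1/z$ is removable. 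The massive corners are regular at this point for generic kinematics, and the other component is the parity image.

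\textbf{Smaller points.}
\begin{itemize}
\item Your branch labels are reversed. If $\ell\propto\lambda_n\tilde\lambda$, all holomorphic spinors at the vertex are parallel, so only the $\overline{\rm MHV}$ vertex is supported. This matters for exactly the cancellation above.
\item Make the parametrization affine, $\tilde\lambda_b\propto x_{1i}\cdot\lambda_n$, so that the third cut condition does not involve $z$. With a $z$-dependent overall scale, the point where the loop momentum diverges sits at finite $z$, and that is where the falloff argument must be applied.
\item For the falloff, reading the deformation as a super-BCFW shift at both massive corners at once is consistent because the vertex localizes $\eta_{\hat\ell_1}=\eta_{\hat\ell_n}=\eta_n$ and the two legs share $[n|$. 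The paper uses exactly this: a single $Q$-translation sets both Grassmann variables to zero. The $\lambda$-shifted leg at each corner is then a positive-helicity gluon, and each corner falls off like $1/z$.
\item Your concern about $j=2$ and $k=n-1$ is harmless. There the fourth propagator factorizes through a three-point vertex of the opposite type to the one at $n$, and the massive tree has a genuine simple pole.
\end{itemize}
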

We note that the above relation is a refinement of the relations between box coefficients that have appeared in \cite{Brandhuber:2009rel}, since they equate the leading singularities on each branch of the quadruple cut solution, separately. We present here a summary of the proof, which can be found in detail in \cref{ref:app_2mt_relations}.
\begin{proof}
\textbf{Parametrizing the triple cut of the two-mass triangle.}  
We use dual notation, in which for example $x_{1 j} = p_1+\dots +p_{j-1}$.
The triple cut of the two-mass triangle involves 
putting on shell the three propagators indicated in \cref{fig:triplecut}.
\begin{figure}[ht]
\centering

\begin{tikzpicture}

\def\rblob{0.4}
\def\rext{0.5}

\coordinate (A) at (0,0);
\coordinate (B) at (3,0);
\coordinate (C) at (1.5,2.6); 

\draw[thick] (A) -- (B) -- (C) -- cycle;

\def\l{0.26}
\coordinate (M1) at ($ (A)!0.5!(B) $);
\coordinate (M2) at ($ (B)!0.5!(C) $);
\coordinate (M3) at ($ (C)!0.5!(A) $);

\draw[dashed, thick, red] ($ (M1) + (0,\l) $) -- ($ (M1) + (0,-\l) $);
\draw[dashed, thick, red] ($ (M2) + (30:\l) $) -- ($ (M2) + (210:\l) $);
\draw[dashed, thick, red] ($ (M3) + (150:\l) $) -- ($ (M3) + (330:\l) $); 

\node[circle, draw, fill=white, thick, minimum size=0.8cm, inner sep=0pt] at (A) {\footnotesize $A_{n,1}$};
\node[circle, draw, fill=white, thick, minimum size=0.8cm, inner sep=0pt] at (B) {\footnotesize ${A}_{i,n}$};
\node[circle, draw, fill=white, thick, minimum size=0.8cm, inner sep=0pt] at (C) {\footnotesize ${A}_{1,i}$};

\draw[thick] ($ (A) + (-\rblob,0) $) -- ++(-\rext,0) node[left]{\footnotesize $n$};
\draw[thick] ($ (C) + (-\rblob,0) $) -- ++(-\rext,0) node(a)[left]{\footnotesize $1$}; 
\draw[thick] ($ (C) + (0,\rblob) $) -- ++(0,\rext) node(b)[above]{\footnotesize $i-1$}; 
\path (a) -- node[sloped,yshift=-4pt]{$\cdots$}
(b);
\draw[thick] ($ (B) + (\rblob,0) $) -- ++(\rext,0) node(c)[right]{\footnotesize $i$}; 
\draw[thick] ($ (B) + (0,-\rblob) $) -- ++(0,-\rext) node(d)[below]{\footnotesize $n-1$}; 
\path (c) -- node[sloped,yshift=4pt]{$\cdots$} 
(d);

\def\off{0.5}
\def\Len{0.3}
\draw[<-, thick] ($ (M1) + (0,-\off) - \Len*(1,0) $) -- ($ (M1) + (0,-\off) + \Len*(1,0) $) node[midway, below]{\footnotesize $\hat\ell_n$};
\draw[<-, thick] ($ (M2) + (30:\off) + (300:\Len) $) -- ($ (M2) + (30:\off) + (120:\Len) $) node[midway, right, yshift=3pt]{\footnotesize $\hat\ell_i$};
\draw[<-, thick] ($ (M3) + (150:\off) + (60:\Len) $) -- ($ (M3) + (150:\off) + (240:\Len) $) node[midway, left, yshift=3pt]{\footnotesize $\hat\ell_1$};
\end{tikzpicture}
\caption{Triple cut of triangle generating two-mass triangle relations. The external momenta are all taken to be incoming. The helicity of the three-point amplitude  $A_{n,1}$ at the vertex with external momentum $p_n$ distinguishes the different irreducible components of cut solutions.}
\label{fig:triplecut}
\end{figure}
This results in a one-dimensional variety in loop-momentum space $\mathbb{C}^{1,3}$ that has two irreducible components that can be parametrized as
\begin{equation}\label{eq:cutpar}
 \hat{\ell}_1(z) = \frac{x_{1 i}^2}{x_{1 i}^2-x_{i n}^2} p_n  + z q,
\end{equation}
where $z \in \mathbb{C}$ and
\begin{equation} \label{eq:twocompon}
    q=\lambda_q \tilde\lambda_q, \qquad \mbox{and}~
    \begin{cases}
        \lambda_q = \lambda_n, \quad \tilde\lambda_q=x_{1i}\cdot\lambda_n, \quad \mbox{or} \\
        \lambda_q = x_{1i}\cdot\tilde\lambda_n, \quad \tilde\lambda_q=\tilde\lambda_n.
    \end{cases}
\end{equation}

\textbf{On-shell form of the triple cut of the two-mass triangle.}
We can associate an on-shell form to the triple cut of the two-mass triangle by taking the residue of the full integrand of the amplitude on the cut. This is obtained by multiplying tree amplitudes in the corners and summing over intermediate states. If this were a maximal cut, we would obtain a ``leading singularity'' function, whereas in this case we get a ``next-to-leading singularity''. In our parametrisation of the cut in \cref{eq:cutpar}, the on-shell $1$-form $\omega$ in $z$ is:

\begin{align}\label{eq:triplemain}
\omega(z)=&J \, \frac{\mathrm{d}z}{z} \prod_{r=1,i,n} \mathrm{d}^4 \eta_{\ell_r}A_{i,n}(\hat{\ell}_i(z),p_{i},\dots,p_{n-1},-\hat{\ell}_n(z)) \times \nonumber
\\
& \qquad \qquad A_{n,1}(\hat{\ell}_n(z),p_n,-\hat{\ell}_1(z))A_{1,i}(\hat{\ell}_1(z),p_1,\dots,p_{i-1},-\hat{\ell}_i(z)),
\end{align}
where 
$J=(x_{i n}^2-x_{1 i}^2)^2$ is a Jacobian factor. We omit the dependence of the amplitudes on the Grassmann variables for brevity. Below, we will focus on the second component of \cref{eq:twocompon}. The treatment for the first component is analogous.

\textbf{Singularities of the on-shell form.}
We now study the poles of $\omega(z)$, which are located where some fourth propagator between $j-1$ and $j$ goes on shell, i.e.~where $z=z_j$ such that $\hat\ell_j(z_j)^2=0$.
The solution to this on-shell condition is
\begin{align}
    z_j 
     &= \frac{x^2_{1 j} x_{n i}^2-x_{1 i}^2 x_{n j}^2}{(x^2_{n i}-x^2_{1 i}) \lbrack n | x_{1 i} x_{1 j}|n \rbrack}.
\end{align}
\begin{figure}[ht]
\centering
\begin{subfigure}[h]{0.48\textwidth}
\centering
\begin{tikzpicture}

\def\rblob{0.4}
\def\rext{0.5}

\coordinate (A) at (-1,-1);
\coordinate (B) at (1,-1);
\coordinate (C) at (1,1);
\coordinate (D) at (-1,1);

\draw[thick] (A) -- (B) -- (C) -- (D) -- cycle;

\def\l{0.26}
\coordinate (M1) at ($ (A)!0.5!(B) $);
\draw[dashed, thick, red] ($ (M1) + (0,\l) $) -- ($ (M1) + (0,-\l) $);
\coordinate (M2) at ($ (B)!0.5!(C) $);
\draw[dashed, thick, red] ($ (M2) + (-\l,0) $) -- ($ (M2) + (\l,0) $);
\coordinate (M3) at ($ (C)!0.5!(D) $);
\draw[dashed, thick, red] ($ (M3) + (0,-\l) $) -- ($ (M3) + (0,\l) $);
\coordinate (M4) at ($ (D)!0.5!(A) $);
\draw[dashed, thick, red] ($ (M4) + (\l,0) $) -- ($ (M4) + (-\l,0) $);

\node[circle, draw, fill=white, thick, minimum size=0.8cm,inner sep=0pt] (a) at (A) {\footnotesize $A_{n,1}$};
\node[circle, draw, fill=white, thick, minimum size=0.8cm,inner sep=0pt] (b) at (B) {\footnotesize ${A}_{i,n}$};
\node[circle, draw, fill=white, thick, minimum size=0.8cm,inner sep=0pt] (c) at (C) {\footnotesize $A_{j,i}$};
\node[circle, draw, fill=white, thick, minimum size=0.8cm,inner sep=0pt] (d) at (D) {\footnotesize $A_{1,j}$};


\draw[thick] ($ (A) + (-\rblob,0) $) -- ++(-\rext,0) node[left]{\footnotesize $n$}; 

\draw[thick] ($ (B) + (\rblob,0) $) -- ++(\rext,0) node(c)[right]{\footnotesize $i$}; 
\draw[thick] ($ (B) + (0,-\rblob) $) -- ++(0,-\rext) node(d)[below]{\footnotesize $n-1$}; 
\path (c) -- node[sloped,yshift=4pt]{$\cdots$} (d);

\draw[thick] ($ (C) + (\rblob,0) $) -- ++(\rext,0) node(e)[right]{\footnotesize $i-1$}; 
\draw[thick] ($ (C) + (0,\rblob) $) -- ++(0,\rext) node(f)[above]{\footnotesize $j$}; 
\path (e) -- node[sloped,yshift=-4pt]{$\cdots$} (f);

\draw[thick] ($ (D) + (-\rblob,0) $) -- ++(-\rext,0) node(a)[left]{\footnotesize $1$}; 
\draw[thick] ($ (D) + (0,\rblob) $) -- ++(0,\rext) node(b)[above]{\footnotesize $j-1$}; 
\path (a) -- node[sloped,yshift=-4pt]{$\cdots$} (b);


\def\off{0.5}

\def\Len{0.3}

\draw[<-, thick]($ (A)!0.5!(B) + (0,-\off) - \Len*(1,0) $) -- ($ (A)!0.5!(B) + (0,-\off) + \Len*(1,0) $) node[midway, below]{\footnotesize $\hat{\bar{\ell}}_n$};

\draw[<-, thick]($ (B)!0.5!(C) + (\off,0) - \Len*(0,1) $) -- ($ (B)!0.5!(C) + (\off,0) + \Len*(0,1) $) node[midway, right]{\footnotesize $\hat{\bar{\ell}}_i$};

\draw[->, thick]($ (A)!0.5!(D) + (-\off,0) - \Len*(0,1) $) -- ($ (A)!0.5!(D) + (-\off,0) + \Len*(0,1) $) node[midway, left]{\footnotesize $\hat{\bar{\ell}}_1$};

\draw[->, thick]($ (C)!0.5!(D) + (0,\off) - \Len*(1,0) $) -- ($ (C)!0.5!(D) + (0,\off) + \Len*(1,0) $) node[midway, above]{\footnotesize $\hat{\bar{\ell}}_j$};

\end{tikzpicture}
\caption{$A_{1,i}$-expansion}
\label{fig:Pexp}
\end{subfigure}
\begin{subfigure}[h]{0.48\textwidth}
\centering
\begin{tikzpicture}

\def\rblob{0.4}
\def\rext{0.5}

\coordinate (A) at (-1,-1);
\coordinate (B) at (1,-1);
\coordinate (C) at (1,1);
\coordinate (D) at (-1,1);

\draw[thick] (A) -- (B) -- (C) -- (D) -- cycle;

\def\l{0.26}
\coordinate (M1) at ($ (A)!0.5!(B) $);
\draw[dashed, thick, red] ($ (M1) + (0,\l) $) -- ($ (M1) + (0,-\l) $);
\coordinate (M2) at ($ (B)!0.5!(C) $);
\draw[dashed, thick, red] ($ (M2) + (-\l,0) $) -- ($ (M2) + (\l,0) $);
\coordinate (M3) at ($ (C)!0.5!(D) $);
\draw[dashed, thick, red] ($ (M3) + (0,-\l) $) -- ($ (M3) + (0,\l) $);
\coordinate (M4) at ($ (D)!0.5!(A) $);
\draw[dashed, thick, red] ($ (M4) + (\l,0) $) -- ($ (M4) + (-\l,0) $);

\node[circle, draw, fill=white, thick, minimum size=0.8cm,inner sep=0pt] (a) at (A) {\footnotesize $A_{n,1}$};
\node[circle, draw, fill=white, thick, minimum size=0.8cm,inner sep=0pt] (b) at (B) {\footnotesize $A_{k,n}$};
\node[circle, draw, fill=white, thick, minimum size=0.8cm,inner sep=0pt] (c) at (C) {\footnotesize $A_{i,k}$};
\node[circle, draw, fill=white, thick, minimum size=0.8cm,inner sep=0pt] (d) at (D) {\footnotesize ${A}_{1,i}$};


\draw[thick] ($ (A) + (-\rblob,0) $) -- ++(-\rext,0) node[left]{\footnotesize $n$}; 

\draw[thick] ($ (B) + (\rblob,0) $) -- ++(\rext,0) node(c)[right]{\footnotesize $k$}; 
\draw[thick] ($ (B) + (0,-\rblob) $) -- ++(0,-\rext) node(d)[below]{\footnotesize $n-1$}; 
\path (c) -- node[sloped,yshift=4pt]{$\cdots$} (d);

\draw[thick] ($ (C) + (\rblob,0) $) -- ++(\rext,0) node(e)[right]{\footnotesize $k-1$}; 
\draw[thick] ($ (C) + (0,\rblob) $) -- ++(0,\rext) node(f)[above]{\footnotesize $i$}; 
\path (e) -- node[sloped,yshift=-4pt]{$\cdots$} (f);

\draw[thick] ($ (D) + (-\rblob,0) $) -- ++(-\rext,0) node(a)[left]{\footnotesize $1$}; 
\draw[thick] ($ (D) + (0,\rblob) $) -- ++(0,\rext) node(b)[above]{\footnotesize $i-1$}; 
\path (a) -- node[sloped,yshift=-4pt]{$\cdots$} (b);


\def\off{0.5}

\def\Len{0.3}

\draw[<-, thick]($ (A)!0.5!(B) + (0,-\off) - \Len*(1,0) $) -- ($ (A)!0.5!(B) + (0,-\off) + \Len*(1,0) $) node[midway, below]{\footnotesize $\hat{\bar{\ell}}_n$};

\draw[<-, thick]($ (B)!0.5!(C) + (\off,0) - \Len*(0,1) $) -- ($ (B)!0.5!(C) + (\off,0) + \Len*(0,1) $) node[midway, right]{\footnotesize $\hat{\bar{\ell}}_k$};

\draw[->, thick]($ (A)!0.5!(D) + (-\off,0) - \Len*(0,1) $) -- ($ (A)!0.5!(D) + (-\off,0) + \Len*(0,1) $) node[midway, left]{\footnotesize $\hat{\bar{\ell}}_1$};

\draw[->, thick]($ (C)!0.5!(D) + (0,\off) - \Len*(1,0) $) -- ($ (C)!0.5!(D) + (0,\off) + \Len*(1,0) $) node[midway, above]{\footnotesize $\hat{\bar{\ell}}_i$};

\end{tikzpicture}
\caption{$A_{i,n}$-expansion}
\label{fig:Qexp}
\end{subfigure}
\caption{Quadruple cuts. In case (a) we consider partitions of the external legs in $A_{1,i}$, so that $\hat{\bar{\ell}}_j^2=0$ where $1<j<i$, while in case (b) we consider partitions of the external legs in $A_{i,n}$, so that $\hat{\bar{\ell}}_k^2=0$ where $i<k<n$.}
\label{fig:quadruplecuts}
\end{figure}
\textbf{Residue theorem for the on-shell form.}
By arguments given in \cref{ref:app_2mt_relations},
the apparent pole at $z=0$ of the on-shell form $\omega$ is removable, and the residue at infinity vanishes.   Therefore the sum of the residues at all $z=z_j$ must be zero.\footnote{The triple cut of a two-mass triangle configuration must vanish on purely kinematic grounds, based on the vanishing of the minor Cayley determinant \cite{Abreu:2017ptx}.}
We apply the residue theorem to $\omega(z)$ and distinguish the cases where the cut propagator is located within $A_{1,i}$ and $A_{i,n}$, rewriting $j \to k $ in the latter case. See  \cref{fig:quadruplecuts}, where the combination of a hat and a bar is used for momenta fully constrained by the quadruple cut. We obtain
\begin{align}
0 =&    \left(\sum_{j=2}^{i-1} {\rm Res}_{z=z_j}+\sum_{k=i+1}^{n-1} {\rm Res}_{z=z_k} \right) \omega(z) \\
=& \sum_{j=2}^{i-1}
\frac{1}{x^2_{1 j} x_{n i}^2-x_{1 i}^2 x_{n j}^2} \int  \prod_{r=1,j,i,n}\mathrm{d}^4 \eta_{\hat{\ell}_r} A_{1,j}(z_j) A_{j,i}(z_j) A_{i,n}(z_j)A_{n,1}(z_j) \nonumber\\&+\sum_{k=i+1}^{n-1} \frac{1}{x^2_{1 k} x_{n i}^2-x_{1 i}^2 x_{n k}^2} \int  \prod_{r=1,i,k,n}\mathrm{d}^4 \eta_{\hat{\ell}_r}  A_{1,i}(z_k) A_{i,k}(z_k)A_{k,n}(z_k) A_{n,1}(z_k) 
\end{align}
The individual terms are the leading singularities of \cref{eq:maxcutsusyL} associated to the quadruple cuts on one of the components $\ell_*^+$. This is the solution of the maximal cut of propagators and the $+$ indicates we are on one of the two components. Noting the sign conventions of \cref{eq:sqrt-rho}, we obtain the result 
\begin{eqnarray}
    \sum_{j=2}^{i-1} {\rm LeS}_{1,j,i,n}^+[A^{(1)}_n]= \sum_{k=i+1}^{n-1}{\rm LeS}_{1,i,k,n}^+[A^{(1)}_n]
\end{eqnarray}
By considering the other component of the two-mass triangle cut, we obtain an analogous statement for the other solution $\ell^*_{-}$.
\end{proof}
Adding the relations for the two solutions gives the two-mass triangle relations of \cref{eq:2mrelation}.
\subsection{Relations from amplituhedron geometry}
\label{sec:geomint}

We now give a geometric interpretation of the two-mass triangle
relations of \cref{prop:2mtrls}. The residue-theoretic proof given in
\cref{sec:sketchproof} is independent of the amplituhedron; the purpose
of this subsection is instead to explain why the two sides of the
relation should arise geometrically as two different dissections of the
same projected cut region.

\paragraph{Projected cut geometries.}
Let $\mathcal A^{(L)}_{n,k}$ denote the $L$-loop amplituhedron
\cite{ArkaniHamedTrnka:2013,ArkaniHamedHodgesTrnka:2014}. Its canonical
form encodes the planar $\mathcal N=4$ SYM loop integrand; after the
standard extraction of the $Y$-measure, one obtains the usual
$4L$-form in the loop variables. A point of
$\mathcal A^{(L)}_{n,k}$ is written as
$
  (Y,\mathcal L_1,\ldots,\mathcal L_L),
$
with $Y\in\Gr(k,k+4)$ and
$\mathcal L_\ell\in\Gr(2,k+4)$ obtained from positive data
$(C,D_1,\ldots,D_L;Z)$. There is a natural projection to the tree
amplituhedron,
\begin{equation}
  \pi:\mathcal A^{(L)}_{n,k}\longrightarrow
  \mathcal A^{(0)}_{n,k},
  \qquad
  \pi(Y,\mathcal L_1,\ldots,\mathcal L_L)=Y .
\end{equation}

Let $\mathcal C$ be a cut variety defined by equations of the form
\begin{equation}
  \langle Y\,\mathcal L_\ell\,i{-}1\,i\rangle=0,
  \qquad
  \langle
  Y\,\mathcal L_{\ell_1}\,\mathcal L_{\ell_2}
  \rangle=0 .
\end{equation}
For physical propagator cuts these equations describe boundary
conditions of the loop amplituhedron
\cite{ArkaniHamedTrnka:2013Into,LangerSrikant:2019}. We denote the
corresponding cut geometry by
$
  \mathcal A_{\mathcal C}
  :=
  \mathcal A^{(L)}_{n,k}\cap\mathcal C .
$
If the cut is reducible, we will work with an individual irreducible
component $\mathcal B\subset\mathcal C$ and write
\begin{equation}
  \mathcal A_{\mathcal B}
  :=
  \mathcal A^{(L)}_{n,k}\cap\mathcal B,
  \qquad
  \mathcal T_{\mathcal B}
  :=
  \pi(\mathcal A_{\mathcal B})
  \subseteq \mathcal A^{(0)}_{n,k}.
\end{equation}

It is useful to formulate the relation to leading singularities at the
level of differential forms. Let
\begin{equation}
  \omega_{\mathcal B}
  :=
  \underset{\mathcal B}{\operatorname{Res}}\,
  \Omega(\mathcal A^{(L)}_{n,k})
\end{equation}
denote the residue of the loop-amplituhedron form on the chosen cut
component. If $\mathcal B_m$ is a maximal-cut component, there are no
remaining loop differentials and the pushforward to $Y$ gives the
corresponding leading singularity,
\begin{equation}
  \pi_*\omega_{\mathcal B_m}
  =
  {\rm LeS}_{\mathcal B_m}.
\end{equation}
Equivalently, after resolving the tree amplitudes at the corners into
on-shell diagrams, the leading singularity can be written as a sum of
canonical forms associated with positroid cells
\cite{ArkaniHamedBourjailyCachazoGoncharovPostnikovTrnka:2012}.
Whenever the relevant projected image is itself a positive geometry and
the map has the appropriate degree, this pushforward may be identified
with its canonical form.

For a non-maximal cut, such an identification is not automatic. The cut
still has positive-dimensional fibers in loop space, and the
set-theoretic projection $\mathcal T_{\mathcal B}$ need not
\emph{a priori} come equipped with a positive-geometry structure.
The geometric proposal in this section is that, for the two-mass
triangle cuts considered below, each projected branch
$\mathcal T_{\mathcal B}$ is a positive geometry which admits
dissections by projected maximal-cut pieces.

By a \emph{dissection} we mean a decomposition into pieces whose
interiors are pairwise disjoint and whose union is dense in the full geometry. Canonical forms are
additive under such decompositions
\cite{ArkaniHamedBaiLam:2017,KarpWilliamsZhang:2017}. Thus, if a
collection of maximal-cut components $\{\mathcal B_m\}_{m\in I}$
dissects $\mathcal T_{\mathcal B}$,
\begin{equation}
  \mathcal T_{\mathcal B}
  =
  \bigcup_{m\in I}
  \pi(\mathcal A_{\mathcal B_m}),
\end{equation}
then
\begin{equation}
  \Omega(\mathcal T_{\mathcal B})
  =
  \sum_{m\in I}
  \Omega\!\left(\pi(\mathcal A_{\mathcal B_m})\right)
  =
  \sum_{m\in I}
  {\rm LeS}_{\mathcal B_m}.
\end{equation}
Consequently, two dissections of the same projected cut geometry give
an identity among leading singularities.

This picture is structurally related to the fiber and
Jeffrey--Kirwan-residue descriptions of amplituhedron canonical forms
developed in
\cite{Ferro:2018vpf,Mohammadi:2020plf}. In those constructions a
canonical form is recovered from a distinguished sum of residues of a
form on the fibers of the amplituhedron map. Here the projection is a
different one---from a loop cut geometry to the tree amplituhedron---but
the underlying mechanism is similar: different residue decompositions
encode different geometric decompositions of the same projected object.

\paragraph{The two-mass triangle cut.}
We now specialize to $L=1$ and consider the triple cut associated with a
two-mass triangle whose massless corner carries momentum $p_n$. In
momentum-twistor notation,
\begin{equation}
\label{eq:triple-cut-amplituhedron}
  \langle Y\,\mathcal L\,n\,1\rangle
  =
  \langle Y\,\mathcal L\,i{-}1\,i\rangle
  =
  \langle Y\,\mathcal L\,n{-}1\,n\rangle
  =
  0 .
\end{equation}
We denote the corresponding cut by $\mathcal C_i$. At fixed external
kinematics, the triple cut leaves one complex parameter unfixed. The cut
has two components, corresponding to the two three-point solutions at
the massless corner, or equivalently to the two black/white choices for
the trivalent on-shell vertex. We denote them by
$\mathcal C_i^\pm$ and define
\begin{equation}
\label{eq:def-T-pm}
  \mathcal T^{i,\pm}_{n,k}
  :=
  \pi\!\left(
  \mathcal A^{(1)}_{n,k}\cap\mathcal C_i^\pm
  \right)
  \subseteq
  \mathcal A^{(0)}_{n,k}.
\end{equation}

A maximal cut compatible with the triple cut of \cref{eq:triple-cut-amplituhedron} is
obtained by cutting one additional propagator. There are two natural
families, depending on which massive corner is factorized. Splitting the
corner containing the particles in the cyclic interval $[1,i{-}1]$ gives
$\mathcal C^{\pm}_{1,j,i,n}$, with $1< j<i$, whereas splitting the opposite massive corner gives $ \mathcal C^{\pm}_{1,i,r,n}$, with $i< r < n$.
These are precisely the two families of poles of the one-dimensional
cut form in \cref{sec:sketchproof}.

The geometric statement suggested by the residue theorem is therefore
the following.

\begin{conjecture}
\label{conj:T-dissection}
For each branch $\sigma=\pm$, the projected cut region
$\mathcal T^{i,\sigma}_{n,k}$ is a positive geometry and admits the two
dissections
\begin{equation}
\label{eq:T-two-dissections}
  \mathcal T^{i,\sigma}_{n,k}
  =
  \bigcup_{j=2}^{i-1}
  \pi\!\left(
  \mathcal A^{(1)}_{n,k}\cap
  \mathcal C^{\sigma}_{1,j,i,n}
  \right)
 =
  \bigcup_{r=i+1}^{n-1}
  \pi\!\left(
  \mathcal A^{(1)}_{n,k}\cap
  \mathcal C^{\sigma}_{1,i,r,n}
  \right),
\end{equation}
where in each union the interiors of the projected pieces are pairwise
disjoint.
Equivalently, the two families of plabic graphs shown in
\cref{fig:General_T} describe two dissections of the same projected
positive geometry.
\end{conjecture}

The combinatorial T-duality operation on plabic graphs used in
\cref{fig:General_T} is the one described in
\cite{ParisiShermanBennettWilliams:2021}. The relation between positroid
cells, on-shell diagrams, and amplituhedron dissections is part of the
general positive-Grassmannian/amplituhedron framework
\cite{ArkaniHamedBourjailyCachazoGoncharovPostnikovTrnka:2012,
KarpWilliamsZhang:2017,EvenZoharLakrecTessler:2021}.

Taking canonical forms in \cref{eq:T-two-dissections} gives
\begin{equation}
\label{eq:T-canonical-relation}
  \sum_{j=2}^{i-1}
  {\rm LeS}^{\sigma}_{1,j,i,n}[A^{(1)}_n]
  =
  \sum_{r=i+1}^{n-1}
  {\rm LeS}^{\sigma}_{1,i,r,n}[A^{(1)}_n],
\end{equation}
which is exactly the componentwise two-mass triangle relation of
\cref{prop:2mtrls}.

The residue theorem of \cref{sec:sketchproof} gives an algebraic proof
of \cref{eq:T-canonical-relation} without assuming
\cref{conj:T-dissection}. Indeed, on either component
$\mathcal C_i^\sigma$, the triple-cut form is a meromorphic one-form in
the residual parameter $z$. Its poles are divided into the two families
above: factorization poles of the first massive corner produce the
leading singularities on the left-hand side of
\cref{eq:T-canonical-relation}, while factorization poles of the second
massive corner produce those on the right-hand side. Since there are no
additional residues at $z=0$ or $z=\infty$, the global residue theorem
equates the two sums. The geometric conjecture asserts that this
partition of the poles is the canonical-form shadow of two genuine
dissections of $\mathcal T^{i,\sigma}_{n,k}$.

It is important here to work branch by branch. The two components
$\mathcal C_i^+$ and $\mathcal C_i^-$ give different cut forms and, in
general, different projected regions
$\mathcal T^{i,+}_{n,k}$ and $\mathcal T^{i,-}_{n,k}$. Thus
\cref{prop:2mtrls} should be interpreted geometrically as an equality of
canonical forms associated with two dissections of each branch
separately, rather than as a statement only after the two cut solutions
have been summed.

\begin{figure}[htbp]
  \centering


\begin{tikzpicture}

\def\rblob{0.32}
\def\rsmall{0.10}
\def\rext{0.5}
\def\l{0.26}

\node at (-3.4,0) {$\displaystyle\sum_{k_1,k_2,k_3}\sum_{j=2}^{i-1}$};
\node at (2,0) {\Huge \(=\)};
\node at (3.3,0) {$\displaystyle\sum_{k_1,k_2,k_3}\sum_{r=i+1}^{n-1}$};

\begin{scope}[shift={(5.7,0)}]

\coordinate (A) at (-1,-1);
\coordinate (B) at ( 1,-1);
\coordinate (C) at ( 1, 1);
\coordinate (D) at (-1, 1);

\draw[thick] (A)--(B)--(C)--(D)--cycle;

\node[circle,draw,fill=white,thick,minimum size=0.64cm,inner sep=0pt] at (A) {\footnotesize $k_3$};
\node[circle,draw,fill=white,thick,minimum size=0.64cm,inner sep=0pt] at (B) {\footnotesize $k_2$};
\node[circle,draw,fill=white,thick,minimum size=0.64cm,inner sep=0pt] at (C) {\footnotesize $k_1$};
\node[circle,draw,fill=white,thick,minimum size=0.20cm,inner sep=0pt] at (D) {};

\draw[thick] ($(A)+(-\rblob,0)$)--++(-\rext,0) node(a1)[left] {\footnotesize $n-1$};
\draw[thick] ($(A)+(0,-\rblob)$)--++(0,-\rext) node(a2)[below] {\footnotesize $r$};
\path (a1)--node[sloped,yshift=4pt]{$\cdots$}(a2);

\draw[thick] ($(B)+(\rblob,0)$)--++(\rext,0) node(b1)[right] {\footnotesize $i$};
\draw[thick] ($(B)+(0,-\rblob)$)--++(0,-\rext) node(b2)[below] {\footnotesize $r-1$};
\path (b1)--node[sloped,yshift=4pt]{$\cdots$}(b2);

\draw[thick] ($(C)+(\rblob,0)$)--++(\rext,0) node(c1)[right] {\footnotesize $i-1$};
\draw[thick] ($(C)+(0,\rblob)$)--++(0,\rext) node(c2)[above] {\footnotesize $1$};
\path (c1)--node[sloped,yshift=-4pt]{$\cdots$}(c2);

\draw[thick] ($(D)+(135:\rsmall)$)--++(135:\rext) node[above left] {\footnotesize $n$};

\end{scope}

\begin{scope}[shift={(-1.0,0)}]

\coordinate (A) at (-1,-1);
\coordinate (B) at ( 1,-1);
\coordinate (C) at ( 1, 1);
\coordinate (D) at (-1, 1);

\draw[thick] (A)--(B)--(C)--(D)--cycle;

\node[circle,draw,fill=white,thick,minimum size=0.64cm,inner sep=0pt] at (A) {\footnotesize $k_3$};
\node[circle,draw,fill=white,thick,minimum size=0.64cm,inner sep=0pt] at (B) {\footnotesize $k_2$};
\node[circle,draw,fill=white,thick,minimum size=0.64cm,inner sep=0pt] at (C) {\footnotesize $k_1$};
\node[circle,draw,fill=white,thick,minimum size=0.20cm,inner sep=0pt] at (D) {};

\draw[thick] ($(A)+(-\rblob,0)$)--++(-\rext,0) node(a1)[left] {\footnotesize $n-1$};
\draw[thick] ($(A)+(0,-\rblob)$)--++(0,-\rext) node(a2)[below] {\footnotesize $i$};
\path (a1)--node[sloped,yshift=4pt]{$\cdots$}(a2);

\draw[thick] ($(B)+(\rblob,0)$)--++(\rext,0) node(b1)[right] {\footnotesize $j$};
\draw[thick] ($(B)+(0,-\rblob)$)--++(0,-\rext) node(b2)[below] {\footnotesize $i-1$};
\path (b1)--node[sloped,yshift=4pt]{$\cdots$}(b2);

\draw[thick] ($(C)+(\rblob,0)$)--++(\rext,0) node(c1)[right] {\footnotesize $j-1$};
\draw[thick] ($(C)+(0,\rblob)$)--++(0,\rext) node(c2)[above] {\footnotesize $1$};
\path (c1)--node[sloped,yshift=-4pt]{$\cdots$}(c2);

\draw[thick] ($(D)+(135:\rsmall)$)--++(135:\rext) node[above left] {\footnotesize $n$};

\end{scope}

\end{tikzpicture}

  \caption{T-dual plabic graphs labeling the two conjectural
  dissections of $\mathcal T^{i,-}_{n,k}$. The helicity degrees satisfy
  $k_1+k_2+k_3=k-1$. The indices $j$ and $r$ range over the values for which each
  vertex $v$ is either trivalent or satisfies
  $0\leq k_v\leq n_v-4$, where $n_v$ and $k_v$ denote respectively the
  degree and helicity degree of the vertex. For a trivalent white vertex
  we use the convention $k_v=-1$.}
  \label{fig:General_T}
\end{figure}

\begin{remark}[Helicity refinement]
\label{rem:helicity-refinement}
The projected geometry of the white branch admits a further
decomposition according to the helicity degrees carried by its two
massive blobs. Let
$\mathcal T^{i,-}_{n;k_a,k_b}$ denote the projected region represented
by the unresolved triangle graph $(w;k_a,k_b)$, where $w$ is the
trivalent white vertex and $k_a,k_b$ are the helicity degrees of the
two massive blobs. Since $k_w=-1$, the total helicity constraint is
$k_a+k_b=k$, and
\begin{equation}
\label{eq:T-helicity-decomposition}
  \mathcal T^{i,-}_{n,k}
  =
  \bigcup_{k_a+k_b=k}
  \mathcal T^{i,-}_{n;k_a,k_b}.
\end{equation}

For each fixed pair $(k_a,k_b)$, the two proposed dissections restrict
to the same subregion $\mathcal T^{i,-}_{n;k_a,k_b}$. Let
$\mathcal G^{(a)}_j(w;k_{a,1},k_{a,2},k_b)$ denote the projected
geometry represented by the box graph obtained by splitting the
$a$-blob at $j$, and define
$\mathcal G^{(b)}_r(w;k_a,k_{b,1},k_{b,2})$ analogously. The refined
geometric statement is
\begin{equation}
\label{eq:T-fixed-helicity-dissections}
  \mathcal T^{i,-}_{n;k_a,k_b}
  =
  \bigcup_{\substack{2\leq j\leq i-1\\
                     k_{a,1}+k_{a,2}=k_a-1}}
  \mathcal G^{(a)}_j
  (w;k_{a,1},k_{a,2},k_b)
  =
  \bigcup_{\substack{i+1\leq r\leq n-1\\
                     k_{b,1}+k_{b,2}=k_b-1}}
  \mathcal G^{(b)}_r
  (w;k_a,k_{b,1},k_{b,2}).
\end{equation}
The first dissection is obtained by triangulating the $a$-blob,
whereas the second is obtained by triangulating the $b$-blob. Taking
canonical forms therefore gives, for every admissible pair
$(k_a,k_b)$,
\begin{equation}
\label{eq:T-fixed-helicity-relation}
\sum_{\substack{2\leq j\leq i-1\\
                 k_{a,1}+k_{a,2}=k_a-1}}
\Omega\!\left(
  \mathcal G^{(a)}_j
  (w;k_{a,1},k_{a,2},k_b)
\right)
 =
\sum_{\substack{i+1\leq r\leq n-1\\
                 k_{b,1}+k_{b,2}=k_b-1}}
\Omega\!\left(
  \mathcal G^{(b)}_r
  (w;k_a,k_{b,1},k_{b,2})
\right),
\end{equation}
where only admissible helicity assignments are included. Summing these
identities over $k_a+k_b=k$ recovers the full two-mass triangle
relation on the $-$ branch. At the endpoints, we retain the conventions
$k_v=-1$ for a trivalent white vertex and $k_v=0$ for a trivalent
black vertex.
\end{remark}

\paragraph{The NMHV case.}
For $k=1$, the tree amplituhedron is the cyclic $4$-polytope
$C(n,4)$
\cite{ArkaniHamedTrnka:2013,ArkaniHamedHodgesTrnka:2014}.
For a cyclic interval $[a,b]$, write $P[a,b]
  :=
  \operatorname{conv}
  \{Z_a,Z_{a+1},\ldots,Z_b\}$,
where the convex hull is taken in a positive affine chart, and set $\Delta_{abcde}
  :=
  \operatorname{conv}
  \{Z_a,Z_b,Z_c,Z_d,Z_e\}$.

After extracting the $Y$-measure, their canonical forms are
\begin{equation}
\label{eq:NMHV-polytope-canonical-forms}
  \Omega(P[a,b])
  =
  A_{\rm NMHV}^{(0)}([a,b]),
  \qquad
  \Omega(\Delta_{abcde})
  =
  [a,b,c,d,e].
\end{equation}

Recall from \cref{eq:boxcoef} that
\begin{equation}
  c_{abcd}
  =
  \frac{1}{2}
  \left(
    {\rm LeS}^{+}_{abcd}
    +
    {\rm LeS}^{-}_{abcd}
  \right).
\end{equation}
We therefore use the single-branch contributions
\begin{equation}
\label{eq:def-branch-contribution-NMHV}
  \mathscr L^\sigma_{abcd}
  :=
  \frac{1}{2}\,
  {\rm LeS}^\sigma_{abcd},
  \qquad
  \sigma=\pm,
\end{equation}
so that
$c_{abcd}=\mathscr L^+_{abcd}+\mathscr L^-_{abcd}$.
We choose the branch labels so that the generic three-mass
contributions lie on the $-$ branch.

Following the standard momentum-twistor convention
\cite[Eq.~(1.1)]{ArkaniHamedTrnka:2013}, amplitudes, leading
singularities, and on-shell functions written in momentum-twistor
variables are understood with the universal tree-level MHV
superamplitude stripped off. Thus
$A^{(0)}_{n,\mathrm{MHV}}=1$ in momentum-twistor space, and the same
normalization is understood for box coefficients written in terms of
Yangian invariants.

\begin{theorem}[NMHV projected-cut geometry]
\label{thm:NMHV-projected-cut}
Let $Z$ be generic positive external data and suppose
$4\leq i\leq n-3$. Then the two dissection identities of
\cref{conj:T-dissection} hold for $k=1$. More precisely,
\begin{equation}
\label{eq:NMHV-projected-regions}
  \mathcal T^{i,+}_{n,1}
  =
  \Delta_{n-1,n,1,i-1,i},
  \qquad
  \mathcal T^{i,-}_{n,1}
  =
  P[n,i]\cup P[i-1,n],
\end{equation}
where the two polytopes in the second union have disjoint interiors.

Furthermore, the $-$ region has the two dissections
\begin{align}
\label{eq:NMHV-polytope-dissections}
  \mathcal T^{i,-}_{n,1}
  &=
  P[n,i-1]
  \cup
  \bigcup_{j=2}^{i-2}
  \Delta_{n,j-1,j,i-1,i}\cup P[i-1,n]
  \nonumber\\
  &=
  P[n,i]\cup P[i,n]
  \cup
  \bigcup_{r=i+2}^{n-1}
  \Delta_{n,i-1,i,r-1,r},
\end{align}
where the first line is the dissection obtained by factorizing the first massive
corner corresponding to dissecting $P[n,i]$, while the second is obtained by factorizing the opposite
massive corner corresponding to dissecting $P[i-1,n]$.

On the $+$ branch, the two families reduce respectively to the cuts
with $j=2$ and $r=n-1$, and both project to the same simplex
$\Delta_{n-1,n,1,i-1,i}$.
\end{theorem}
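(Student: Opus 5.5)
The plan is to work directly in the one-loop NMHV amplituhedron and parametrize the line $\mathcal L$ on each component of the triple cut \cref{eq:triple-cut-amplituhedron}. For $k=1$ we write $\mathcal L=(Y\,A\,B)$ modulo $Y$. The conditions $\langle Y\mathcal L\,n{-}1\,n\rangle=\langle Y\mathcal L\,n\,1\rangle=0$ have two families of solutions. In the first, $\mathcal L$ passes through (the projection of) $Z_n$. In the second, $\mathcal L$ lies in the plane $(n{-}1\,n\,1)$. The third condition, $\langle Y\mathcal L\,i{-}1\,i\rangle=0$, then leaves one free parameter. The first family is $\mathcal L=(n,\,\alpha Z_{i-1}+\beta Z_i)$, and this will be the $-$ (white, generic three-mass) branch. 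The second family is $\mathcal L$ equal to the line through $(n{-}1\,n\,1)\cap(i{-}1\,i)$ inside that plane, and this will be the $+$ branch.

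For each branch we then impose positivity of the loop amplituhedron. We use the standard one-loop NMHV description: $(C,D)$ is positive in the extended sense, or equivalently the sign-flip characterization with $\langle Y\mathcal L\,a\,a{+}1\rangle>0$ and the appropriate sign-flip count of $\langle Y\mathcal L\,1\,a\rangle$, together with the tree sign flips for $Y$. We determine exactly which $Y\in C(n,4)$ admit an $\mathcal L$ on the cut component that lies in the closure of the amplituhedron.

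On the $-$ branch the line $\mathcal L$ meets $Z_n$ and the segment $[Z_{i-1},Z_i]$. The positivity conditions on the boundary then reduce to requiring that $Y$ lie in the convex hull of $Z_n$, a point of that segment, and the data on one side or the other. This should give $P[n,i]\cup P[i-1,n]$. The two pieces share only the facet spanned by $\{Z_{i-1},Z_i,Z_n\}$ plus lower-dimensional data, hence have disjoint interiors. On the $+$ branch the line is rigidly tied to the intersection point of $(n{-}1\,n\,1)$ and $(i{-}1\,i)$, and we expect the projection to be the simplex $\Delta_{n-1,n,1,i-1,i}$.

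Next we establish the dissections \cref{eq:NMHV-polytope-dissections} by pure polytope geometry. We triangulate $P[n,i]=\mathrm{conv}\{Z_n,Z_1,\dots,Z_i\}$ by pulling from $Z_n$ and $Z_{i-1},Z_i$. That is, $P[n,i]=P[n,i-1]\cup\bigcup_{j=2}^{i-2}\Delta_{n,j-1,j,i-1,i}$, which is valid by cyclicity of $C(n,4)$ via Gale's evenness: the shared faces are facets and the interiors are disjoint. Similarly we dissect $P[i-1,n]=P[i,n]\cup\bigcup_{r=i+2}^{n-1}\Delta_{n,i-1,i,r-1,r}$.

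We then identify each piece with a projected maximal-cut region. Adding the fourth propagator $\langle Y\mathcal L\,j{-}1\,j\rangle=0$ fixes $\mathcal L=(n,\,(n\,i{-}1\,i)\cap(j{-}1\,j))$-type lines. Its projection is the simplex $\Delta_{n,j-1,j,i-1,i}$, consistent with ${\rm LeS}$ equal to an $R$-invariant by \cref{eq:NMHV-polytope-canonical-forms}. The endpoint cuts $j=2$ and $r=n-1$, where a corner degenerates to an MHV tree, project to $P[n,i-1]$ and $P[i,n]$, whose canonical forms are lower-point NMHV tree amplitudes, as expected for two-mass-hard and one-mass-type boxes.

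On the $+$ branch we check that only $j=2$ and $r=n-1$ survive. For other $j,r$ the extra propagator forces degenerate $\mathcal L$ outside the closure. Both surviving cuts give the same simplex.

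The main obstacle is the first step: showing that the set-theoretic projection is exactly these polytopes. This requires both inclusions. For $\supseteq$ we must construct, for every $Y$ in the polytope, a positive $(C,D)$ realizing it, which should be explicit from convex-combination coefficients. For $\subseteq$ we must show that positivity of the $2\times 2$ and $3\times 3$ minors forces $Y$ into the hull. We would try first to use the sign-flip characterization on the cut, where $\langle Y\mathcal L\,a\,b\rangle$ factorizes through $\langle Y\,n\,Z_*\,a\,b\rangle$ with $Z_*\in[Z_{i-1},Z_i]$. This turns membership into sign conditions on $\langle Y\,n\,Z_*\,a\,a{+}1\rangle$, which are recognizable as facet inequalities of the stated polytopes.
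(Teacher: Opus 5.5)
Your polytope step is the same as the paper's: coning the facets of $P[n,i-1]$ visible from $Z_i$, and of $P[i,n]$ visible from $Z_{i-1}$, via Gale evenness. For the projected regions you take a different route, a direct positivity analysis in the loop amplituhedron. The paper instead reads them off from the boundary-measurement maps of the admissible triangle on-shell diagrams: helicity counting gives rank-one cells with supports $[n,i]$ and $[i-1,n]$ on the $-$ branch, and $\{n-1,n,1,i-1,i\}$ on the $+$ branch.

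\textbf{The endpoint bookkeeping is wrong.} The genuine gap is your identification of which maximal cuts produce which pieces. This is exactly the part of the statement that goes beyond pure polytope geometry.
\begin{itemize}
\item The cut $j=2$ on the $-$ branch is the two-mass-hard box $F_{1,2,i,n}$. Its massless corners are $n$ (white) and $1$, which must be black, with $k_v=0$. The constraint $k_1+k_2+k_3=k-1=0$ then forces both massive corners to be MHV. So this cut gives only $[n,1,2,i-1,i]$, i.e.\ the simplex $\Delta_{n,1,2,i-1,i}$, which you already counted as the $j=2$ member of your simplex family.
\item Likewise, $r=n-1$ gives only $\Delta_{n,i-1,i,n-2,n-1}$.
\item The lower-point tree polytopes come instead from the two-mass-easy cuts $j=i-1$ and $r=i+1$. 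For $j=i-1$ the fourth propagator is $\langle Y\mathcal L\,i{-}2\,i{-}1\rangle=\beta\langle Y\,n\,i\,i{-}2\,i{-}1\rangle$, which forces $\mathcal L=(n\,i{-}1)$. This line passes through $Z_{i-1}$, so the vertex at $i-1$ is also white. Hence $k_a+k_b=1$, and the two helicity assignments give both $P[n,i-1]$ and $P[i-1,n]$. This matches $c^{\rm 2me}=A^{(0)}_{\rm NMHV}([n,i-1])+A^{(0)}_{\rm NMHV}([i-1,n])$.
\item Symmetrically, $r=i+1$ forces $\mathcal L=(n\,i)$ and gives $P[n,i]$ and $P[i,n]$.
\end{itemize}
As written, your assignment uses $j=2$ and $r=n-1$ twice and never uses $j=i-1$ or $r=i+1$. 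It also attributes no cut to $P[i-1,n]$ in the first line or to $P[n,i]$ in the second. So it does not show that $\bigcup_j\pi(\mathcal A^{(1)}_{n,1}\cap\mathcal C^-_{1,j,i,n})$ and $\bigcup_r\pi(\mathcal A^{(1)}_{n,1}\cap\mathcal C^-_{1,i,r,n})$ reproduce the two dissections. Also, one-mass boxes play no role for $4\leq i\leq n-3$.

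\textbf{The projection step is still only a plan.} You flag this yourself as the main obstacle. Since $D\mapsto DZ$ has a kernel for $n>5$, the inclusion $\mathcal T^{i,-}_{n,1}\subseteq P[n,i]\cup P[i-1,n]$ cannot be read off from the obvious preimage $D=(\alpha e_{i-1}+\beta e_i;\,e_n)$. On the $+$ branch the pivot point $(i{-}1\,i)\cap(n{-}1\,n\,1)$ depends on $Y$, so positivity there is not a simple support condition. Your expectation that the image is $\Delta_{n-1,n,1,i-1,i}$ still needs proof.

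\textbf{Disjointness of interiors is asserted, not shown.} $P[n,i]\cap P[i-1,n]$ is the triangle $\mathrm{conv}\{Z_n,Z_{i-1},Z_i\}$, which is a $2$-face, not a facet, and you give no reason for disjoint interiors. The paper uses the covector form of Gale evenness to produce a hyperplane through $Z_n,Z_{i-1},Z_i$ with $Z_1,\dots,Z_{i-2}$ strictly on one side and $Z_{i+1},\dots,Z_{n-1}$ strictly on the other.
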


\begin{proof}
For $k=1$, a point of the nonnegative Grassmannian is represented by a
nonnegative row $C=(c_1,\ldots,c_n)$, modulo overall rescaling, and the
amplituhedron map is
\begin{equation}
  Y=CZ=\sum_{a=1}^{n}c_aZ_a.
\end{equation}
Consequently, the closure of the image of a rank-one positroid cell
with support $J\subseteq[n]$ is the cyclic subpolytope
$\operatorname{conv}\{Z_a:a\in J\}$.

We first identify the image of the triple cut. By the helicity
counting described above, for $k=1$ the $-$ branch has precisely the
two assignments $(k_a,k_b)=(1,0)$ and $(0,1)$. The
boundary-measurement maps of the corresponding diagrams give arbitrary
positive coefficients $c_a$ on the supports $[n,i]$ and $[i-1,n]$,
respectively, and hence parametrize the full rank-one positroid cells
with these supports. On the $+$ branch,
the only admissible configuration has both massive blobs MHV, and the
corresponding cell has support $\{n-1,n,1,i-1,i\}$. Including the
nonnegative boundary limits, the map above therefore sends these cells
onto $P[n,i]$, $P[i-1,n]$, and
$\Delta_{n-1,n,1,i-1,i}$, respectively. This establishes
\cref{eq:NMHV-projected-regions}.

We now resolve either massive corner into maximal cuts. Reading the
supports from the resulting NMHV on-shell diagrams gives the following
projected pieces. On the $-$ branch, the cuts in the first family give
\begin{equation}
  \Delta_{n,j-1,j,i-1,i},
  \qquad
  2\leq j\leq i-2,
\end{equation}
while the two helicity assignments at $j=i-1$ give
$P[n,i-1]$ and $P[i-1,n]$. In the second family, the two assignments
at $r=i+1$ give $P[n,i]$ and $P[i,n]$, while the remaining cuts give
\begin{equation}
  \Delta_{n,i-1,i,r-1,r},
  \qquad
  i+2\leq r\leq n-1.
\end{equation}
After resolving either massive corner on the $+$ branch, the only
nonempty NMHV contributions are the two endpoint cells, and both have
support $\{n-1,n,1,i-1,i\}$. Each of the two families therefore
consists of the single simplex
$\Delta_{n-1,n,1,i-1,i}$.

Having identified the full projected cut, it remains to verify that
resolving either massive corner gives a dissection on the $-$ branch.
The polytope $P[n,i]$ is obtained from $P[n,i-1]$ by adding the vertex
$Z_i$. By Gale's evenness criterion, the facets of $P[n,i-1]$ visible
from $Z_i$ are precisely \cite{Ziegler:1995}
\begin{equation}
  \operatorname{conv}
  \{Z_n,Z_{j-1},Z_j,Z_{i-1}\},
  \qquad
  2\leq j\leq i-2.
\end{equation}
Coning these facets to $Z_i$ gives
\begin{equation}
\label{eq:NMHV-left-polytope-triangulation}
  P[n,i]
  =
  P[n,i-1]
  \cup
  \bigcup_{j=2}^{i-2}
  \Delta_{n,j-1,j,i-1,i},
\end{equation}
with pairwise disjoint interiors. This is precisely the BCFW
triangulation of the $k=1$, $m=4$ tree amplituhedron on the cyclic
interval $[n,i]$, with the BCFW bridge between $i$ and $n$.

Similarly, adding $Z_{i-1}$ to $P[i,n]$ gives
\begin{equation}
\label{eq:NMHV-right-polytope-triangulation}
  P[i-1,n]
  =
  P[i,n]
  \cup
  \bigcup_{r=i+2}^{n-1}
  \Delta_{n,i-1,i,r-1,r},
\end{equation}
again with pairwise disjoint interiors.

Finally, the covector form of Gale's evenness criterion gives a
hyperplane containing
$\operatorname{conv}\{Z_n,Z_{i-1},Z_i\}$ such that
$Z_1,\ldots,Z_{i-2}$ lie strictly on one side and
$Z_{i+1},\ldots,Z_{n-1}$ lie strictly on the other. Therefore
\begin{equation}
\label{eq:NMHV-subpolytope-intersection}
  P[n,i]\cap P[i-1,n]
  =
  \operatorname{conv}\{Z_n,Z_{i-1},Z_i\}.
\end{equation}
In particular, the two polytopes have disjoint interiors. Combining
\cref{eq:NMHV-left-polytope-triangulation,eq:NMHV-right-polytope-triangulation,eq:NMHV-subpolytope-intersection}
proves \cref{eq:NMHV-polytope-dissections}.
\end{proof}

Taking canonical forms in the two cyclic-polytope dissections gives:
\begin{align}
\label{eq:NMHV-cyclic-triangulations}
  A_{\rm NMHV}^{(0)}([n,i])
  &=
  A_{\rm NMHV}^{(0)}([n,i-1])
  +
  \sum_{j=2}^{i-2}
  [n,j-1,j,i-1,i],
  \\
  A_{\rm NMHV}^{(0)}([i-1,n])
  &=
  A_{\rm NMHV}^{(0)}([i,n])
  +
  \sum_{r=i+2}^{n-1}
  [n,i-1,i,r-1,r].
\end{align}
The $-$ branch of the two-mass triangle relation is therefore
\begin{align}
\label{eq:NMHV-minus-relation}
  \sum_{j=2}^{i-1}
  \mathscr L^-_{1,j,i,n}
  &=
  A_{\rm NMHV}^{(0)}([n,i-1])
  +
  \sum_{j=2}^{i-2}
  [n,j-1,j,i-1,i]
 + A_{\rm NMHV}^{(0)}([i-1,n])
  \nonumber\\
  &=
  A_{\rm NMHV}^{(0)}([n,i])
  +
  A_{\rm NMHV}^{(0)}([i-1,n])
  \nonumber\\
  &=
  A_{\rm NMHV}^{(0)}([n,i])
  +
  A_{\rm NMHV}^{(0)}([i,n])
  +
  \sum_{r=i+2}^{n-1}
  [n,i-1,i,r-1,r]
  \nonumber\\
  &=
  \sum_{r=i+1}^{n-1}
  \mathscr L^-_{1,i,r,n}.
\end{align}
On the $+$ branch, both dissections consist of the same simplex, so
\begin{equation}
\label{eq:NMHV-plus-relation}
  \sum_{j=2}^{i-1}
  \mathscr L^+_{1,j,i,n}
  =
  [n-1,n,1,i-1,i]
  =
  \sum_{r=i+1}^{n-1}
  \mathscr L^+_{1,i,r,n}.
\end{equation}
Thus the branchwise two-mass triangle identities are precisely the
canonical-form identities associated with the two dissections in
\cref{eq:NMHV-polytope-dissections}.

The endpoint cases $i=3$ and $i=n-2$ follow by degeneration. In these
cases the coincident endpoint contribution should be treated as a
one-mass box, rather than simultaneously as two different two-mass
endpoint types.

\paragraph{Beyond NMHV.}
The preceding theorem establishes \cref{conj:T-dissection} at NMHV
level. For $k>1$, the tree amplituhedron is no longer an ordinary
convex polytope. In the first case beyond NMHV, namely $k=2$, we checked for $n\leq 9$
and every admissible value of $i$ that the projected pieces within each
proposed dissection in \cref{conj:T-dissection} have pairwise disjoint interiors in the tree
amplituhedron. These computations provide evidence for the geometric
conjecture but do not constitute a proof of the covering property. A
general proof would also require showing that the
two resulting collections cover the same region.

\subsection{Other relations} \label{sec:otherrel}
Let us now discuss the different types of relations among box coefficients. 
Only one-mass, two-mass, and three-mass box coefficients participate in linear relations, and there are a total of $n(n^2-8n+17)/2$ such boxes. In particular, there are 35 such boxes for $n=7$ and 68 for $n=8$.

First of all, the form of the IR divergences given in \cref{eq:generalir} implies linear relations. By equating the coefficients of functions associated to the  different Mandelstam invariants, we derive ${n(n-3)}/{2}$ relations between box coefficients. At order $\epsilon^{-2}$, we conclude in particular that for $n \geq 5$,\footnote{This is stated in \cite{Roiban:2004ix}, while two types of proofs can also be found in the appendix of \cite{Arkani-Hamed:2008owk}. A global residue calculation is presented in \cite{Brodel:2010pma}.}
\begin{align}\label{eq:onemassir}
\sum_{j=0}^{n-4}c^{\text{2mh}}_{i-2,i-1,i,i+j+1}=2A^{(0)}_n.
\end{align}
This also includes one-mass boxes as the corresponding limiting cases. 

Taking into account the $n(n-4)$ two-mass triangle relations, we have a larger set of linear relations between box coefficients. 
The two-mass triangle relations satisfy the extra condition that their sum is zero. This property can be checked explicitly from their definition in \cref{eq:2mrelationRHS}. We also notice that all the IR relations without the tree-level amplitude can be derived from the two-mass triangle ones. The only independent relation comes from \cref{eq:onemassir}. Thus, for $n \geq 7$, we arrive at a set of $n(n-4)$ helicity-independent linear relations among box coefficients. We have checked up to $n=12$ that there are no additional relations among two-mass triangle relations.
This enumeration is fully consistent with the one given in \cite{Brandhuber:2009rel}.

If we refine by helicity, we expect to find many more relations between the box coefficients. By numerically checking the space of all homogeneous linear relations, we find extra ones. Specifically, for $n=8$ we find four additional  independent relations at NMHV, and five at NNMHV. Two of them are common to both helicity cases. 

A conjectured number of independent on-shell superfunctions appearing at arbitrary loop order is given in \cite{Bourjaily:2023uln}, Eq.~K.11, and is equal to $\prod_{j=1}^{k}\binom{n-j}{4}\big/\binom{k-j+4}{4}$. In general, this is not directly comparable to the counting of box coefficients, since they are products of R-invariants. However, in the case of NMHV, the relation between coefficients and R-invariants is linear, so the two spaces can be compared directly. We note that the box coefficients span only the subspace of Yangian invariants realized as residues of one-loop maximal cuts, whereas  Eq.~K.11 of \cite{Bourjaily:2023uln} counts all those appearing at any loop order. This means that the formula of \cite{Bourjaily:2023uln} is an upper bound for the number of independent box coefficients, since the $R$-invariants arising from box maximal cuts have a constrained label structure, so a generic five-bracket is not produced at one loop.
For NMHV, the formula of \cite{Bourjaily:2023uln} simplifies to $\binom{n-1}{4}$. For $n=7$ we find no relations beyond the $21-1=20$ homogeneous ones. Note that the remaining relation is the one involving the tree amplitude, whose right-hand side is nonzero. It therefore fixes the value of a combination of coefficients rather than forcing it to vanish, and so does not reduce the dimension of their span. This leaves $35-20=15$ independent box coefficients, in exact agreement with $\binom{6}{4}=15$. For $n=8$, the four extra NMHV relations give $31+4=35$ homogeneous relations, hence $68-35=33$ independent coefficients, so that the box coefficients span a subspace of codimension two inside the $\binom{7}{4}=35$-dimensional space of R-invariants. 

We note that our analysis also included using the integrability of the symbol \cite{Duhr:2011zq} as a complementary approach to finding relations between box coefficients. Imposing the integrability condition produces a set of linear relations among the coefficients of the various words. However, after using the results of \cref{prop:4distinct,prop:3distinct,prop:repeated}, we find that the relations arising from integrability are not independent of  the IR relations of \cref{sec:otherrel} and the two-mass triangle relations.

\section{Cluster structure of the full one-loop amplitude}
\label{sec:clust}
In this section, we study cluster adjacency for the one-loop amplitudes
considered in this paper. 
The relevant cluster algebra is not only the usual
momentum-twistor cluster algebra of $\Gr(4,n)$
\cite{Scott:2003tq,Drummond:2017ssj}, but rather the
cluster algebra associated to the momentum-twistor flag variety
$\Fl_{2,4;n}$
\cite{GeissLeclercSchroer2008,BossingerLi:2024}.
 This is because, once infrared-divergent terms and
spinor-helicity brackets are included, the amplitude is no longer
expressed solely in terms of dual-conformal four-brackets; it also
contains brackets involving the infinity twistor.

Let $Z_1,\ldots,Z_n\in\mathbb P^3$ be momentum twistors
\cite{Hodges:2009hk,Mason:2009qx}; see also \cref{app:kinvar}. We choose the infinity twistor as $I=\langle A,B\rangle\subset \mathbb P^3$. Equivalently, the data $(I;Z_1,\ldots,Z_n)$ may be embedded into
$\Gr(4,n+2)$ by adjoining two extra columns $A,B$ to the
momentum-twistor matrix.  The cluster algebra of $\Fl_{2,4;n}$ is
obtained from the Grassmannian cluster algebra in a way compatible with
this embedding: the four-brackets $\langle a b c d\rangle$ and the
two-brackets $\langle a b\rangle_I=\langle A B a b\rangle$ are both
cluster coordinates of the flag variety \cite[Eqs.~(1.2), (1.5), Thm.~1.2 and Sec.~5.1]{BossingerLi:2024}.  We will use this embedding only
as a device for checking compatibility.  In particular, the final
statement will be independent of where the two columns spanning $I$ are
placed in the cyclic ordering.

The relation between dual variables and momentum twistors is given in \cref{eq:dualtotwist} as
\begin{equation}
    x_{ij}^2 = \frac{\langle i-1\,i\,j-1\,j\rangle}{\langle i-1,i\rangle_I\langle j-1,j\rangle_I}.
\end{equation}
Thus, the numerator is a dual-conformal four-bracket, while the denominator
contains the infinity twistor.  

We will use the following convention throughout this section.  Whenever
a symbol letter is a product or ratio of irreducible factors, we expand
it multiplicatively; for instance, $(ab)\otimes c=a\otimes c+b\otimes c$
and $(a/b)\otimes c=a\otimes c-b\otimes c$.  After applying
\cref{eq:dualtotwist}, each symbol word is therefore expanded
into words whose letters are irreducible four-brackets, two-brackets, or
other irreducible flag cluster coordinates.

We denote by $\operatorname{Exp}_{4B}\mathcal S[A]$ the part of the
expanded symbol whose letters are all independent of the infinity
twistor, i.e.~the part written only in terms of four-brackets.  
We also restrict to \emph{rational letters}, i.e. letters that are polynomials in the flag coordinates and, in particular, do not involve algebraic
quantities such as square roots.
Similarly,
$\operatorname{Exp}_{I}\mathcal S[A]$ denotes the complementary part,
consisting of symbol words with at least one letter involving $I$.  Thus
$\mathcal S[A]=\operatorname{Exp}_{4B}\mathcal S[A]+
\operatorname{Exp}_{I}\mathcal S[A]+\mathcal{S}^{\mathrm{4m}}$.

For the one-loop amplitudes, we use the standard decomposition
\cite{Drummond:2008vq,Elvang:2009ya}
\begin{equation}
  A^{(1)}_{n,k}
  =
  R^{(1)}_{n,k}\,A^{(0)}_{n,\mathrm{MHV}}
  +
  R^{(0)}_{n,k}\,A^{(1)}_{n,\mathrm{MHV}},
  \label{eq:one-loop-ampl-decomp}
\end{equation}
where $R^{(0)}_{n,k}$ and $R^{(1)}_{n,k}$ are the tree-level and
one-loop ratio functions.  The first term in
\cref{eq:one-loop-ampl-decomp} is dual conformal and contributes only to
the four-bracket part of the expanded symbol.  Therefore the
$I$-dependent part of the full one-loop amplitude comes entirely from the
universal MHV factor:
\begin{equation}
  \operatorname{Exp}_{I}\mathcal S[A^{(1)}_{n,k}]
  =
  R^{(0)}_{n,k}\,
  \operatorname{Exp}_{I}\mathcal S[A^{(1)}_{n,\mathrm{MHV}}].
  \label{eq:exp-I-from-MHV}
\end{equation}

\subsection{Cluster adjacency for rational letters}

We now recall the notion of cluster adjacency used below.  
A weight-two
symbol $\sum_\alpha f_\alpha\,a_\alpha\otimes b_\alpha$ satisfies
\emph{cluster adjacency} in $\Fl_{2,4;n}$ if, after expanding all rational letters
into irreducible flag cluster coordinates, every pair
$\{a_\alpha,b_\alpha\}$ appears in a common cluster of $\Fl_{2,4;n}$.
Frozen variables are compatible with every cluster variable.

We briefly recall the \emph{weak-separation} criterion for Pl\"ucker coordinates. 
Let $S,T\subset [n]$.  We say that $S$ and $T$ are \emph{weakly
separated} if after placing the labels $1,\ldots,n$ on a
circle, the two sets $S\setminus T$ and $T\setminus S$ do not cross.
When $|S|=|T|=k$, weak separation gives the standard compatibility
criterion for Pl\"ucker cluster variables in $\Gr(k,n)$: if $S$ and $T$
are weakly separated, then the Pl\"ucker coordinates $\langle S\rangle$
and $\langle T\rangle$ appear together in a cluster \cite[Thm.~1.6]{OhPostnikovSpeyer2015}.  

We will repeatedly use the following elementary consequences of weak
separation.

\begin{lemma}
\label{lem:basic-compatibility}
The following pairs of letters are compatible cluster variables in
$\Fl_{2,4;n}$.
\begin{enumerate}
  \item Any two identical letters.
  \item Two four-brackets $\langle S\rangle$ and $\langle T\rangle$ in
  $\Gr(4,n)$ whenever the corresponding four-subsets $S,T$ are weakly
  separated.  In particular, this holds if $S$ and $T$ share three
  indices.
  \item A two-bracket $\langle a b\rangle_I$ and a four-bracket
  $\langle S\rangle$ whenever $\{a,b\}\subset S$.
  \item Two consecutive two-brackets $\langle i\,i{+}1\rangle_I$ and
  $\langle j\,j{+}1\rangle_I$.
\end{enumerate}
\end{lemma}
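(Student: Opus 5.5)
The plan is to reduce everything to the embedding of $\Fl_{2,4;n}$ into $\Gr(4,n+2)$ recalled above. The data $(I;Z_1,\ldots,Z_n)$ become a $4\times(n+2)$ matrix with two extra columns $A,B$. Under this embedding a four-bracket $\langle S\rangle$ is the Pl\"ucker coordinate indexed by $S\subset[n]$, and a two-bracket $\langle ab\rangle_I$ is the Pl\"ucker coordinate indexed by $\{A,B,a,b\}$. By \cite{BossingerLi:2024}, cluster variables and clusters of $\Fl_{2,4;n}$ are obtained from those of $\Gr(4,n+2)$ compatibly with the embedding. So it suffices to check that the relevant index sets are weakly separated, for a suitable placement of $A,B$ on the circle, and then invoke \cite[Thm.~1.6]{OhPostnikovSpeyer2015}. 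The lemma only asserts compatibility, so we may choose the placement of $A,B$ case by case.

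\textbf{Items 1 and 2.} Item 1 is trivial: a cluster variable lies in any cluster that contains it, and frozen variables are compatible with everything. For item 2, weakly separated four-subsets of $[n]$ remain weakly separated inside $[n+2]$ wherever $A,B$ are inserted, because $S\setminus T$ and $T\setminus S$ do not change. If $|S\cap T|=3$, then $S\setminus T$ and $T\setminus S$ are single points, and two singletons cannot cross.

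\textbf{Item 3.} Set $S'=\{A,B,a,b\}$ and let $S=\{a,b,c,d\}$. Then $S'\setminus S=\{A,B\}$ and $S\setminus S'=\{c,d\}$. Place $A,B$ adjacent to each other, at a point of the circle lying in one of the arcs between $c$ and $d$. Then the chord $AB$ does not cross the chord $cd$, so the two sets are weakly separated.

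\textbf{Item 4.} Take $S=\{A,B,i,i{+}1\}$ and $T=\{A,B,j,j{+}1\}$, so that $S\setminus T=\{i,i{+}1\}$ and $T\setminus S=\{j,j{+}1\}$ (intersections only make these smaller). Both are pairs of cyclically consecutive labels. If $A,B$ are not inserted between $i,i{+}1$ or between $j,j{+}1$, the two sets occupy disjoint arcs and do not cross, as long as $i,i{+}1,j,j{+}1$ do not interleave.

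\textbf{Main obstacle.} The delicate point is the role of the placement of $A,B$ and the cyclic (wrap-around) pairs such as $\langle n\,1\rangle_I$. In item 4 with $i=n$, one must insert $A,B$ away from the arc between $n$ and $1$. Since the paper states the final adjacency claim is placement independent, I would argue as follows. Compatibility in $\Fl_{2,4;n}$ is intrinsic, so it suffices to exhibit one placement of $A,B$ in which the pair is weakly separated. Alternatively, one can use the cyclic shift symmetry of $\Gr(4,n+2)$ to move $A,B$ into the required position. If the intrinsic-compatibility step is not justified by \cite{BossingerLi:2024}, the fallback is to exhibit an explicit initial seed of $\Fl_{2,4;n}$ (the rectangular or Le-diagram seed) that contains the relevant frozen-like two-brackets $\langle i\,i{+}1\rangle_I$ together.
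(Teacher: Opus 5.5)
Your argument is the paper's: adjoin the two adjacent columns $A,B$, identify $\langle ab\rangle_I$ with $\langle ABab\rangle$, check weak separation in $\Gr(4,n+2)$, and, like the paper, rely on the Bossinger--Li embedding to transfer Grassmannian compatibility to $\Fl_{2,4;n}$.

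The one thing to fix is your ``main obstacle''. It is not a real difficulty, and neither proposed remedy works as stated.
\begin{itemize}
\item Choosing the placement of $A,B$ case by case assumes that compatibility in $\Fl_{2,4;n}$ does not depend on the embedding, which you have not shown.
\item A cyclic rotation of $\Gr(4,n+2)$ moves all labels together. It therefore cannot move the block $A,B$ relative to $Z_n,Z_1$.
\end{itemize}

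Neither is needed, because whether $S\setminus T$ and $T\setminus S$ cross depends only on the cyclic order of the labels in $S\triangle T$.
\begin{itemize}
\item In items 2 and 4 these labels all lie in $[n]$; in item 4 this is because $A,B\in S\cap T$. Inserting $A,B$ anywhere leaves the check unchanged, including between $n$ and $1$ or between $i$ and $i{+}1$. Two cyclically consecutive pairs of $[n]$ never interleave.
\item In item 3, $S'\setminus S=\{A,B\}$ is an adjacent pair. No label of $\{c,d\}$ can lie between $A$ and $B$, wherever the pair is inserted.
\end{itemize}

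So every item holds for every placement of the adjacent pair $A,B$. This is exactly the placement-independence claimed after the theorem, and it removes the need for your case-by-case choice or a fallback seed.
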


\begin{proof}
After adjoining two columns $A,B$ spanning $I$, the two-bracket
$\langle a b\rangle_I$ becomes the Pl\"ucker coordinate
$\langle ABa b\rangle$ of $\Gr(4,n+2)$.  The statements then follow
from weak separation for Pl\"ucker coordinates in the Grassmannian cluster
algebra.  For example, if $\{a,b\}\subset S$, then the four-subsets
$\{A,B,a,b\}$ and $S=\{a,b,c,d\}$ are weakly separated. The same argument applies to consecutive two-brackets, which
correspond to non-crossing boundary edges.
\end{proof}

\begin{theorem}
\label{th:clust_adj_full_amplitude}
The one-loop amplitude $A^{(1)}_{n,k}(Z,I)$ satisfies cluster adjacency in
$\Fl_{2,4;n}$.  Equivalently, every rational word $a\otimes b$ appearing in the
expanded symbol has the property that $a$ and $b$ are compatible cluster
variables of $\Fl_{2,4;n}$.
\end{theorem}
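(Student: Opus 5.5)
The plan is to work at weight two, since the $\epsilon$-pole terms are weight $\le 1$ and single letters are trivially adjacent. The rational weight-two symbol is reduced to a finite list of word shapes using the results of \cref{sec:res}. Each shape is then rewritten in momentum twistors via \cref{eq:dualtotwist}, expanded multiplicatively, and every resulting pair of irreducible letters is checked against \cref{lem:basic-compatibility}.

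By \cref{eq:symbolresult}, after discarding $\mathcal S^{\rm 4m}$ (not rational) and using \cref{prop:cancel} to remove all letters $x^2_{ij}-x^2_{kl}$, only two kinds of words survive. The first kind are the LS words of \cref{eq:ll}, whose second entries are $x^2_{ij}x^2_{kl}$ or $\sqrt{\rho}$ of \cref{eq:sqrt-rho}. The second kind are words $x^2_{ij}\otimes x^2_{kl}$ with coefficients as in \cref{prop:4distinct,prop:3distinct,prop:repeated}.

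\textbf{Step 1: the $I$-dependent part.} By \cref{eq:exp-I-from-MHV}, every word containing a two-bracket comes from $R^{(0)}_{n,k}\,\mathcal S[A^{(1)}_{n,\rm MHV}]$. The helicity-independent letter content is therefore that of the MHV amplitude, which is the two-mass-easy sum. Expanding
\begin{equation}
x^2_{ij}=\frac{\langle i{-}1\,i\,j{-}1\,j\rangle}{\langle i{-}1\,i\rangle_I\langle j{-}1\,j\rangle_I},
\end{equation}
any word with an $I$-letter yields pairs of the following types.
\begin{itemize}
\item Two consecutive two-brackets, compatible by item 4 of the lemma.
\item A two-bracket $\langle a\,a{+}1\rangle_I$ against a four-bracket $\langle i{-}1\,i\,j{-}1\,j\rangle$.
\end{itemize}
The second type is compatible by item 3 only when $\{a,a{+}1\}$ is one of the two pairs in the bracket. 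So I would show that the $I$-parts of cross terms cancel. The mechanism is dual conformal weight counting: the $I$-dependence must assemble into the IR structure $\tfrac12\log^2(-x^2_{i,i+2})$ of \cref{eq:N4div}, plus rational dual conformal combinations in which two-brackets cancel. Concretely, I would use \cref{eq:one-loop-ampl-decomp} and the known BDS form of the MHV one-loop amplitude. In that form the non-dual-conformal part is $-\tfrac12\sum_i\log^2 x^2_{i,i+2}$ together with terms $\log x^2_{i,i+2}\log(\text{cross-ratio})$ coming from $\epsilon$-expansion consistency. So the $I$-letters appear only in words $x^2_{i,i+2}\otimes(\cdot)$ or $(\cdot)\otimes x^2_{i,i+2}$. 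Expanding $x^2_{i,i+2}$, the two-brackets $\langle i{-}1\,i\rangle_I$ and $\langle i{+}1\,i{+}2\rangle_I$ pair with four-brackets. If the partner is itself $x^2_{i,i+2}$, the relevant four-bracket $\langle i{-}1\,i\,i{+}1\,i{+}2\rangle$ is frozen and contains both pairs. If the partner is another $x^2_{kl}$, I expect $I$-pieces to cancel in the dual-conformal cross-ratio combinations, leaving two-brackets paired only with frozen brackets or with each other.

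\textbf{Step 2: the four-bracket part.} After expansion, an LS word gives pairs $\langle S\rangle\otimes\langle T\rangle$.
\begin{itemize}
\item For 1m and 2mh boxes, one side is frozen or shares three indices with the other (e.g.\ $\langle i{-}1\,i\,i{+}1\,i{+}2\rangle$), so item 2 of the lemma applies.
\item For 3m and 2me boxes, the second entry $x^2_{ij}x^2_{i+1,k}-x^2_{ik}x^2_{i+1,j}$ becomes, after clearing two-brackets, a single bracket such as $\langle i\,(i{-}1\,i{+}1)\cap(j{-}1\,j\,k{-}1\,k)\rangle$-type or a Plücker $\langle i\,j{-}1\,j\,k{-}1\ldots\rangle$. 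I would compute this factorization explicitly, since it is the standard identity for these Gram-type combinations, and check that the quadratic bracket is a cluster variable of $\Gr(4,n)$ compatible with each first-entry four-bracket. The check goes by exhibiting a common cluster, e.g.\ by mutation from a rectangles seed, or via known $\Gr(4,n)$ compatibility of $\langle a(bc)(de)(fg)\rangle$.
\end{itemize}
For the non-LS words $x^2_{ij}\otimes x^2_{kl}$, \cref{prop:4distinct} removes crossing non-adjacent pairs (Steinmann). The surviving words have $|i-j|=2$, three shared indices (\cref{prop:3distinct}), or repeated letters; in each case the four-brackets are weakly separated.

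\textbf{Main obstacle.} The main obstacle is Step 1 together with the nonplanar-looking second entries of 3m/2me boxes. I would attack Step 1 first via the BDS-form argument, because a direct check of $I$-cancellations box by box seems error-prone.
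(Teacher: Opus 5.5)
Your overall architecture matches the paper's. You send every $I$-dependent word through $R^{(0)}_{n,k}\,\mathcal S[A^{(1)}_{n,\mathrm{MHV}}]$ using \cref{eq:exp-I-from-MHV}, and treat the four-bracket part by weak separation plus a finite check for the quadratic three-mass letter. The gap is Step~1, which is where the real content of the theorem lies. You correctly single out the dangerous words $\langle abcd\rangle\otimes\langle r\,r{+}1\rangle_I$ with $\{r,r{+}1\}\not\subset\{a,b,c,d\}$, but you only say you expect them to cancel, and the proposed mechanism cannot establish this, for three reasons.
\begin{itemize}
\item The irreducible brackets are multiplicatively independent, so the expanded symbol is unique. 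A BDS-type rewriting can therefore at most be a device for computing the $(4B)\otimes(2B)$ component; it is not a substitute for computing it.
\item Terms $\log x^2_{i,i+2}\log C$ with $C$ a cross-ratio are precisely what would produce forbidden pairs: they pair $\langle i{-}1\,i\rangle_I$ and $\langle i{+}1\,i{+}2\rangle_I$ with the generic four-brackets of $C$. That the two-brackets cancel inside $C$ is beside the point.
\item The structural claim itself fails in general. Suppose $x^2_{i,i+2}$ were the only $I$-carrying letters, and let $X_i$ collect the first entries paired with $x^2_{i,i+2}$. Then the words ending in $\langle r\,r{+}1\rangle_I$ would be $-(X_{r-1}+X_{r+1})\otimes\langle r\,r{+}1\rangle_I$. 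Matching the true values $2\langle r{-}1\,r\,r{+}1\,r{+}2\rangle$ is impossible for $n\equiv 0 \bmod 4$: for $n=8$ the alternating sum would force $\langle 1234\rangle-\langle 3456\rangle+\langle 5678\rangle-\langle 7812\rangle=0$.
\end{itemize}

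What closes the gap is a short explicit computation from \cref{eq:symbol-Vf-flag}. Fix $r$ and collect the words ending in $\langle r\,r{+}1\rangle_I$. In the two-mass-easy double sum they are $-\bigl(\prod_j u_{rj}\,u_{r+1,j}\bigr)\otimes\langle r\,r{+}1\rangle_I$, with $u_{ij}$ the first-entry cross-ratio. The product over $j$ telescopes to boundary brackets. Its non-frozen survivors $\langle r{-}1\,r\,r{+}2\,r{+}3\rangle$, $\langle r{-}3\,r{-}2\,r\,r{+}1\rangle$, $\langle r\,r{+}1\,r{+}3\,r{+}4\rangle$, $\langle r{-}2\,r{-}1\,r{+}1\,r{+}2\rangle$ cancel exactly against the four-bracket parts of the first entries of the one-mass terms with $i=r{-}2,\dots,r{+}1$. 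So the one-mass boxes must be kept alongside the two-mass-easy ones. The repeated-letter terms then reduce the frozen remainder to $2\langle r{-}1\,r\,r{+}1\,r{+}2\rangle$, which is \cref{eq:MHV-only-surviving-4B-2B}. In the opposite order nothing needs to cancel: two-brackets appear in first entries only in the one-mass and repeated-letter terms, and there the second-entry four-brackets are frozen.

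In Step~2 you should also correct the second entries.
\begin{itemize}
\item For two-mass-easy boxes the letter factorizes into the Pl\"ucker coordinates $\langle i{-}1\,i\,i{+}1\,j\rangle$ and $\langle i\,j{-}1\,j\,j{+}1\rangle$. Each shares three indices with every first entry.
\item For three-mass boxes it is the quadratic variable $\langle i\,j{-}1\,j\mid k{-}1\,k\mid i{-}1\,i\,i{+}1\rangle$. Its compatibility with the four first entries reduces by locality to a finite check in $\mathrm{Gr}(4,8)$.
\end{itemize}
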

This statement is independent of the position chosen for the
two consecutive columns spanning the infinity twistor $I$ in the auxiliary
Grassmannian embedding into $\Gr(4,n+2)$.

We prove the theorem by separating the four-bracket part from the
$I$-dependent part.  The latter is controlled by the MHV amplitude.

\begin{proposition}
\label{prop:MHV_adj}
The one-loop MHV amplitude $A^{(1)}_{n,\mathrm{MHV}}(Z,I)$ satisfies
cluster adjacency in $\Fl_{2,4;n}$.
\end{proposition}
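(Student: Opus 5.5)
I will prove \cref{prop:MHV_adj} by working with an explicit representation of the weight-two symbol of $A^{(1)}_{n,\mathrm{MHV}}$. The MHV amplitude involves only one-mass and two-mass-easy boxes, and each such box coefficient is simply $A^{(0)}_{n,\mathrm{MHV}}$. So the symbol is, up to this overall factor, the sum over these box symbols in \cref{eq:1mbox-symbol,eq:2mebox-symbol}. I will first simplify this sum using \cref{prop:cancel}, which removes every letter of the form $x^2_{ij}-x^2_{kl}$, and \cref{prop:4distinct,prop:3distinct,prop:repeated}, specialised to MHV coefficients.

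After this step the surviving words have three kinds of letters. The first are single dual invariants $x^2_{ij}$. The second are the two-mass-easy Landau letters $x^2_{ij}x^2_{i+1,j+1}-x^2_{i,j+1}x^2_{i+1,j}$. The third come from the repeated two-particle words $x^2_{i,i+2}\otimes x^2_{i,i+2}$ and from \cref{eq:generalir}.

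Next I will translate into momentum twistors using \cref{eq:dualtotwist}. The key identity is that the two-mass-easy Landau letter factorises. Up to two-bracket denominators it equals
\begin{equation}
\frac{\langle i\,i{+}1\,j\,j{+}1\rangle\,\langle i{-}1\,i\,j{-}1\,j\rangle\cdots}{\cdots},
\end{equation}
or, more precisely, the standard identity gives $x^2_{ij}x^2_{i+1,j+1}-x^2_{i,j+1}x^2_{i+1,j}\propto \langle i\,i{+}1\,j\,j{+}1\rangle\,\langle i{-}1\,i\,i{+}1\rangle_I$-type factors. I will verify this identity by a Schouten computation. Its consequence is that every irreducible letter is one of two types: a four-bracket $\langle a{-}1\,a\,b{-}1\,b\rangle$, or a consecutive two-bracket $\langle a{-}1\,a\rangle_I$, possibly together with a three-index object $\langle a\,a{+}1\,a{+}2\rangle$-type factor contracted with $I$.

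The main computational step is to expand every word multiplicatively and list the resulting pairs. Pairs of consecutive two-brackets are compatible by \cref{lem:basic-compatibility}(4). A pair consisting of a two-bracket $\langle a{-}1\,a\rangle_I$ and a four-bracket containing $\{a{-}1,a\}$ is compatible by part (3). Pairs of four-brackets of the form $\langle i{-}1\,i\,j{-}1\,j\rangle$ sharing a consecutive pair, or sharing three indices, are compatible by part (2) via weak separation.

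The expected obstacle is the set of \emph{cross terms}: a two-bracket $\langle a{-}1\,a\rangle_I$ coming from one letter paired with a four-bracket in the other entry that does not contain $\{a{-}1,a\}$. These cannot be handled by the lemma word by word. I will instead show that their total coefficient vanishes after summing over all boxes. The tools are the cancellations of \cref{prop:cancel,prop:4distinct} (Steinmann-type vanishing), which already remove crossing four-bracket pairs, together with the cyclic telescoping of two-bracket contributions. That telescoping reflects the fact that the $I$-dependence of the finite part is controlled by \cref{eq:generalir}, whose $\log^2(-x^2_{i,i+2})$ terms expand into words built only from $\langle i{-}1\,i\,i{+}1\,i{+}2\rangle$ and the two-brackets $\langle i{-}1\,i\rangle_I,\langle i{+}1\,i{+}2\rangle_I$, all pairwise compatible.

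Concretely, I plan to write the MHV amplitude as the BDS form: the IR piece plus a dual-conformal finite remainder expressed in cross-ratios. I will then check adjacency separately. For the cross-ratio part, only four-bracket words appear, and these are compatible by part (2). For the IR part, parts (3) and (4) apply. For words involving four-mass algebraic letters, nothing needs checking in MHV, since those letters are absent. The remaining risk is whether the cross-ratio finite part of the two-mass-easy functions really produces only weakly separated four-bracket pairs; I would verify this using the explicit $\mathrm{Li}_2$ form of the BDS finite part, whose words pair $\langle i{-}1\,i\,j{-}1\,j\rangle$ with brackets sharing three indices.

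Finally, none of these compatibility arguments refers to where the columns $A,B$ are placed in the cyclic order. This gives the claimed independence of the embedding.
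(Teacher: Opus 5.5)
\textbf{The crux is identified but not proved.} You are right that everything hinges on showing that the words $\langle abcd\rangle\otimes\langle r\,r{+}1\rangle_I$ with $\{r,r{+}1\}\not\subset\{a,b,c,d\}$ cancel in the sum. The mechanism you propose for that cancellation does not work.

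There is no decomposition of the dimensionally regulated one-loop MHV amplitude into the infrared piece of \cref{eq:generalir} plus a dual-conformal remainder written in cross-ratios. Once $-\tfrac12\sum_i\log^2(-x^2_{i,i+2})$ is subtracted, the finite part still obeys the anomalous dual conformal Ward identity, so its expanded symbol retains infinity-twistor letters. Explicitly, the four-bracket$\,\otimes\,$two-bracket part of $\mathcal S[V^{(1)}_{n,\mathrm{MHV}}]$ is $2\sum_i\langle i{-}1\,i\,i{+}1\,i{+}2\rangle\otimes\langle i\,i{+}1\rangle_I$. The infrared $\log^2$ terms, by contrast, produce only $\langle i{-}1\,i\,i{+}1\,i{+}2\rangle\otimes\langle i{-}1\,i\rangle_I$ and $\langle i{-}1\,i\,i{+}1\,i{+}2\rangle\otimes\langle i{+}1\,i{+}2\rangle_I$, and both of these have zero coefficient in the full symbol.

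So the $I$-dependence is not ``controlled by'' \cref{eq:generalir} at symbol level. Your separate checks of a ``cross-ratio part'' and an ``IR part'' are applied to objects that do not exist. The word-by-word dual-coordinate cancellations of \cref{prop:cancel,prop:4distinct} do not help either. For example, the one-mass word $x^2_{i,i+2}\otimes x^2_{i+1,i+3}$ survives, and its expansion contains the non-manifest pair $\langle i{-}1\,i\,i{+}1\,i{+}2\rangle\otimes\langle i{+}2\,i{+}3\rangle_I$. That pair cancels only against expansions of other words.

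\textbf{What is needed is the explicit computation on \cref{eq:amplmhvVn}, which is what the paper's argument rests on.} The two-mass-easy first entries are pure four-bracket cross-ratios $u_{ij}$, so that sector contributes $-\tfrac12\sum_{i,j}u_{ij}\otimes\langle i{-}1\,i\rangle_I\langle i\,i{+}1\rangle_I\langle j{-}1\,j\rangle_I\langle j\,j{+}1\rangle_I$. Using the $i\leftrightarrow j$ symmetry of the double sum and telescoping $\sum_j\log u_{ij}$, this collapses to
\[
-\sum_i\frac{\langle i\,i{+}1\,i{+}2\,i{+}3\rangle\,\langle i{-}3\,i{-}2\,i{-}1\,i\rangle}{\langle i{-}3\,i{-}2\,i\,i{+}1\rangle\,\langle i{-}1\,i\,i{+}2\,i{+}3\rangle}\otimes\langle i{-}1\,i\rangle_I\,\langle i\,i{+}1\rangle_I .
\]
Adding the one-mass and repeated-letter terms cancels everything except $2\langle i{-}1\,i\,i{+}1\,i{+}2\rangle\otimes\langle i\,i{+}1\rangle_I$, which is adjacent by \cref{lem:basic-compatibility}. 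The two-bracket$\,\otimes\,$four-bracket words arise only from the one-mass and repeated-letter terms, and they are manifestly adjacent.

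\textbf{Your factorization of the two-mass-easy Landau letter is also wrong.} Up to two-brackets it is $\langle i{-}1\,i\,i{+}1\,j\rangle\langle i\,j{-}1\,j\,j{+}1\rangle$, not $\langle i\,i{+}1\,j\,j{+}1\rangle\langle i{-}1\,i\,j{-}1\,j\rangle$; the latter brackets are first entries. As a result, your list of letter types omits four-brackets that are not of the form $\langle a{-}1\,a\,b{-}1\,b\rangle$. Their pairing with the first entries is still fine, since each shares three indices with every first-entry bracket.
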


\begin{proof}
The one-loop MHV amplitude is built from one-mass and two-mass-easy box functions and can be written as
$A^{(1)}_{n,\mathrm{MHV}}=A^{(0)}_{n,\mathrm{MHV}}\,V^{(1)}_{n,\,\mathrm{MHV}}$, where the symbol of $V^{(1)}_{n,\,\mathrm{MHV}}$ is given in  \cref{eq:amplmhvVn}. Replacing with momentum twistors we find
\begin{equation}
\label{eq:symbol-Vf-flag}
\begin{split}
\mathcal{S}\left[V^{(1)}_{n,\,\mathrm{MHV}}\right]
=&\frac{1}{2}
\sum_{i=1}^{n}
\sum_{j=i+3}^{n+i-3}
\frac{
\langle i{-}1\,i\,j\,j{+}1\rangle
\langle i\,i{+}1\,j{-}1\,j\rangle}
{
\langle i\,i{+}1\,j\,j{+}1\rangle
\langle i{-}1\,i\,j{-}1\,j\rangle}
\\
&\hspace{7em}\otimes
\frac{\langle i\,\bar j\rangle
      \langle j\,\bar i\rangle}
     {\langle i{-}1\,i\rangle_I
      \langle i\,i{+}1\rangle_I
      \langle j{-}1\,j\rangle_I
      \langle j\,j{+}1\rangle_I}
\\
&+
\sum_{i=1}^n
\frac{
\langle i\,i{+}1\rangle_I
\langle i{+}1\,i{+}2\rangle_I
\langle i{-}1\,i\,i{+}2\,i{+}3\rangle}
{
\langle i\,i{+}1\,i{+}2\,i{+}3\rangle
\langle i{-}1\,i\,i{+}1\,i{+}2\rangle}
\\
&\hspace{7em}\otimes
\frac{\langle i\,\overline{i{+}2}\rangle
      \langle \bar i\,i{+}2\rangle}
     {\langle i\,i{+}1\rangle_I
      \langle i{+}2\,i{+}3\rangle_I
      \langle i{-}1\,i\rangle_I
      \langle i{+}1\,i{+}2\rangle_I}
\\
&+
\sum_{i=1}^n
\frac{
\langle i{-}1\,i\,i{+}1\,i{+}2\rangle}
{\langle i{-}1\,i\rangle_I
 \langle i{+}1\,i{+}2\rangle_I}
\otimes
\frac{
\langle i{-}1\,i\,i{+}1\,i{+}2\rangle}
{\langle i{-}1\,i\rangle_I
 \langle i{+}1\,i{+}2\rangle_I}.
\end{split}
\end{equation}

Here $\bar i$ denotes the triple $(i{-}1,i,i{+}1)$, so for instance
$\langle j\,\bar i\rangle=\langle j\,i{-}1\,i\,i{+}1\rangle$. After  multiplicative expansion, we get identical pairs, pairs of consecutive
two-brackets, and pairs consisting of
$\langle i{-}1\,i\,i{+}1\,i{+}2\rangle$ together with one of
$\langle i{-}1\,i\rangle_I$ or $\langle i{+}1\,i{+}2\rangle_I$.  These
are all compatible by \cref{lem:basic-compatibility}.
We also possibly get pairs
of the form $\langle a b c d\rangle\otimes
\langle r\,r{+}1\rangle_I$, where the consecutive pair $\{r,r{+}1\}$ is
not contained in $\{a,b,c,d\}$.  These non-manifest pairs are artifacts
of the chosen representation of the amplitude.  In the full MHV symbol the
potentially non-adjacent contributions cancel.  More precisely, the only
surviving words with a four-bracket in the first entry and a consecutive
two-bracket in the second entry are
\begin{equation}
  \mathcal S\left[V^{(1)}_{n,\,\mathrm{MHV}}\right]\Big|_{(4B)\otimes
  \langle i\,i{+}1\rangle_I}
  =
  2\,
  \langle i{-}1\,i\,i{+}1\,i{+}2\rangle
  \otimes
  \langle i\,i{+}1\rangle_I .
  \label{eq:MHV-only-surviving-4B-2B}
\end{equation}
These words are compatible by \cref{lem:basic-compatibility}, since
$\{i,i{+}1\}\subset\{i{-}1,i,i{+}1,i{+}2\}$.  Therefore every word in
the expanded MHV symbol is cluster adjacent in $\Fl_{2,4;n}$.
\end{proof}

\begin{proposition}
\label{prop:exp_4B_adj}
The four-bracket part
$\operatorname{Exp}_{4B}\mathcal S[A^{(1)}_{n,k}]$ satisfies cluster
adjacency in $\Gr(4,n)$, and hence also in $\Fl_{2,4;n}$.
\end{proposition}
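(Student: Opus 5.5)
The plan is to enumerate every rational word of the weight-two symbol that can appear in $\operatorname{Exp}_{4B}\mathcal S[A^{(1)}_{n,k}]$ and check pairwise weak separation of the resulting four-subsets, using \cref{lem:basic-compatibility}(2). Since the expansion is taken with respect to the representation of \cref{eq:symbolresult}, the rational words are those listed in \cref{prop:cancel,prop:4distinct,prop:3distinct,prop:repeated}, together with the LS words of \cref{eq:ll}. Under \cref{eq:dualtotwist}, each letter $x^2_{ij}$ contributes the four-bracket $\langle i{-}1\,i\,j{-}1\,j\rangle$, which is the Plücker coordinate of the union of two cyclic intervals of length two. The letters $x^2_{ij}x^2_{kl}-x^2_{il}x^2_{jk}$ of two-mass-easy and three-mass boxes become, after clearing the common two-bracket denominators, four-brackets of the form $\langle i\,\bar j\rangle$-type or $\langle i{-}1\,i\,(j{-}1\,j)\cap(\ldots)\rangle$-type irreducible quadratic brackets. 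The two-bracket denominators are discarded into $\operatorname{Exp}_I$. By \cref{prop:cancel}, letters $x^2_{kl}-x^2_{mn}$ are absent, so no other second entries occur.

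The steps, in order, are as follows.

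\textbf{Step 1: words $x^2_{ij}\otimes x^2_{kl}$.} For these words I use \cref{prop:4distinct,prop:3distinct,prop:repeated}.
\begin{itemize}
\item With three distinct indices $i$, the two sets $\{i{-}1,i,j{-}1,j\}$ and $\{i{-}1,i,l{-}1,l\}$ share the pair $\{i{-}1,i\}$. Their differences $\{j{-}1,j\}$ and $\{l{-}1,l\}$ are cyclic intervals, so they never cross. Hence these words are weakly separated.
\item Repeated letters are compatible trivially, by \cref{lem:basic-compatibility}(1).
\item With four distinct indices, the only surviving words are those in case 2 of \cref{prop:4distinct}, where one letter is a two-particle invariant $x^2_{i,i+2}$. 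This letter gives $\langle i{-}1\,i\,i{+}1\,i{+}2\rangle$, a consecutive four-interval. Its set difference with any other two-interval-union is itself contained in an interval, so crossing is impossible.
\item The Steinmann-violating crossing cases are exactly the ones shown to vanish in \cref{prop:4distinct}.
\end{itemize}

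\textbf{Step 2: LS words of \cref{eq:ll}.} Here the first entry is a ratio of $x^2$'s indexed by the box corners. The second entry, after expansion, is the irreducible four-bracket representing $\sqrt{\rho}$, for example $\langle i\,i{+}1\,j\,j{+}1\rangle$-like brackets for two-mass-easy boxes. For three-mass boxes it is $\langle i\,\overline{\,\cdot\,}\rangle$-type brackets containing the massless leg's neighbours. I will check, box type by box type, that each first-entry four-bracket shares at least three indices with, or is weakly separated from, the second-entry bracket. For example, in the two-mass-easy case the factorization
\[
x^2_{ij}x^2_{i+1,j+1}-x^2_{i,j+1}x^2_{i+1,j}\propto\langle i\,i{+}1\,j\,j{+}1\rangle\langle\cdots\rangle^{-1}
\]
makes the second entry share the pairs $\{i,i{+}1\}$ or $\{j,j{+}1\}$ with each first-entry bracket.

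\textbf{Step 3: remaining first entries.} The residual terms $d_{ijl}$ are already of the form treated in Step 1.

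The main obstacle is Step 2 for the three-mass boxes. There the second-entry letter factorizes as $\langle i\,i{+}1\,(j{-}1\,j)\cap(k{-}1\,k)\rangle$-type composite brackets, and these are not Plücker coordinates, so weak separation does not apply directly. For this case I would first try to show that the relevant quadratic bracket factorizes, using the massless corner, into the Plücker coordinate $\langle i\,j{-}1\,j\,k{-}1\rangle$-like form times frozen factors; one could alternatively invoke a known cluster containing both brackets, for instance via a plabic-graph or mutation argument. Checking the first-entry letters $x^2_{jk}$ against this bracket is then the crossing-sensitive comparison, and I expect it to require a short explicit cluster construction rather than weak separation alone.
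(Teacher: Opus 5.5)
Steps 1 and 3 are sound. Shared pairs $\{i{-}1,i\}$, repeated letters and frozen letters dispose of every word $x^2_{ij}\otimes x^2_{kl}$, and this also covers the one-mass and two-mass-hard LS words. The gap is the three-mass LS word, which is exactly where your proposal stops.

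After removing two-brackets, its second entry is
$\langle i{-}1\,i\,j{-}1\,j\rangle\langle i\,i{+}1\,k{-}1\,k\rangle-\langle i{-}1\,i\,k{-}1\,k\rangle\langle i\,i{+}1\,j{-}1\,j\rangle=\pm\langle i\,j{-}1\,j\mid k{-}1\,k\mid i{-}1\,i\,i{+}1\rangle$.
This factorizes into Pl\"ucker coordinates only in the two-mass limits ($j=i+2$, $k=i-1$, or $k=j+1$). For a genuine three-mass box it is a quadratic cluster variable, not a Pl\"ucker coordinate. So your first route, rewriting it as a Pl\"ucker coordinate times frozen factors, cannot succeed, and weak separation is unavailable.

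What is needed is an explicit common cluster containing this quadratic variable and the four first entries of that word. These are $\langle i\,i{+}1\,j{-}1\,j\rangle$, $\langle i{-}1\,i\,k{-}1\,k\rangle$, $\langle i\,i{+}1\,k{-}1\,k\rangle$ and $\langle i{-}1\,i\,j{-}1\,j\rangle$, coming from $x^2_{i+1,j}$, $x^2_{ik}$, $x^2_{i+1,k}$ and $x^2_{ij}$. The letter $x^2_{jk}$ that you single out never appears as a first entry against this letter; it occurs only as a second entry in non-LS words already covered by Step 1.

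The paper closes this case by reducing to the minimal configuration in $\Gr(4,8)$, relabelling $i{-}1,i,i{+}1,j{-}1,j,j{+}1,k{-}1,k$ as $1,\dots,8$. It then checks that $\langle 2345\rangle$, $\langle 1278\rangle$, $\langle 2378\rangle$ and $\langle 1245\rangle$ lie in a common cluster with $\langle 245\mid 78\mid 123\rangle$. Your alternative suggestion of a plabic-graph or mutation argument is the right kind of argument. Until you actually exhibit such a cluster, the three-mass sector is unproven.

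Separately, your two-mass-easy factorization is wrong: the second entry is not proportional to $\langle i\,i{+}1\,j\,j{+}1\rangle$. By the Pl\"ucker relation with common indices $i,j$ it equals, up to two-bracket factors, $\langle i{-}1\,i\,i{+}1\,j\rangle\langle i\,j{-}1\,j\,j{+}1\rangle$. This is the $\langle j\,\bar i\rangle\langle i\,\bar j\rangle$ form you yourself quote in your opening paragraph. The conclusion survives, because each first entry $\langle i{-}1\,i\,j\,j{+}1\rangle$, $\langle i\,i{+}1\,j{-}1\,j\rangle$, $\langle i{-}1\,i\,j{-}1\,j\rangle$, $\langle i\,i{+}1\,j\,j{+}1\rangle$ shares three indices with each of these two factors. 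The justification you wrote, via shared pairs with the single bracket $\langle i\,i{+}1\,j\,j{+}1\rangle$, does not establish it and should be replaced by this check.
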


\begin{proof}
We use the scalar-box representation of the one-loop amplitude, \cref{eq:boxdecomp}, with its symbol as described in \cref{sec:symbols-boxes}.  Since
we are considering the four-bracket part, all $x_{ij}^2$ are replaced by
their momentum-twistor numerators, namely
$x_{ij}^2\mapsto\langle i{-}1\,i\,j{-}1\,j\rangle$.  Thus compatibility
is checked in the Grassmannian cluster algebra $\Gr(4,n)$.

\smallskip
\noindent
\emph{One-mass boxes.}
The relevant second entries are $x^2_{i,i+2}$ or $x^2_{i+1,i+3}$,  
becoming $\langle i{-}1\,i\,i{+}1\,i{+}2\rangle$ and
$\langle i\,i{+}1\,i{+}2\,i{+}3\rangle$, which are frozen Pl\"ucker
coordinates.  Hence they are compatible with all cluster variables.

\smallskip
\noindent
\emph{Two-mass-easy boxes.}
The nontrivial second entry is
$x^2_{i,j+1}x^2_{i+1,j}
  -
  x^2_{i,j}x^2_{i+1,j+1}$,
which becomes, up to frozen factors, the product
$\langle i{-}1\,i\,i{+}1\,j\rangle
 \langle i\,j{-}1\,j\,j{+}1\rangle$.  The possible first entries are
$\langle i{-}1\,i\,j\,j{+}1\rangle$,
$\langle i\,i{+}1\,j{-}1\,j\rangle$,
$\langle i{-}1\,i\,j{-}1\,j\rangle$, and
$\langle i\,i{+}1\,j\,j{+}1\rangle$.  Each of these shares three indices
with each of the two four-brackets in the second entry.  Therefore all
required pairs are compatible by \cref{lem:basic-compatibility}.

\smallskip
\noindent
\emph{Two-mass-hard boxes.}
The non-frozen second entry is $x^2_{i+1,k}$, which becomes
$\langle i\,i{+}1\,k{-}1\,k\rangle$.  The relevant first entries are
$\langle i{-}1\,i\,k{-}1\,k\rangle$,
$\langle i{+}1\,i{+}2\,k{-}1\,k\rangle$, and
$\langle i\,i{+}1\,k{-}1\,k\rangle$, together with the frozen coordinate
$\langle i{-}1\,i\,i{+}1\,i{+}2\rangle$.  The first two share three
indices with the non-frozen second entry, the third is identical to it,
and the last is frozen.  Hence all pairs are compatible.

\smallskip
\noindent
\emph{Three-mass boxes.}
The nontrivial second entry is the quadratic cluster variable
$\langle i\,j{-}1\,j \mid k{-}1\,k
\mid i{-}1\,i\,i{+}1\rangle$.  The possible first entries are
$\langle i\,i{+}1\,j{-}1\,j\rangle$,
$\langle i{-}1\,i\,k{-}1\,k\rangle$,
$\langle i\,i{+}1\,k{-}1\,k\rangle$, and
$\langle i{-}1\,i\,j{-}1\,j\rangle$.

Compatibility is local in the cyclic order, so it suffices to check the
minimal configuration in $\Gr(4,8)$, obtained by relabelling
$i{-}1,i,i{+}1,j{-}1,j,j{+}1,k{-}1,k=1,2,3,4,5,6,7,8$.  Then the
statement becomes that
$\langle 2345\rangle$, $\langle 1278\rangle$, $\langle 2378\rangle$,
and $\langle 1245\rangle$ are compatible with the quadratic cluster
variable $\langle 245 \mid 78 \mid 123\rangle$.  This is verified in the
$\Gr(4,8)$ cluster algebra; equivalently, these variables appear
together in a common cluster.

It remains to consider the non-leading-singularity terms.  In all cases
the relevant words are of the form $x^2_{ij}\otimes x^2_{i{l}}$.  After
passing to momentum twistors, these become
$\langle i{-}1\,i\,j{-}1\,j\rangle\otimes
\langle i{-}1\,i\,{l}{-}1\,{l}\rangle$.  The two four-brackets share
the pair $\{i{-}1,i\}$ and are weakly separated.  Hence they are
compatible by \cref{lem:basic-compatibility}.

This proves cluster adjacency for
$\operatorname{Exp}_{4B}\mathcal S[A^{(1)}_{n,k}]$.
\end{proof}

\begin{proof}[Proof of \cref{th:clust_adj_full_amplitude}]
By \cref{prop:exp_4B_adj}, the four-bracket part
$\operatorname{Exp}_{4B}\mathcal S[A^{(1)}_{n,k}]$ satisfies cluster
adjacency in $\Gr(4,n)$, hence also in $\Fl_{2,4;n}$.

By \cref{prop:MHV_adj}, the MHV symbol
$\mathcal S[A^{(1)}_{n,\mathrm{MHV}}]$ satisfies cluster adjacency in
$\Fl_{2,4;n}$.  Therefore its $I$-dependent part
$\operatorname{Exp}_{I}\mathcal S[A^{(1)}_{n,\mathrm{MHV}}]$ also
satisfies cluster adjacency.  Using \cref{eq:exp-I-from-MHV}, the
$I$-dependent part of the full one-loop amplitude is
$\operatorname{Exp}_{I}\mathcal S[A^{(1)}_{n,k}]
=
R^{(0)}_{n,k}\,
\operatorname{Exp}_{I}\mathcal S[A^{(1)}_{n,\mathrm{MHV}}]$, and hence is
cluster adjacent as well.

Combining the four-bracket part with the $I$-dependent part, we conclude that the full expanded symbol of
$A^{(1)}_{n,k}(Z,I)$ satisfies cluster adjacency in $\Fl_{2,4;n}$.

Finally, the compatibility checks above do not depend on where the
auxiliary consecutive columns $i,i+1$ spanning $I$ are inserted in the cyclic order.
The result is therefore independent of the chosen Grassmannian embedding
into $\Gr(4,n+2)$.
\end{proof}

\subsection{Cluster adjacency for algebraic letters}

The symbol of the $4$-mass box in momentum-twistor variables reads
\cite{Prlina:2017azl,He:2020vob}:
\begin{equation}\label{eq:4mbtw}
\mathcal{S}_2[F^{\rm 4m}_{ijkl}]
=
\frac{\langle i{-}1\,i\,j{-}1\,j\rangle
\langle k{-}1\,k\,l{-}1\,l\rangle}
{\langle i{-}1\,i\,k{-}1\,k\rangle
\langle j{-}1\,j\,l{-}1\,l\rangle}
\otimes
\frac{Q_{ijkl}-\sqrt{\Delta_{ijkl}}}
{Q_{ijkl}+\sqrt{\Delta_{ijkl}}}
+(ijkl\rightarrow jkli),
\end{equation}
where
\begin{align}
Q_{ijkl}
={}&
\langle i{-}1\,i\,k{-}1\,k\rangle
\langle j{-}1\,j\,l{-}1\,l\rangle
+
\langle j{-}1\,j\,k{-}1\,k\rangle
\langle i{-}1\,i\,l{-}1\,l\rangle
\nonumber\\
&-
\langle i{-}1\,i\,j{-}1\,j\rangle
\langle k{-}1\,k\,l{-}1\,l\rangle ,
\end{align}
and
\begin{align}
\Delta_{ijkl}
:={}&
\Big(
\langle i{-}1\,i\rangle_I
\langle j{-}1\,j\rangle_I
\langle k{-}1\,k\rangle_I
\langle l{-}1\,l\rangle_I
\Big)^2
\rho_{ijkl}
\nonumber\\
={}&
Q_{ijkl}^2
-
4\,
\langle i{-}1\,i\,k{-}1\,k\rangle
\langle j{-}1\,j\,l{-}1\,l\rangle
\langle j{-}1\,j\,k{-}1\,k\rangle
\langle i{-}1\,i\,l{-}1\,l\rangle .
\label{eq:Delta4m}
\end{align}
Here $\rho_{ijkl}$ is the dual-coordinate discriminant defined in
\cref{eq:jacobrho}, whereas $\Delta_{ijkl}$ is its homogeneous
momentum-twistor counterpart.

Since these algebraic letters contain square roots and are not
polynomials in the Pl\"ucker coordinates, they are not ordinary cluster
$\mathcal A$-coordinates of $\operatorname{Gr}(4,n)$
\cite{ArkaniHamedLamSpradlin2021,HeLiYang2022Constraints}.
Consequently, the standard notion of compatibility between cluster
variables does not directly apply to them.

The Sklyanin Poisson structure on the Grassmannian
\cite{Sklyanin1982,GekhtmanShapiroVainshtein2003} was proposed in
\cite{Golden:2019kks} as an efficient test for cluster
compatibility. For two cluster $\mathcal A$-coordinates $a_1,a_2$
belonging to a common cluster, one has
\begin{equation}
\{\log a_1,\log a_2\}_{\mathfrak S}
\in \frac12\mathbb Z .
\end{equation}
Conversely, half-integrality of the bracket is conjectured to
characterize compatibility. This converse has been verified by
enumeration in finite-type cases but remains conjectural for the
infinite-type cluster algebras $\operatorname{Gr}(4,n)$ with $n\geq8$
\cite{Golden:2019kks}. We say that a pair satisfying
the half-integrality condition passes the Sklyanin-bracket test.

Although the algebraic letters considered here are not cluster
$\mathcal A$-coordinates, the Sklyanin bracket extends to the quadratic
extension of the function field obtained by adjoining
$\sqrt{\Delta_{ijkl}}$. We therefore use the same half-integrality
condition as an operational extension of the Sklyanin-bracket test.
Explicitly, we find
\begin{equation}
\left\{
\log\!\left(
\langle i{-}1\,i\,j{-}1\,j\rangle
\langle k{-}1\,k\,l{-}1\,l\rangle
\right),
\log\!\left(Q_{ijkl}\pm\sqrt{\Delta_{ijkl}}\right)
\right\}_{\mathfrak S}
\in \frac{1}{2}\mathbb Z.
\end{equation}

Together with the corresponding relations obtained by
$(ijkl)\rightarrow(jkli)$, bilinearity of the bracket on logarithms
implies that every symbol word of the $4$-mass box passes this
generalized Sklyanin-bracket test. We propose this as an operational
notion of ``adjacency'' for these algebraic letters.

We note that $Q_{ijkl}\pm\sqrt{\Delta_{ijkl}}$ passes the test only with
specific products of pairs of brackets. For example, for $n=9$,
$Q_{1358}\pm\sqrt{\Delta_{1358}}$ passes the test with products
\begin{equation}
\langle a-1,a,b-1,b\rangle
\langle c-1,c,d-1,d\rangle ,
\end{equation}
where, denoting such a product by $\{ab,cd\}$, the possible pairs are
\begin{equation}
\{24,58\},\quad
\{13,58\},\quad
\{29,58\},\quad
\{15,38\},\quad
\{18,58\},\quad
\{35,58\}.
\end{equation}
In particular, $\{13,58\}$ and $\{15,38\}$ are the two products needed
for the adjacency of the words appearing in the $4$-mass box.
We also remark that, in general,
$Q_{ijkl}\pm\sqrt{\Delta_{ijkl}}$ does not pass the test with an
individual four-bracket. This suggests that the natural objects entering
this generalized notion of adjacency are products of brackets, or
dual-conformal-invariant cross-ratios of such products, such as the
first entries in \cref{eq:4mbtw}.

We have also checked this generalized criterion against a sample of
rational--algebraic consecutive pairs in the known two-loop NMHV
symbols at eight and nine points
\cite{Zhang:2019vnm,He:2020vob}, finding no violations. This provides
preliminary evidence that the proposed notion may persist beyond one
loop.

\subsection{Cluster adjacency for coefficients}
\label{subsec:LL-adjacency}

We now refine the cluster-adjacency statement by including the rational
coefficients of the symbol.  
Recall that the \emph{LL-cluster-adjacency} conjecture of
\cite{Gurdogan:2020tip} predicts that, for each branch of a maximal
cut, the cluster-coordinate factors of its Landau singularity can be
found in a common cluster with all poles of any Yangian invariant that
can appear in a representation of the corresponding leading
singularity.
 In the language of the present paper,
this gives a natural source of adjacency between the last entries of
LS-terms and the poles of their leading-singularity coefficients.

Here we consider a stronger property.  Let $Y$ be a Yangian
invariant, and let $\operatorname{Pol}(Y)$ denote the set of irreducible
cluster variables appearing as poles of $Y$.  We say that a term
$Y\,a\otimes b$ is \emph{totally coefficient-adjacent} if
$\{a,b\}\cup\operatorname{Pol}(Y)$ is contained in a common cluster.

This definition is representation-dependent: if a coefficient is written
as a linear combination $d=\sum_\nu \lambda_\nu Y_\nu$ of Yangian
invariants, we ask for the property term by term, i.e. for each summand
$Y_\nu\,a\otimes b$.  This is the appropriate notion for us, since
Yangian invariants satisfy linear identities and hence the set of poles
of a coefficient may depend on the chosen representation. Related adjacency relations between final symbol entries and the poles
of their Yangian-invariant coefficients were studied in
\cite{DrummondFosterGurdogan2019,
Mago:2020eua}.

We do not expect the full amplitude $A^{(1)}_{n,k}(Z,I)$ to admit a
representation satisfying this property for all terms.  Indeed, the
$I$-dependent part contains last entries such as
$\langle i{-}1\,i\rangle_I$, whose coefficients are proportional to the
tree amplitude.  Already at NMHV level, the tree amplitude cannot be
written as a sum of $R$-invariants whose poles are all compatible with a
fixed two-bracket $\langle i{-}1\,i\rangle_I$.  For this reason, in this
subsection we restrict to the four-bracket part
$\operatorname{Exp}_{4B}\mathcal S[A^{(1)}_{n,k}]$.

We prove the statement at NMHV
level, where all coefficients can be expressed explicitly in terms of
$R$-invariants \cite{Britto:2004nc,
Drummond:2008bq}, and where LL-adjacency was proven in \cite{Gurdogan:2020tip}.
\begin{theorem}
\label{th:NMHV-total-coefficient-adjacency}
The four-bracket expansion of the one-loop NMHV amplitude admits a
representation of the form
\begin{equation}
  \operatorname{Exp}_{4B}\mathcal S[A^{(1)}_{n,1}]=
  \sum_{\alpha,\nu}
  \lambda_{\alpha,\nu}\,
  R_{\alpha,\nu}\,
  a_\alpha\otimes b_\alpha,
\end{equation}
where each $R_{\alpha,\nu}$ is an NMHV $R$-invariant, and every term is
totally coefficient-adjacent.  Equivalently, for every summand
$R_{\alpha,\nu}\,a_\alpha\otimes b_\alpha$, the collection
$\{a_\alpha,b_\alpha\}\cup\operatorname{Pol}(R_{\alpha,\nu})$ is
contained in a common cluster of $\Gr(4,n)$.
\end{theorem}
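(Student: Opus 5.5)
We will prove the theorem by working term by term in the scalar-box representation. We organize the four-bracket symbol according to the decomposition of \cref{eq:symbolresult}, restricted to $\operatorname{Exp}_{4B}$, and use the word-by-word coefficients of \cref{prop:cancel,prop:4distinct,prop:3distinct,prop:repeated}. The four-mass sector is excluded, since its letters are not rational. At NMHV every box coefficient is an explicit linear combination of $R$-invariants. The one-mass, two-mass and three-mass coefficients are of the form $[a,b,c,d,e]$, with labels read off from the box corners: for instance $c^{\mathrm{3m}}_{i,i+1,j,k}\propto[i,j{-}1,j,k{-}1,k]$ on the generic branch, with the endpoint degenerations noted in \cref{thm:NMHV-projected-cut}. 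The poles of $[a,b,c,d,e]$ are the five four-brackets obtained by omitting one label. Compatibility is a statement in $\Gr(4,n)$. By the locality argument used for three-mass boxes in \cref{prop:exp_4B_adj}, each check reduces to a fixed small $\Gr(4,m)$ with $m\le 10$ after relabelling. There, clusters containing a prescribed set of Plücker and quadratic variables can be exhibited explicitly.

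\emph{LS terms.} For each term of \cref{eq:ll}, the last entry together with the poles of the coefficient is exactly the LL-adjacency statement proven in \cite{Gurdogan:2020tip}. We will check that the cluster it produces can be extended to contain the first entry. Concretely, we will show that, for each box type, the union of the pole set and the two multiplicatively expanded letters is pairwise weakly separated or sits inside a known cluster.
\begin{itemize}
\item For two-mass-easy and two-mass-hard boxes the first entries share three indices with the poles, so pairwise weak separation, in the sense of \cref{lem:basic-compatibility}, is the starting point.
\item Pairwise compatibility suffices only for Plücker coordinates (via Oh--Postnikov--Speyer). For the three-mass case, the second entry is the quadratic variable $\langle i\,j{-}1\,j\mid k{-}1\,k\mid i{-}1\,i\,i{+}1\rangle$. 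We will therefore produce an explicit cluster in $\Gr(4,8)$ or $\Gr(4,9)$ by a short mutation sequence from a rectangle seed.
\end{itemize}
Frozen letters coming from one-mass and two-mass-hard boxes are automatically compatible.

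\emph{Non-LS terms.} This is the main obstacle. The residual words $x^2_{ij}\otimes x^2_{il}$ have coefficient $d_{ijl}$, built from $C^\pm$, and the repeated-letter coefficients $d_{iii+2}$. These are combinations of $R$-invariants whose poles, individually, need not be compatible with $\langle i{-}1\,i\,j{-}1\,j\rangle$ and $\langle i{-}1\,i\,l{-}1\,l\rangle$. The plan is to rewrite these combinations using identities among $R$-invariants: the six-term identity, and the branchwise two-mass triangle relations of \cref{prop:2mtrls}, which at NMHV are the polytope dissections \cref{eq:NMHV-cyclic-triangulations}.
\begin{itemize}
\item First we will use $C^\pm_{ij,l}$. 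This is a sum or difference of two three-mass coefficients sharing the corner at $l$, and its $R$-invariants are of the form $[i,j{-}1,j,l{-}1,l]$ and $[i{-}1,i,j{-}1,j,\ldots]$-type labels.
\item Next, via the cyclic-polytope identity we will re-express $\tfrac12(C^+_{ij,l}-C^+_{jl,i}\pm C^-_{li,j})$ as a sum of five-brackets all containing both pairs $\{i{-}1,i\}$ and either $\{j{-}1,j\}$ or $\{l{-}1,l\}$. The guiding idea is that the region is a dissection of a polytope with vertices among $Z_{i-1},Z_i,Z_{j-1},Z_j,Z_{l-1},Z_l$, which can be triangulated by simplices all containing $Z_{i-1},Z_i$.
\item Any four-bracket pole of such a simplex then shares at least two indices with $\{i{-}1,i\}$ and lies in the span of these six labels. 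We will then verify weak separation of all resulting sets directly, including the two letters.
\end{itemize}
For the repeated two-particle letter, the coefficient is proportional to the sum in \cref{eq:dijj}. This is the IR combination \cref{eq:onemassir} minus the endpoint terms. We will rewrite it as tree-like sums $A^{(0)}_{\rm NMHV}$ on cyclic intervals avoiding the crossing labels, triangulated with the bridge placed at $\{i,i{+}1,i{+}2\}$. This makes every pole contain a frozen-adjacent triple, and hence compatible with the frozen letter.

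Finally, we will confirm that the rewritten representation reproduces the same coefficients. This follows since only valid $R$-invariant identities were used. We will also cross-check the compatibility statements numerically with the Sklyanin-bracket test for $n\le 10$.
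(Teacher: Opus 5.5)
Your overall strategy is the paper's: LL-adjacency for the last entries of the LS terms, weak separation plus a finite $\Gr(4,8)$ check (for the quadratic three-mass letter) for the first entries, and $R$-invariant identities in the non-LS sector. However, the first-entry step for the LS terms rests on a wrong description of the NMHV box coefficients, so as written it fails.

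Only the three-mass coefficient is a single five-bracket, and the two-mass-hard one is a sum of two. The one-mass and two-mass-easy coefficients are tree amplitudes on cyclic intervals, $c^{\rm 1m}_{i,i+1,i+2,i+3}=A^{(0)}_{\rm NMHV}([i{+}2,i])$ and $c^{\rm 2me}_{i,i+1,j,j+1}=A^{(0)}_{\rm NMHV}([i,j])+A^{(0)}_{\rm NMHV}([j,i])$. These contain $R$-invariants with essentially arbitrary labels inside the intervals.

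\begin{itemize}
\item Your "three shared indices" criterion therefore breaks down. In the BCFW representation of $A^{(0)}_{\rm NMHV}([i,j])$ with reference label $j$, there are poles $\langle a{-}1\,a\,b{-}1\,b\rangle$ with $i<a{-}1<b<j$. These share no label at all with the first entry $\langle i{-}1\,i\,j\,j{+}1\rangle$. Even for two-mass-hard, $\langle i{-}1\,i\,i{+}1\,k\rangle$ and $\langle i{+}1\,i{+}2\,k{-}1\,k\rangle$ share only two labels.
\item The one-mass box is covered only by your remark about frozen letters. Its first entry $\langle i{-}1\,i\,i{+}2\,i{+}3\rangle$ is not frozen and must be checked against every pole of $A^{(0)}_{\rm NMHV}([i{+}2,i])$.
\end{itemize}

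The missing idea is cyclic-interval support, which is what the paper uses. Every letter $\langle T\rangle$ of these boxes contains both endpoints of the relevant interval and otherwise only labels adjacent to those endpoints. Hence, for any pole $\langle P\rangle$ with $P$ in the interval, $P\setminus T$ lies in an arc containing no element of $T\setminus P$, and weak separation follows. With this replacement your LS analysis goes through.

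In the non-LS sector, the obstacle you anticipate is not actually there. Every $R$-invariant in the coefficient of $x^2_{ij}\otimes x^2_{il}$ is supported on $I_{ijl}=\{i{-}1,i,j{-}1,j,l{-}1,l\}$. This holds generically, where the coefficient is $\tfrac12(-R_i+R_{i-1}+R_j-R_{j-1}-R_l+R_{l-1})$, and also in the boundary cases after the two-mass-hard replacements. For any four-subset $S\subset I_{ijl}$ one has $S\setminus\{i{-}1,i,j{-}1,j\}\subseteq\{l{-}1,l\}$ and $S\setminus\{i{-}1,i,l{-}1,l\}\subseteq\{j{-}1,j\}$. These are cyclically adjacent pairs, which cannot be interleaved, so every pole is already weakly separated from both entries.

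The paper uses only the six-term identity, to rewrite the coefficient as $R_j-R_{j-1}$; both summands then have the last entry as a pole. You do not need the two-mass triangle relations or polytope dissections. Your geometric heuristic also does not describe what happens:
\begin{itemize}
\item Neither triangulation of the six points $Z_s$, $s\in I_{ijl}$, uses only simplices containing both $Z_{i-1}$ and $Z_i$.
\item The outcome of the identity is a signed difference, not a dissection.
\item Poles obtained by deleting $i{-}1$ or $i$ do not share two labels with $\{i{-}1,i\}$.
\end{itemize}
Finally, $x^2_{i,i+2}\otimes x^2_{i,i+2}$ becomes a frozen$\otimes$frozen word. It is automatically compatible with the five mutually weakly separated poles of any $R$-invariant, so its coefficient needs no rewriting.
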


\begin{proof}
We split the proof into LS and non-LS terms.

\smallskip
\noindent
\textbf{LS terms.}
For the LS part, last-entry coefficient adjacency follows from
LL-adjacency.  It remains to check compatibility with the first entry.
We do this using the scalar-box representation of the one-loop amplitude.

Recall that the poles of an NMHV $R$-invariant
$[a,b,c,d,e]$ are the five Pl\"ucker coordinates obtained by deleting one
of its five indices:
\begin{equation}
  \operatorname{Pol}([a,b,c,d,e])
  =
  \{
  \langle b c d e\rangle,
  \langle a c d e\rangle,
  \langle a b d e\rangle,
  \langle a b c e\rangle,
  \langle a b c d\rangle
  \}.
\end{equation}
We will repeatedly use weak separation to check compatibility of these
poles with the first entries of the corresponding symbol words.

\smallskip
\noindent
\emph{One-mass boxes.}
The relevant last entries are $x^2_{i,i+2}$ and $x^2_{i+1,i+3}$, which
become the frozen variables
$\langle i{-}1\,i\,i{+}1\,i{+}2\rangle$ and
$\langle i\,i{+}1\,i{+}2\,i{+}3\rangle$.  Hence compatibility with the
last entry is automatic.  The only non-frozen first entry is
$x^2_{i,i+3}$, which becomes
$\langle i{-}1\,i\,i{+}2\,i{+}3\rangle$.  The corresponding coefficient
is $c^{\rm 1m}_{i,i+1,i+2,i+3}=A^{(0)}_{\mathrm{NMHV}}([i{+}2,i])$.  The poles of the
$R$-invariants appearing in this amplitude are Pl\"ucker coordinates
supported on the cyclic interval $[i{+}2,i]$, and these are weakly
separated from $\langle i{-}1\,i\,i{+}2\,i{+}3\rangle$.  Thus the
one-mass LS terms are totally coefficient-adjacent.

\smallskip
\noindent
\emph{Two-mass-easy boxes.}
The nontrivial last entry is
$x^2_{i,j+1}x^2_{i+1,j}-x^2_{i,j}x^2_{i+1,j+1}$, which gives the product
$\langle i{-}1\,i\,i{+}1\,j\rangle
 \langle i\,j{-}1\,j\,j{+}1\rangle$ in momentum twistors.  The possible
first entries are
$\langle i{-}1\,i\,j\,j{+}1\rangle$,
$\langle i\,i{+}1\,j{-}1\,j\rangle$,
$\langle i{-}1\,i\,j{-}1\,j\rangle$, and
$\langle i\,i{+}1\,j\,j{+}1\rangle$.  The coefficient is
$c^{\rm 2me}_{i,i+1,j,j+1}=A^{(0)}_{\mathrm{NMHV}}([i,j])+A^{(0)}_{\mathrm{NMHV}}([j,i])$.  Every pole of an
$R$-invariant appearing in $A^{(0)}_{\mathrm{NMHV}}([i,j])$ or $A^{(0)}_{\mathrm{NMHV}}([j,i])$ is supported on one
of the two cyclic intervals $[i,j]$ or $[j,i]$.  By weak separation, such
poles are compatible with all the first and last entries listed above.
Therefore the two-mass-easy LS terms are totally coefficient-adjacent.

\smallskip
\noindent
\emph{Two-mass-hard boxes.}
The relevant last entries are $x^2_{i+1,k}$ and $x^2_{i,i+2}$,
becoming $\langle i\,i{+}1\,k{-}1\,k\rangle$ and the frozen variable
$\langle i{-}1\,i\,i{+}1\,i{+}2\rangle$.  The nontrivial first entries
are
$\langle i{-}1\,i\,k{-}1\,k\rangle$,
$\langle i{+}1\,i{+}2\,k{-}1\,k\rangle$, and
$\langle i\,i{+}1\,k{-}1\,k\rangle$.  The coefficient is
\begin{equation}
  c^{\rm 2mh}_{i,i+1,i+2,k}
  =
  [i{-}1,i,i{+}1,k{-}1,k]
  +
  [i,i{+}1,i{+}2,k{-}1,k].
\end{equation}
The poles of these two $R$-invariants are Pl\"ucker coordinates supported
on the five-element sets
$\{i{-}1,i,i{+}1,k{-}1,k\}$ and
$\{i,i{+}1,i{+}2,k{-}1,k\}$.  Each such pole is compatible with the
nontrivial first and last entries above.  This follows from weak
separation; for example, any pole sharing $k{-}1,k$ with
$\langle i\,i{+}1\,k{-}1\,k\rangle$ differs from it by replacing the
adjacent pair $\{i,i{+}1\}$, hence is weakly separated from it.  The
remaining poles are handled by the same adjacent-pair criterion.

\smallskip
\noindent
\emph{Three-mass boxes.}
The nontrivial last entry is the quadratic cluster variable
$\langle i\,j{-}1\,j \mid k{-}1\,k \mid i{-}1\,i\,i{+}1\rangle$.  The
possible first entries are
$\langle i\,i{+}1\,j{-}1\,j\rangle$,
$\langle i{-}1\,i\,k{-}1\,k\rangle$,
$\langle i\,i{+}1\,k{-}1\,k\rangle$, and
$\langle i{-}1\,i\,j{-}1\,j\rangle$.  The coefficient is
$c^{\rm 3m}_{i,i+1,j,k}=[i,j{-}1,j,k{-}1,k]$.

The Pl\"ucker poles of this $R$-invariant are all compatible with the
first entries by weak separation.  For compatibility with the quadratic
last entry, it is enough to check the minimal configuration in
$\Gr(4,8)$, obtained by relabelling
$i{-}1,i,i{+}1,j{-}1,j,j{+}1,k{-}1,k$ as $1,2,3,4,5,6,7,8$.  The needed
finite check is that
$\langle 2457\rangle$, $\langle 4578\rangle$,
$\langle 2578\rangle$, $\langle 2478\rangle$, and
$\langle 2458\rangle$ are compatible with
$\langle 245 \mid 78 \mid 123\rangle$ in the $\Gr(4,8)$ cluster algebra.
This proves total coefficient adjacency for the three-mass LS terms.

\smallskip
\noindent
\textbf{Non-LS terms.}
We now turn to the non-LS part.  At NMHV level the only non-LS words are
of the form $x^2_{ij}\otimes x^2_{i{l}}$.  In momentum twistors this is
$
  \langle i{-}1\,i\,j{-}1\,j\rangle
  \otimes
  \langle i{-}1\,i\,{l}{-}1\,{l}\rangle $.
  
Let $I_{ij{l}}=\{i{-}1,i,j{-}1,j,{l}{-}1,{l}\}$.  For
$s\in I_{ij{l}}$, write $R_s$ for the $R$-invariant whose five labels
are $I_{ij{l}}\setminus\{s\}$. By \cref{prop:3distinct},
 the three pairs
$\{i{-}1,i\}$, $\{j{-}1,j\}$, and $\{{l}{-}1,{l}\}$ do not overlap in the non-LS part. 
We first consider the generic case, where no two of the  pairs are adjacent.  The
coefficient of $x^2_{ij}\otimes x^2_{i{l}}$ is
\begin{align}
 \mathrm{Coeff}[x^2_{ij} \otimes x^2_{il}]
  &=
  \frac{1}{2}\Big(
  -c^{\rm 3m}_{i-1,i,j,l}
  +c^{\rm 3m}_{i,i+1,j,l}
  +c^{\rm 3m}_{i,j-1,j,l}
  -c^{\rm 3m}_{i,j,j+1,l}
  -c^{\rm 3m}_{i,j,l-1,l}
  +c^{\rm 3m}_{i,j,l,l+1}
  \Big) \notag \\
  &=
  \frac{1}{2}
  \left(
  -R_i+R_{i-1}+R_j-R_{j-1}-R_{l}+R_{{l}-1}
  \right)
  =
  R_j-R_{j-1}.
  \label{eq:generic-nonLS-coeff}
\end{align}
In the last step we used the six-term identity among the six
$R$-invariants supported on $I_{ij{l}}$.

The important point is that both $R_j$ and $R_{j-1}$ contain the last
entry $\langle i{-}1\,i\,l{-}1\,l\rangle$ as one of their poles.
Thus last-entry coefficient adjacency is immediate.  Moreover, the poles
of $R_j$ and $R_{j-1}$ are also compatible with the first entry
$\langle i{-}1\,i\,j{-}1\,j\rangle$ by weak separation.  For instance,
the pole $\langle i{-}1\,j{-}1\,l{-}1\,l\rangle$ of $R_j$ is
compatible with $\langle i{-}1\,i\,j{-}1\,j\rangle$, since the symmetric
difference consists of the adjacent pair $\{l{-}1,l\}$ and the
pair $\{i,j\}$.

It remains to check the boundary cases, where one of the pairs becomes
adjacent to another.  These are obtained from the generic expression by
replacing the corresponding pair of three-mass coefficients with a two-mass-hard
coefficient.  The only replacements needed are
\begin{alignat}{3}
&i=j-2:  \quad && c^{\rm 3m}_{i,i+1,j,l}
  +c^{\rm 3m}_{i,j-1,j,l}
  \quad\to \quad  && c^{\rm 2mh}_{i,i+1,j,{l}}
   =
  R_j+R_{i-1},
\\
&j=l-2: \quad   && c^{\rm 3m}_{i,j,j+1,l}
  +c^{\rm 3m}_{i,j,l-1,l}  \quad\to \quad
&&   c^{\rm 2mh}_{i,j,j+1,l}
   =
  R_l+R_{j-1}, \text{ and}\\
&l=i-2: \quad 
&&c^{\rm 3m}_{i,j,l,l+1} + c^{\rm 3m}_{i-1,i,j,l}
 \quad\to  \quad &&   c^{\rm 2mh}_{i,j,l,l+1}
  =
  R_i+R_{{l}-1}. 
\end{alignat}
The sums in the first two lines correspond to  combinations in \cref{eq:generic-nonLS-coeff}, leaving its second line unchanged, but the third line has a discrepancy of signs.
All boundary cases thus reduce to one of the following two forms:
\[
\begin{array}{c|c|c}
\text{case} &\mathrm{Coeff}[x^2_{ij} \otimes x^2_{il}] & \mathrm{Coeff}^{\mathrm{LS}}[x^2_{ij} \otimes x^2_{il}] \\
\hline
{l}=i{-}2 
&
R_j-R_{j-1}+R_i
&
R_i+R_{{l}-1}
\\
\text{all other cases} 
&
R_j-R_{j-1}
&
0
\end{array}
\]
In the second line, the argument above for cluster adjacency applies directly: the coefficient
is $R_j-R_{j-1}$, and the last entry is a pole of both summands.

In the first line, the summands $R_j$ and $R_{j-1}$ are treated in the
same way.  The additional summand $R_i$ is precisely one of the
$R$-invariants already appearing in the LS contribution to the same
letter pair.  Hence its poles are compatible with both entries by the
LS analysis above.  Thus the non-LS terms are also totally
coefficient-adjacent.

Combining the LS and non-LS analyses proves the theorem.
\end{proof}

We conjecture that \cref{th:NMHV-total-coefficient-adjacency} generalizes for any helicity.

\begin{conjecture}[Total coefficient adjacency]
\label{conj:total-coefficient-adjacency}
The four-bracket expansion of the one-loop N$^k$MHV amplitude admits a
representation of the form 
\begin{equation}
\operatorname{Exp}_{4B}\mathcal S[A^{(1)}_{n,k}]
=
\sum_{\alpha,\nu}
\lambda_{\alpha,\nu}\,
Y^{(k)}_{\alpha,\nu}\,
a_\alpha\otimes b_\alpha ,
\end{equation}
where the $Y^{(k)}_{\alpha,\nu}$ are rational Yangian invariants and
$
\{a_\alpha,b_\alpha\}
\cup
\operatorname{Pol}(Y^{(k)}_{\alpha,\nu})$
is contained in a common cluster of $\Gr(4,n)$.
\end{conjecture}

\section{Discussion}
\label{sec:discussion}

In this work, we have studied the weight-two symbol of the full one-loop amplitude in planar
$\mathcal N=4$ SYM, at arbitrary multiplicity and helicity and in dimensional regularization.
Within the scalar-box representation, the result separates into a leading-singularity part
controlled by nested cuts, a distinct four-mass algebraic sector, and residual non-leading-
singularity terms. The LS decomposition isolates the part of the
symbol that is directly accounted for by maximal cuts and their leading
singularities, while the residual terms make precise what remains beyond
this description. The two-mass triangle relations explain the cancellation of the
difference-type letters and simplify part of this residual sector. Their branchwise refinement
follows directly from a residue theorem on the two components of the triple cut and suggests
a geometric interpretation as two dissections of the same projected cut region. We have
also established cluster adjacency for the rational symbol in $\mathrm{Fl}_{2,4;n}$, including
the infinity-twistor letters associated with infrared divergences. The four-mass letters pass
a natural Sklyanin-bracket test, while the four-bracket sector of the NMHV amplitude admits
a totally coefficient-adjacent representation. 
It would be interesting to check if this property extends to all helicity sectors.
Our results identify
a concrete interface between generalized unitarity, Landau analysis,
and cluster structure, and make precise several of the questions that
remain in passing from one description to the other.

The most immediate higher-loop extension is to two loops. The scalar-box expansion is then
replaced by a richer space of master integrals. Double-box and pentabox sectors provide concrete settings in
which to test whether multivariate residue theorems generate relations among leading
singularities analogous to the two-mass triangle relations, and whether these relations have
an interpretation as alternative dissections of projected loop-cut geometries. More generally,
one may ask whether nested two-loop cuts and their Landau discriminants control the ordering
of the four symbol entries, whether there is a useful higher-loop LS/non-LS decomposition,
and whether the full dimensionally regulated two-loop amplitude remains cluster adjacent in
the appropriate flag variety. The four-mass sector also suggests
seeking an extension of the cluster or Poisson framework to the more general algebraic structures encountered at higher loops.

Several ingredients of our analysis are not intrinsically tied to maximal supersymmetry.
Unitarity cuts, Landau geometry, dimensional regularization and the partial-flag varieties
describing massless kinematics all make sense in theories with less supersymmetry, including
pure Yang--Mills theory and QCD. What is special to planar $\mathcal N=4$ SYM is instead
the box-only expansion, Yangian invariance and the amplituhedron interpretation. In a generic
gauge theory, triangle and bubble integrals, non-uniform transcendental weight and rational
terms must also be included, with the latter being invisible to the symbol and to ordinary
four-dimensional cuts. Nevertheless, recent results show that partial-flag cluster algebras
organize substantial parts of QCD symbol alphabets and that the maximal-weight parts of
planar two-loop MHV QCD hard functions have leading-singularity prefactors computable
from on-shell diagrams
\cite{PokrakaSpradlinVolovichWeng:2025,CarroloEtAlPrescriptiveQCD}.
It would be interesting to determine which parts of our Landau and cluster picture survive
for the complete dimensionally regulated amplitude, away from the maximal-weight and
infrared-subtracted sectors.

Finally, the coaction may provide a natural framework in which to combine these questions.
The diagrammatic coaction organizes master integrands together with independent cut contours
and thereby encodes discontinuities and differential equations
\cite{AbreuEtAlMultipleCuts,Abreu:2021vhb}.
This raises the possibility that the nested-cut hierarchy underlying the LS part has a
coaction-compatible formulation.

\paragraph{AI-assisted technology.}
During the preparation of this manuscript, the authors used OpenAI ChatGPT and Anthropic Claude for language editing, organizational feedback, reference checking, and consistency checks. All scientific statements, calculations, and conclusions were independently verified by the authors, who take full responsibility for the content.

\acknowledgments

We thank Samuel Abreu, James Drummond, \"Omer G\"urdo\u{g}an, and Tristan McLoughlin for helpful discussions.
Computations were performed with the assistance of the packages {\tt PolyLogTools} \cite{Duhr:2019tlz} and {\tt loop\_amplitudes} \cite{Bourjaily:2013mma}. 
We acknowledge the support of the Munich Institute
for {Astro-,} Particle and BioPhysics (MIAPbP) which is
funded by the Deutsche Forschungsgemeinschaft (DFG,
German Research Foundation) under Germany’s Excellence Strategy—EXC-2094—390783311; the support of the Galileo Galilei Institute for Theoretical Physics and the INFN; and the Erwin Schrödinger International Institute for Mathematics and Physics (ESI), University of Vienna (Austria).
This work is funded by the European Union (ERC, MaScAmp, 101167287). Views and opinions expressed are however those of the author(s) only and do not necessarily reflect those of the European Union or the European Research Council Executive Agency. Neither the European Union nor the granting authority can be held responsible for them.

\appendix
\label{sec:app}
\section{Background for amplitudes}
\label{app:formulas}
\subsection{Kinematic variables}
\label{app:kinvar}
\paragraph{Spinor-helicity variables.}
Consider a particle with momentum $p^\mu$.
We can construct a $2 \times 2$ matrix $p_{a\dot b}$ by contraction with the Pauli matrices, 
\begin{equation}
p_{a\dot b} := p_\mu (\sigma^\mu)_{a\dot b},
\label{eq:pauli-dictionary}
\end{equation}
where $\sigma^\mu=(\mathbb{I},\sigma^i)$ and $\sigma^i$ for $i=1,2,3$ are the usual Pauli matrices.
If $p^2=0$, it follows that $\det(p_{a\dot b})=0$ and therefore that it is possible to write 
\begin{equation}
    p_{a\dot b} = \lambda_a \tilde\lambda_{\dot b},
\end{equation}
where $\lambda$ and $\tilde\lambda$ are two-component Weyl spinors of positive and negative chirality, respectively.

Spinor indices are raised and lowered with the Levi-Civita symbols $\epsilon^{ab}$ and $\epsilon^{\dot a \dot b}$ and their inverses, $\epsilon_{ab}$ and $\epsilon_{\dot a \dot b}$. Angle and square spinor brackets are used to denote the contractions
\begin{equation}\label{eq:spinorbracket}
\langle ij \rangle:=\epsilon_{ a b}\lambda_i^{ a}\lambda_j^{ b},
\qquad
\lbrack ij\rbrack:=\epsilon^{\dot a \dot b}\tilde\lambda_{i\,\dot a}\tilde\lambda_{j\, \dot b},
\end{equation}
where we have used the shorthand notation $\vert i\rangle^{a} = \lambda^{a}_i$ and $\vert i\rbrack^{\dot a} = \tilde\lambda^{\dot a}_i$ for a particle with momentum $p_i^\mu$. A Mandelstam invariant $s_{ij}=(p_i+p_j)^2$ is then equal to $\langle ij\rangle \lbrack ji\rbrack$.  We denote the contraction of momentum $P_{a \dot b}$ with  $\lambda_i,\tilde\lambda_j$ by
\begin{equation}
    \langle i \vert P \vert j\rbrack := \lambda^{a}_i P_{ a \dot b} \tilde\lambda_j^{\dot b}.
\end{equation}
\paragraph{Momentum twistors.}
The relation of a dual momentum $x_i$, with $p_i=x_i-x_{i+1}$ implies the incidence relation 
\begin{equation}
    \lbrack \mu_i\vert^{\dot a}=\langle i|_{ a} x_i^{a \dot a}=\langle i|_{a} x_{i+1}^{ a \dot a}.
\end{equation}
A momentum twistor is then defined as \cite{Hodges:2009hk}
\begin{equation}\label{eq:momtwist}
    Z_i^{J}:=\left( \vert i \rangle^{a}, \lbrack \mu_i\vert^{\dot a}\right),
\end{equation}
where $J=(a,\dot a)$ is an $\mathrm{SU}(2,2)$ index. We note that $Z_i^J$ is associated to two points $x_i^\mu$ and $x_{i+1}^\mu$ in the dual space. Thus, a point in $x$-space corresponds to a line in the momentum twistor $Z$-space. We can form a dual conformal invariant four-bracket as
\begin{equation}
    \langle i j k l\rangle:= \epsilon_{IJKL}Z^I_i Z^J_jZ^K_kZ^L_l.
\end{equation}
We also choose a line $I=\langle A,B\rangle\subset \mathbb P^3$, called the
\emph{infinity twistor}. It determines spinor-helicity two-brackets by
$\langle a b\rangle_I:=\langle A B a b\rangle$, since under \cref{eq:momtwist} this reproduces the spinor-bracket of \cref{eq:spinorbracket}. When no confusion can
arise, we suppress the subscript $I$ and write simply $\langle a b\rangle$
for $\langle a b\rangle_I$.
We can relate dual variables to twistors as
\begin{equation}\label{eq:dualtotwist}
    x_{ij}^2 = \frac{\langle i-1\,i\,j-1\,j\rangle}{\langle i-1,i\rangle_I\langle j-1,j\rangle_I}.
\end{equation}
For lines and planes we also use the intersection bracket
\begin{equation}\label{eq:aintersection}
\langle abc\vert de\vert fgh\rangle
:=\langle abcd\rangle\langle efgh\rangle
-\langle abce\rangle\langle dfgh\rangle,
\end{equation}
which represents the contraction of the intersection point $(de)\cap(abc)$ with the
plane $(fgh)$. 

The above construction extends to ${\mathcal N}=4$ SYM superspace. In addition to the dual coordinates, we introduce dual fermionic coordinates $\theta_{i}^\alpha$, with $\alpha$ an $\mathrm{SU}(4)$ R-symmetry label, as
\begin{equation}
    \vert \theta_{i}^{\alpha}\rangle - \vert\theta_{i+1}^{\alpha}\rangle = \vert i\rangle\eta_{i}^{\alpha}.
    \label{eq:theta-dual}
\end{equation}
We can then define the fermionic analogue of incidence relations
\begin{equation}
    \eta^\alpha_i=\langle i \theta_{i}^{\alpha}\rangle =\langle i \theta_{i+1}^{\alpha}\rangle, 
\end{equation}
allowing us to extend the $\mathrm{SU}(2,2)$ momentum twistors to $\mathrm{SU}(2,2|4)$ momentum super-twistors $\mathcal{Z}_i^{A}$
\begin{equation}
\mathcal{Z}_i^{A}:=\left( \vert i \rangle^{a}, \lbrack \mu_i\vert^{\dot a} , \eta_{i}^{\alpha}\right), \qquad \text{where } A=(a,\dot a,\alpha).
\end{equation}
The super-twistors transform linearly under the dual superconformal group $\mathrm{SU}(2,2|4)$, so quantities built out of them are manifestly dual superconformal invariant.
\subsection{One-loop MHV amplitudes}
\label{app:onelmhv}
The MHV amplitude is built from one-mass and two-mass-easy box functions only, and the nonvanishing coefficients are the same and equal to the tree MHV amplitude \cite{Bern:1994zx}, so that
\begin{align}
    A^{(1)}_{n,\,\mathrm{MHV}}=&\sum_{i=1}^n c^{\mathrm{1m}}_{i,i+1,i+2,i+3} F^{\rm 1m}_{i,i+1,i+2,i+3}+\frac{1}{2}\sum_{i=1}^n \sum_{j=i+3}^{i+n-3}c^{\mathrm{2me}}_{i,i+1,j,j+1} F^{\rm 2me}_{i,i+1,j,j+1}\\=& A^{(0)}_{n,\,\mathrm{MHV}}\left[\sum_{i=1}^n  F^{\rm 1m}_{i,i+1,i+2,i+3}+\frac{1}{2}\sum_{i=1}^n \sum_{j=i+3}^{i+n-3} F^{\rm 2me}_{i,i+1,j,j+1}\right].
\end{align}
Using \cref{eq:1mbox-symbol,eq:2mebox-symbol}, we find that the weight-two symbol of the transcendental function defined by $V^{(1)}_{n,\,\mathrm{MHV}}=A^{(1)}_{n,\,\mathrm{MHV}}/A^{(0)}_{n,\,\mathrm{MHV}}$ is equal to
\begin{align}
    \mathcal{S}\left[  V^{(1)}_{n,\,\mathrm{MHV}}\right]=& \frac{1}{2}\sum_{i=1}^n \sum_{j=i+3}^{i+n-3}\frac{x^2_{{i+1},j}x^2_{i,j+1}}{x^2_{ij} x^2_{{i+1},{j+1}}} \otimes
   \left(x^2_{i,j+1} x^2_{{i+1},j}-x^2_{ij} x^2_{{i+1},{j+1}}\right)\nonumber\\&+\sum_{i=1}^n \left[\frac{x^2_{i,{i+3}}}{x^2_{i,{i+2}}x^2_{{i+1},{i+3}}}\otimes x^2_{i,{i+2}}x^2_{{i+1},{i+3}}+x^2_{i,{i+2}}\otimes x^2_{i,{i+2}}\right]\label{eq:amplmhvVn}.
\end{align}
\subsection{One-loop NMHV amplitudes}
\label{app:onelnmhv}
The basic dual superconformal invariants at NMHV level are the R-invariants \cite{Drummond:2008vq}, which in terms of momentum super-twistors can be written as
\begin{equation}
    R_{nij}=\frac{\delta^{(4)}\left(\langle i-1,i,j-1,j\rangle \chi_n +\text{cyclic} \right)}{\langle n,i-1,i,j-1\rangle \langle i-1,i,j-1,j\rangle \langle i,j-1,j,n\rangle \langle j-1,j,n,i-1\rangle \langle j,n,i-1,i\rangle}.
\end{equation}
Given that the above expression is cyclic in $\left(n,i-1,i,j-1,j \right)$, we can introduce the 5-bracket notation
\begin{align}
    R_{nij}=\left[n,i-1,i,j-1,j \right],
\end{align}
which is totally antisymmetric in its five arguments.

In terms of the R-invariants, the NMHV tree-level superamplitude is \cite{Drummond:2008vq}
\begin{equation}
    A^{(0)}_{n,\,\mathrm{NMHV}}
    = A^{(0)}_{n,\,\mathrm{MHV}}\sum_{i=2}^{n-3} \sum_{j=i+2}^{n-1} R_{nij}.
    \label{eq:NMHVtree}
\end{equation}

At one loop, each box coefficient of the NMHV superamplitude is a linear combination of R-invariants multiplying the tree MHV prefactor. Explicit expressions for all coefficients, together with a representation of the ratio function in terms of dual conformal cross-ratios, are given in Sec.~5 of \cite{Elvang:2009ya}, whose results we use. Since we work with the unnormalized amplitude in dimensional regularization, we restore the IR-divergent part through the universal formula of \cref{eq:generalir}.
\section{Coefficients of words}
\label{app:coeff}
In the following table we give explicitly the coefficients of the words $x_{ij}^2 \otimes x_{il}^2$. The table assumes an ordering $i<j<l$. The $i<l<j$ ordering can be obtained by reflection, replacing every index $r$ with $ n-r$.
\begin{sidewaystable}
\centering
\renewcommand{\arraystretch}{1.4}
\resizebox{\textheight}{!}{%
\begin{tabular}{||c|c|c|c|c|} 
\hline
\begin{tabular}[c]{@{}c@{}} Adjacency conditions \end{tabular}
&$x^2_{ij} \otimes x^2_{il}$ & $\mathrm{Coeff}[x^2_{ij} \otimes x^2_{il}]$ & $\mathrm{Coeff}^{\mathrm{LS}}[x^2_{ij} \otimes x^2_{il}]$& $d_{ijl}$ \\
\hline
\hline
\begin{tabular}[c]{@{}c@{}}$|i-j|,|i-l|\geq 3$\\ $|j-l| \geq 3$\end{tabular} &$x^2_{ij} \otimes x^2_{il}$ &\begin{tabular}[c]{@{}c@{}}
$\half\left[-c^{\rm 3m}_{i-1,i,j,l} + c^{\rm 3m}_{i,i,j,l}
+ c^{\rm 3m}_{i,j-1,j,l} \right.$\\
$\left.- c^{\rm 3m}_{i,j,j,l}-c^{\rm 3m}_{i,j,l-1,l} + c^{\rm 3m}_{i,j,l,l+1}\right]$
\end{tabular}&-&$\mathrm{Coeff}[x^2_{ij} \otimes x^2_{il}]$\\
\hline
\begin{tabular}[c]{@{}c@{}}$|i-j|, |i-l|\geq 3,$\\ $|j-l| =2$\end{tabular} & $x^2_{ij} \otimes x^2_{i,j+2}$  & 
\begin{tabular}[c]{@{}c@{}}
$\half\left[-  c^{\rm 3m}_{i-1,i,j,j+2} +  c^{\rm 3m}_{i,i,j,j+2}
   +  c^{\rm 3m}_{i,j-1,j,j+2} \right. $\\
$ \left.+  c^{\rm 3m}_{i,j,j+2,j+3} -c^{\rm 2mh}_{i,j,j,j+2}\right] $
\end{tabular}&-&$\mathrm{Coeff}[x^2_{ij} \otimes x^2_{il}]$\\
\hline
\begin{tabular}[c]{@{}c@{}}$|i-j|, |j-l|\geq 3,$\\ $|i-l| =2$\end{tabular} &$x^2_{ij} \otimes x^2_{i,i-2}$  & 
\begin{tabular}[c]{@{}c@{}}
$\half\left[c^{\rm 3m}_{i-2,i,i,j}
   +  c^{\rm 3m}_{i-2,i,j-1,j} -  c^{\rm 3m}_{i-2,i,j,j}\right]
    $\\
$\left. - c^{\rm 3m}_{i-3,i-2,i,j} +  c^{\rm 2mh}_{i-2,i-1,i,j} \right]$
\end{tabular}&$ c^{\rm 2mh}_{i-2,i-1,i,j}$&\begin{tabular}[c]{@{}c@{}}
$\half\left[ c^{\rm 3m}_{i-2,i,i,j}
   +  c^{\rm 3m}_{i-2,i,j-1,j} -  c^{\rm 3m}_{i-2,i,j,j}\right]
    $\\
$\left. - c^{\rm 3m}_{i-3,i-2,i,j} -  c^{\rm 2mh}_{i-2,i-1,i,j} \right]$
\end{tabular}\\
\hline
\begin{tabular}[c]{@{}c@{}}$|i-l|, |j-l|\geq 3,$\\ $|i-j| =2$\end{tabular} &$x^2_{i,i+2} \otimes x^2_{i,l}$  & 
\begin{tabular}[c]{@{}c@{}}
$\half\left[-c^{\rm 3m}_{i-1,i,i+2,l} - c^{\rm 3m}_{i,i+2,i+3,l} -c^{\rm 3m}_{i,i+2,l-1,l}\right. $\\
$\left.+ c^{\rm 3m}_{i,i+2,l,l+1} + c^{\rm 2mh}_{i,i,i+2,l}\right] $
\end{tabular}&-&$\mathrm{Coeff}[x^2_{ij} \otimes x^2_{il}]$\\
\hline
\begin{tabular}[c]{@{}c@{}}$|i-j|= |i-l|=2,$\\ $|j-l| \geq 3$\end{tabular} &$x^2_{i,i + 2} \otimes x^2_{i,i - 2}$  & 
\begin{tabular}[c]{@{}c@{}}
$ \half\left[- c^{\rm 3m}_{i-2,i,i+2,i+3}  
   -  c^{\rm 3m}_{i-3,i-2,i,i+2}\right.$\\ $ \left.+  c^{\rm 2mh}_{i-2,i,i,i+2}
   + c^{\rm 2mh}_{i-2,i-1,i,i+2} \right] $
\end{tabular}&$  c^{\rm 2mh}_{i-2,i-1,i,i+2} $&\begin{tabular}[c]{@{}c@{}}
$ \half\left[- c^{\rm 3m}_{i-2,i,i+2,i+3}  
   -  c^{\rm 3m}_{i-3,i-2,i,i+2}\right.$\\ $ \left.+  c^{\rm 2mh}_{i-2,i,i,i+2}
   - c^{\rm 2mh}_{i-2,i-1,i,i+2} \right] $
\end{tabular}\\
\hline
\begin{tabular}[c]{@{}c@{}}$|i-j| =|j-l|= 2,$\\ $|i-l| \geq 3$\end{tabular} &$x^2_{i,i + 2} \otimes x^2_{i,i + 4}$  & 
\begin{tabular}[c]{@{}c@{}}
$ \half\left[  c^{\rm 3m}_{i,i+2,i+4,i+5} - c^{\rm 3m}_{i-1,i,i+2,i+4} \right.$\\
 $ \left.  + c^{\rm 2mh}_{i,i,i+2,i+4}
   - c^{\rm 2mh}_{i,i+2,i+3,i+4}  \right]$
\end{tabular}&-&$\mathrm{Coeff}[x^2_{ij} \otimes x^2_{il}]$\\
\hline
\begin{tabular}[c]{@{}c@{}}$|j-l| =|i-l|= 2,$\\ $|i-j| \geq 3$\end{tabular} &$x^2_{i,i - 4} \otimes x^2_{i,i - 2}$  & 
\begin{tabular}[c]{@{}c@{}}
$\half\left[     c^{\rm 3m}_{i-4,i-2,i,i} 
+  c^{\rm 3m}_{i-5,i-4,i-2,i}\right.$\\$
  \left.  
   - c^{\rm 2mh}_{i-4,i-3,i-2,i} +  c^{\rm 2mh}_{i-4,i-2,i-1,i}  \right] $
\end{tabular}
&$  c^{\rm 2mh}_{i-4,i-2,i-1,i}$
&\begin{tabular}[c]{@{}c@{}}
$\half\left[     c^{\rm 3m}_{i-4,i-2,i,i} 
+  c^{\rm 3m}_{i-5,i-4,i-2,i}\right.$\\$
  \left.  
   - c^{\rm 2mh}_{i-4,i-3,i-2,i} -  c^{\rm 2mh}_{i-4,i-2,i-1,i}  \right] $
\end{tabular}
\\
\hline
\begin{tabular}[c]{@{}c@{}}$|i-j| =|i-l|=$\\ $=|j-l|=2$
\end{tabular} &$x^2_{13} \otimes x^2_{15}$  & 
\begin{tabular}[c]{@{}c@{}}
$  \half\left[  c^{\rm 2mh}_{1235} - c^{\rm 2mh}_{1345} +  c^{\rm 2mh}_{1356} \right]  $
\end{tabular}
&$  c^{\rm 2mh}_{1356}$& \begin{tabular}[c]{@{}c@{}}
$  \half\left[ c^{\rm 2mh}_{1235} - c^{\rm 2mh}_{1345} -  c^{\rm 2mh}_{1356} \right]  $
\end{tabular}
\\
\hline
\end{tabular}%
}
\caption{}
\end{sidewaystable}
\newpage
\section{Special cases of Proposition \ref{prop:repeated}} 
\label{ref:app_special}
\paragraph{Proof of Proposition \ref{prop:repeated}}
For $n\geq7$ and $j=4$, we consider the word $x^2_{14} \otimes x^2_{14}$, since its coefficient is equivalent to that of $x^2_{1,n-2} \otimes x^2_{1,n-2}$, up to relabeling. We have
\begin{align}
\mathrm{Coeff}[x^2_{14} \otimes x^2_{14}]=& 
-2 c^{\rm 2mh}_{1,2,4,n}
-2 c^{\rm 2mh}_{1,3,4,5}
+\half c^{\rm 2mh}_{1,4,5,6}
+\half c^{\rm 2mh}_{1,4,n-1,n}
\nonumber \\ &
+
c^{\rm 1m}_{1,2,3,4} + c^{\rm 2me}_{1,4,5,n} 
- c^{\rm 2me}_{1,2,4,5} 
- c^{\rm 2me}_{1,3,4,n}
 \nonumber \\ &
+ \sum_{j=7}^{n-1} \half c^{\rm 3m}_{1,4,5,j}
+ \sum_{j=6}^{n-2} \half c^{\rm 3m}_{1,4,j,n}
- \sum_{j=6}^{n-1} \half c^{\rm 3m}_{1,3,4,j}
- \sum_{j=6}^{n-1} \half c^{\rm 3m}_{1,2,4,j}\\
=&\frac{1}{2}\left(t_{1,2,4}-t_{3,4,1}+t_{4,5,1}-t_{n,1,4}\right)- c^{\rm 2mh}_{1,2,4,n}
- c^{\rm 2mh}_{1,3,4,5}.
\end{align}
The coefficient is then equal to
\begin{equation}
    \mathrm{Coeff}[x^2_{14} \otimes x^2_{14}]=- c^{\rm 2mh}_{1,2,4,n}
- c^{\rm 2mh}_{1,3,4,5}=\mathrm{Coeff}^{\mathrm{LS}}[x^2_{14} \otimes x^2_{14}].
\end{equation}

For $n=6$, the coefficient of $w=x^2_{14} \otimes x^2_{14}$ is
\begin{align}
\mathrm{Coeff}[x^2_{14} \otimes x^2_{14}]=&-2 c^{\rm 2mh}_{1,2,4,6}
-2 c^{\rm 2mh}_{1,3,4,5}
+ c^{\rm 1m}_{1,2,3,4} 
+ c^{\rm 1m}_{1,4,5,6}
- c^{\rm 2me}_{1,2,4,5} 
- c^{\rm 2me}_{1,3,4,6}
\nonumber \\
& = \frac{1}{2}\left(t_{1,2,4}-t_{3,4,1}+t_{4,5,1}-t_{6,1,4}\right)-c^{\rm 2mh}_{1,2,4,6}
-c^{\rm 2mh}_{1,3,4,5},
\end{align}
which reduces to
\begin{equation}
    \mathrm{Coeff}[x^2_{14} \otimes x^2_{14}]=- c^{\rm 2mh}_{1,2,4,6}
- c^{\rm 2mh}_{1,3,4,5}=\mathrm{Coeff}^{\mathrm{LS}}[x^2_{14} \otimes x^2_{14}].
\end{equation}

Finally, for $n=6$ and $x^2_{13} \otimes x^2_{13}$, we have
\begin{align}
\mathrm{Coeff}[x^2_{13} \otimes x^2_{13}]=& 
- c^{\rm 1m}_{1,2,3,4}
- c^{\rm 1m}_{1,2,3,6}
- \half c^{\rm 2mh}_{1,2,3,5}
+\half c^{\rm 2mh}_{1,3,4,5}
+\half c^{\rm 2mh}_{1,3,5,6}
+ c^{\rm 2me}_{1,3,4,6} 
\\ &= \frac{1}{2}\left(t_{3,4,1}-t_{6,1,3}- c^{\rm 1m}_{1,2,3,6}- c^{\rm 1m}_{1,2,3,4}-c^{\rm 2mh}_{1,2,3,5}\right).
\end{align}
This reduces to
\begin{equation}
    \mathrm{Coeff}[x^2_{13} \otimes x^2_{13}]=\frac{1}{2}\left(- c^{\rm 1m}_{1,2,3,6}- c^{\rm 1m}_{1,2,3,4}-c^{\rm 2mh}_{1,2,3,5}\right)=\half\mathrm{Coeff}^{\mathrm{LS}}[x^2_{13} \otimes x^2_{13}].
\end{equation}
 \qed
\newpage
\section{Proof of Two-Mass Triangle Relations} \label{ref:app_2mt_relations}
We give here the full proof of \cref{prop:2mtrls}, which was sketched in \cref{sec:sketchproof}.
\paragraph{Parametrizing the triple cut of the two-mass triangle.} Our starting point is the triple cut of the two-mass triangle shown in \cref{fig:triplecut}. 
  We place the three chosen propagators on shell with delta functions, and integrate over the Grassmann variables associated with each cut propagator in order to carry out the state super-sum \cite{Bern:2009xq}, giving the integral
\begin{align}
{\mathcal C}_{n1i}=\int \prod_{r=1,i,n} \mathrm{d}^4 \eta_{\ell_r}\mathrm{d}^4\ell& \delta\left(\ell^2\right)\delta\left(\left(\ell+x_{1 i}\right)^2\right)\delta\left(\left(\ell-p_n\right)^2\right) A_{1,i}({\ell}_1,p_1,\dots,p_{i-1},-\ell_i)\nonumber\\&
A_{i,n}(\ell_i,p_{i},\dots,p_{n-1},-\ell_n)
A_{n,1}(\ell_n,p_n,-{\ell}_1).
\end{align}
The solutions to the three on-shell conditions can be parametrized in terms of spinors (see \cref{app:kinvar}) as follows.
We expand the loop momentum $\ell:=\ell_1$ in terms of a basis of null momenta whose coefficients are parameters $a,b,c,d$,
\begin{equation}
\ell^{a\dot{a}}_1=\ell^{a\dot{a}}=a\lambda_A^a \tilde{\lambda}_A^{\dot{a}}+b\lambda_B^a \tilde{\lambda}_B^{\dot{a}}+c\lambda_A^a \tilde{\lambda}_B^{\dot{a}}+d\lambda_B^a \tilde{\lambda}_A^{\dot{a}}.
\end{equation}
We further choose  
\begin{equation}
 \lambda_A = \lambda_n, \quad \tilde{\lambda}_A=\tilde{\lambda}_n, 
\quad \lambda_B = x_{1 i}\cdot\tilde{\lambda}_n, \quad \tilde\lambda_B = x_{1 i}\cdot{\lambda}_n.
\end{equation}
Notice that $b,c,d$ are not dimensionless.
The three cut conditions 
$0=\ell_1^2$, $0=\ell_i^2=(\ell_1+x_{1 i})^2$, $0=\ell_n^2=(\ell_1-p_n)^2$
respectively impose the constraints
\begin{equation}\label{eq:abvalues}
    ab+cd=0, \quad a=\frac{x_{1 i}^2}{x_{1 i}^2-x_{i n}^2}, \quad b=0,
\end{equation} 
where we have used the fact that $\langle n \vert x_{1 i} \vert n \rbrack = 2 p_n \cdot x_{1 i} = x_{i n}^2-x_{1 i}^2$.
The integration measure along with the delta functions can be written as
\begin{equation}
\int \mathrm{d}^4\ell \,\delta\left(\ell^2\right)\delta\left(\left(\ell+x_{1 i}\right)^2\right)\delta\left(\left(\ell-p_n\right)^2\right)
=J\int \delta \left(ab+cd\right)\delta(b)\delta\left(a+\frac{x_{1 i}^2}{x_{i n}^2-x_{1 i}^2}\right)\mathrm{d}a\,\mathrm{d}b\, \mathrm{d}c\,\mathrm{d}d,
\end{equation}
where $J$ is the Jacobian factor, given by 
\begin{equation} J
={\langle \lambda_A \lambda_B \rangle}^2 {\lbrack \tilde{\lambda}_A\tilde{\lambda}_B \rbrack}^2=(x_{i n}^2-x_{1 i}^2)^2.
\end{equation}
At this point, we see that $a$ and $b$ are localized by the last two delta functions, and the first can be rewritten as $\delta(cd)$, giving a discrete choice of $c=0$ or $d=0$.
 These two cases are related by reversing the chirality of the spinors, so we will focus on the case $c=0$. 
 
 Thus we have $\delta(cd)=\delta(c)/d$, and we are left with a one-dimensional integral for the triple cut. We finally rewrite our remaining parameter as 
 \begin{equation}
    d=z
\end{equation}
 to emphasize that it is a complex variable to which we will apply the residue theorem.
 
The loop momentum satisfying the triple-cut conditions is denoted by $\hat{\ell}(z)$ as a function of $z$, with
\begin{equation}
    \hat{\ell}^{a\dot{a}}(z)=a\lambda_A^a \tilde{\lambda}_A^{\dot{a}}+z\lambda_B^a \tilde{\lambda}_A^{\dot{a}}, 
\label{eq:ellofz}
\end{equation}
and the value of $a$ as in \cref{eq:abvalues}.

We have arrived at the following form of the triple cut integral:
\begin{align}\label{eq:triple}
{\mathcal C}_{n1i}
=J\int \frac{\mathrm{d}z}{z} \prod_{r=1,i,n} \mathrm{d}^4 \eta_{\hat{\ell}_r}A_{1,i}(z) A_{i,n}(z) A_{n,1}(z).
\end{align}

\paragraph{Residue theorem for the on-shell form.}
We are working in four dimensions, so there is still one degree of freedom left for the loop momentum $\hat\ell(z)$, now parametrized by the variable $z$. 
We will apply the residue theorem to the triple-cut amplitude as a function of $z$. The residues from poles within the amplitudes will be identified as quadruple cuts, and we additionally need to consider the behaviour at $z=0$ with the apparent pole in \cref{eq:triple}, and as $z \to \infty$.

\paragraph{Three-point MHV amplitude for \texorpdfstring{$c=0$}{}.}

\begin{align}
\vert \hat{\ell}_1 \rangle &=a\vert n\rangle+z \vert x_{1 i} \vert n \rbrack, \quad &\lbrack \hat{\ell}_1\vert&=\lbrack n \vert,\label{eq:mhvl1}\\
\vert \hat\ell_n \rangle&=(a-1)\vert n\rangle+z \vert x_{1 i} \vert n \rbrack,\quad &\lbrack \hat\ell_n\vert&=\lbrack n \vert. \label{eq:mhvl3}
\end{align}
 The above relations mean that the negative-chirality spinors appearing in the amplitude $A_{n,1}(p_n,\ell_1,\ell_n)$ are all equal, so that the three-point amplitude must be MHV. 

Similarly, in the case $d=0$, the three-point amplitude is constrained to be \textoverline{MHV}.
\paragraph{Behaviour at zero.}
In order to assess whether there is truly a singularity at $z=0$, we evaluate the three-point amplitude explicitly. For the case $c=0$, the three-point MHV amplitude can be written as 
\begin{align}
A_{n,1}^{\text{MHV}}(\hat\ell_n,p_{n},-\hat\ell_1)&=i\frac{\delta^{\left(8\right)}\left(\vert \hat\ell_n\rangle \eta_{\hat\ell_n}+\vert n\rangle \eta_{n}-\vert \hat\ell_1\rangle \eta_{\hat\ell_1}\right)}{{\langle \hat\ell_n\,n\rangle \langle n\,\hat\ell_1\rangle \langle \hat\ell_1\,\hat\ell_n\rangle}} 
\end{align}
Using the expressions of \cref{eq:mhvl1,eq:mhvl3}, we have
\begin{align}
A_{n,1}^{\text{MHV}}(\hat\ell_n,p_{n},-\hat\ell_1) &=-i\frac{\delta^{\left(8\right)}\left((a-1)\vert n\rangle \eta_{\hat\ell_n}+z\vert x_{1 i}|n\rbrack \eta_{\hat\ell_n}+\vert n\rangle \eta_{n}-a\vert n\rangle \eta_{\hat\ell_1}-z\vert x_{1 i}|n\rbrack \eta_{\hat\ell_1}\right)}{z^3
(x_{i n}^2-x_{1 i}^2)^3
} \\
&= -
i z (x_{i n}^2-x_{1 i}^2)
\delta^{(4)}\left( \eta_{\hat\ell_n}- \eta_{\hat\ell_1}\right)
\delta^{(4)}\left((a-1) \eta_{\hat\ell_n}+ \eta_{n}-a \eta_{\hat\ell_1}\right),
\label{eq:mhv3-grassmann}
\end{align}
where in the last line, the 8-fold delta function has been factorized by projecting the spinors through contraction with the basis of $\langle n \vert$ and $\lbrack n \vert x_{1 i}$.\footnote{See Eq.~(4.53) of \cite{bible}.}
Thus we find
\begin{align}
\eta_{\hat\ell_1}&=\frac{a-1}{a}\eta_{\hat\ell_n}+\frac{1}{a}\eta_{n}=\eta_{n},\label{eq:mhvetal1}\\
\eta_{\hat\ell_n}&=\eta_{n}\label{eq:mhvetal3},
\end{align}
and a visible factor of $z$ has emerged in \cref{eq:mhv3-grassmann} to cancel the potential singularity.
Performing the Grassmann integrations  over $\eta_{\hat\ell_1},\eta_{\hat\ell_n}$ gives
\begin{equation}\int d^4 \eta_{\hat\ell_1}d^4 \eta_{\hat\ell_n}\frac{A_{n,1}^{\text{MHV}}(\hat\ell_n,p_{n},-\hat\ell_1)}{z}
=-i (x_{i n}^2-x_{1 i}^2).\label{eq:3pmhv}
\end{equation}
Therefore, the above part of the integral has a finite contribution at $z=0$.
We will see below that the poles of $A_{1,i}$ and $A_{i,n}$ arise from their internal propagators and are distinct from $z=0$ for generic kinematics, so there is indeed no singularity at $z=0$.
\paragraph{Residue at infinity.}
Next, we show that the residue at infinity vanishes. Let us denote the
supercharge by
$Q^{\alpha\dot a}:=\sum_{i=1}^{n}\lbrack i\vert^{\dot a}\,
\frac{\partial}{\partial\eta_{i\alpha}}$,
with $\alpha$ the R-symmetry index in the fundamental representation of
$\mathrm{SU}(4)$ and $\dot a$ the spinor index. Under a finite $Q$
transformation with an anticommuting parameter $\zeta_{\alpha \dot a}$,
common to all legs, the on-shell coherent state of a leg with momentum
$p_i^{a\dot a}=\lambda_i^a\tilde\lambda_i^{\dot a}$ is translated as
\begin{equation}
e^{Q^{\alpha\dot a}\zeta_{\alpha\dot a}}\big\vert\eta_i\big\rangle
=\big\vert\eta_i+\lbrack i\zeta\rbrack\big\rangle,
\end{equation}
where $\lbrack i\zeta\rbrack=\tilde\lambda_{i}^{\dot a}\zeta_{\dot a}$ and R-symmetry indices are suppressed on $\eta_i$ and $\zeta$. This is a symmetry of the amplitude: the shift leaves the Grassmann delta
function invariant, since
\begin{equation}
\sum_{i=1}^{n}\lambda_i^{a}\lbrack i\zeta\rbrack
=\Big(\sum_{i=1}^{n}\lambda_i^{a}\tilde\lambda_i^{\dot a}\Big)\zeta_{\dot a}
=0
\end{equation}
by momentum conservation, and hence
$\delta^{(8)}\big(\sum_i\lambda_i\eta_i\big)$ is unchanged.
This means that if we label all external and cut legs by $\eta$ coherent
states,\footnote{Each leg may equivalently be labelled in the $\bar\eta$
basis, built on the negative-helicity gluon; the two are related by a
Grassmann Fourier transform. In a mixed basis, the action of $Q$ produces
exponential phases on the $\bar\eta$ legs, so \cref{eq:Qinv} would hold
only up to such factors.} built on the positive-helicity gluon, then the
$Q$ supersymmetries act purely as translations and the amplitude is
invariant,
\begin{equation}\label{eq:Qinv}
A(\eta_i)=A\big(\eta_i+\lbrack i\zeta\rbrack\big).
\end{equation}
The two spinor degrees of freedom of $\zeta_{\alpha\dot a}$ imply that we
can always choose its components appropriately to translate two $\eta$'s
to zero.

Consider now the $z$ dependence of the two massive corners. By \cref{eq:mhvl1,eq:mhvl3}, the legs adjacent to the three-point vertex have $\lbrack\hat\ell_1\vert=\lbrack\hat\ell_n\vert$, while at $A_{1,i}$ the $z$-dependence enters through $\vert\hat\ell_1\rangle$, and at $A_{i,n}$
through $\vert\hat\ell_n\rangle$. In each case the other cut leg, $\hat\ell_i$, is deformed so that all three legs remain on shell and momentum is conserved. This is a BCFW deformation of the two cut legs at
each corner. Of the four helicity assignments of
the two shifted legs, only the case in which the $\lambda$-shifted leg is
negative and the $\tilde\lambda$-shifted leg is positive can fail to give
a $1/z$ falloff \cite{Arkani-Hamed:2008owk,Britto:2004ap,Britto:2005fq}. Setting $\eta_{\hat\ell_1}=\eta_{\hat\ell_n}=0$ makes the
$\lambda$-shifted leg at each corner a positive-helicity gluon, so we are
in one of the three remaining cases and obtain the $1/z$ falloff at each
corner.

Let us show this explicitly for the $c=0$ case. From \cref{eq:3pmhv}, we
see that the three-point amplitude is finite when $z\to\infty$. As
established above, it suffices to make the $\lambda$-shifted leg at each
corner a positive-helicity gluon, which we ensure by sending
$\eta_{\hat\ell_1},\eta_{\hat\ell_n}\to 0$. By \cref{eq:mhvl1,eq:mhvl3}, 
$\lbrack\hat\ell_1\rvert=\lbrack\hat\ell_n\rvert=\lbrack n\rvert$, so
under the translation of \cref{eq:Qinv} both are shifted by the same
amount. Considering
$\zeta=c_1\tilde\lambda_n+c_2\,x_{1i}\cdot\lambda_n$, the Grassmann
variables are translated as
\begin{align}
\eta_{\hat\ell_1}&\to\eta_{\hat\ell_1}+\lbrack n\,\zeta\rbrack
=\eta_n+c_2\,(x_{in}^2-x_{1i}^2),\\
\eta_{\hat\ell_n}&\to\eta_{\hat\ell_n}+\lbrack n\,\zeta\rbrack
=\eta_n+c_2\,(x_{in}^2-x_{1i}^2),
\end{align}
where we have used \cref{eq:mhvl1,eq:mhvl3} and
\cref{eq:mhvetal1,eq:mhvetal3}, together with
$\langle n\vert x_{1i}\vert n\rbrack=x_{in}^2-x_{1i}^2$. Since the
three-point vertex has already localized
$\eta_{\hat\ell_1}=\eta_{\hat\ell_n}=\eta_n$, the two conditions coincide
and reduce to the single equation $c_2(x_{in}^2-x_{1i}^2)=-\eta_n$,
giving $c_2=-\eta_n/(x_{in}^2-x_{1i}^2)$, with $c_1$ left free since
$\lbrack n n\rbrack=0$. This sets both
$\eta_{\hat\ell_1}$ and $\eta_{\hat\ell_n}$ to zero, so each corner has a
positive-helicity $\lambda$-shifted leg. Since the integrand of
\cref{eq:triple} carries an explicit $dz/z$, the residue at infinity
vanishes provided the remaining factors vanish as $z\to\infty$. Here they
do so as $1/z^{2}$, and the Grassmann integrations introduce no positive
powers of $z$.
\paragraph{Quadruple cuts of boxes compatible with the triple cut of the two-mass triangle.} 
There are simple poles in $z$ where any of the remaining propagators go on-shell.
Thus
we consider boxes all compatible with the triple cut ${\mathcal C}_{n1i}$ by partitioning the external legs in either $A_{1,i}$ or $A_{i,n}$, as shown in \cref{fig:quadruplecuts}.
Setting a fourth propagator on-shell fully specifies the loop momentum in each box at a different point in the complex plane according to the maximal cut solution. The two solutions of the quadruple cut correspond to the two choices of helicity for the three-point vertex.

\paragraph{Poles in $A_{1,i}$.} We first look at the case of setting $\hat\ell_j^2(z)=0$, where $1<j<i$.  With the parametrization above, in the case that $c=0$, 
we have
\begin{equation}\label{eq:jprop}
    \hat\ell_j^2(z) = (\hat\ell(z)+x_{1 j})^2= \lbrack n |  x_{1 j} x_{1 i} |n \rbrack(z-z_j),
\end{equation}
where the pole is located at
\begin{align}\label{eq:jpole}
    z_j 
     &= \frac{x^2_{1 j} x_{n i}^2-x_{1 i}^2 x_{n j}^2}{(x^2_{n i}-x^2_{1 i}) \lbrack n | x_{1 i} x_{1 j}|n \rbrack}.
\end{align}

\paragraph{Poles in $A_{i,n}$.} We repeat the above for the 
fourth cut at $\hat\ell_k^2(z)=0$, where 
$\hat\ell_k(z)=\hat\ell(z)+x_{1 k}$ and $i<k<n$. We can simply replace $j$ by $k$ in the previous calculation.  Thus we have
\begin{equation}\label{eq:kprop}
    \hat\ell_k^2(z) = (\hat\ell(z)+x_{1 k})^2= \lbrack n |  x_{1 k} x_{1 i} |n \rbrack(z-z_k),
\end{equation}
where the pole is located at
\begin{align}\label{eq:kpole}
    z_k 
     &= \frac{x^2_{1 k} x_{n i}^2-x_{1 i}^2 x_{n k}^2}{(x^2_{n i}-x^2_{1 i}) \lbrack n | x_{1 i} x_{1 k}|n \rbrack}.
\end{align}

\paragraph{Sum of residues.}
Summing all residues from the poles found above, we find
\begin{alignat}{3}
0 
&=\sum_{j=2}^{i-1}\Res_{z=z_j}\ \Biggl[ \frac{J}{z} \int \prod_{r=1,j,i,n}\mathrm{d}^4 \eta_{\hat{\ell}_r} &&
\frac{A_{1,j}(z) A_{j,i}(z)}{\hat{\ell}^2_j(z)}
A_{i,n}(z)A_{n,1}(z)\Biggr] \nonumber\\&+\sum_{k=i+1}^{n-1}\Res_{z=z_k}  \Biggl[\frac{J}{z} \int  \prod_{r=1,i,k,n} \mathrm{d}^4 \eta_{\hat{\ell}_r} &&A_{1,i}(z) \frac{A_{i,k}(z)A_{k,n}(z)}{\hat{\ell}^2_k(z)} A_{n,1}(z)\Biggr]
\end{alignat}
We have included a Grassmann integration for the fourth propagator. 
The combination of a hat and a bar is used for momenta fully constrained by the quadruple cut, whose value will depend on the specific solution we found before, through $z_j$ or $z_k$. 
Substituting the expressions found in \cref{eq:jprop,eq:jpole,eq:kprop,eq:kpole} and evaluating the residues, we arrive at
\begin{align}
0 
=&\sum_{j=2}^{i-1}
\frac{1}{x^2_{1 j} x_{n i}^2-x_{1 i}^2 x_{n j}^2} \int  \prod_{r=1,j,i,n} \mathrm{d}^4 \eta_{\hat{\ell}_r} A_{1,j}(z_j) A_{j,i}(z_j) A_{i,n}(z_j)A_{n,1}(z_j) \nonumber\\&+\sum_{k=i+1}^{n-1} \frac{1}{x^2_{1 k} x_{n i}^2-x_{1 i}^2 x_{n k}^2} \int  \prod_{r=1,i,k,n} \mathrm{d}^4 \eta_{\hat{\ell}_r}  A_{1,i}(z_k) A_{i,k}(z_k)A_{k,n}(z_k) A_{n,1}(z_k),
\end{align}
where now only the Grassmann integral remains. We now recognize the individual terms as leading singularities associated to the quadruple cuts. However, notice that the prefactors represent {\em opposite} choices of the branch of the square root when we take the cyclic ordering of indices into account: according to the conventions listed in \cref{eq:sqrt-rho}, we have $x^2_{1 j} x_{n i}^2-x_{1 i}^2 x_{n j}^2 = -\sqrt{\rho_{1jin}}$, while $x^2_{1 k} x_{n i}^2-x_{1 i}^2 x_{n k}^2 = \sqrt{\rho_{1ikn}}$.
Moreover, since we have made the choice $c=0$, the terms of the expression correspond to one type of leading singularity of the boxes appearing in the two-mass triangle equations, namely the ones with an MHV three-point subamplitude as defined in \cref{eq:maxcutsusy}. Our result is therefore
\begin{equation}
    \sum_{j=2}^{i-1} {\rm LeS}_{1, j, i, n}^{+}[A^{(1)}_n]= \sum_{k=i+1}^{n-1}{\rm LeS}^{+}_{1, i, k, n}[A^{(1)}_n].  
\end{equation}
Repeating the entire analysis for the case with $d=0$, where the three-point amplitude is \textoverline{MHV}, and taking $c$ as the complex variable, gives the corresponding relations for ${\rm LeS}^-$. Taken together, we have proved \cref{prop:2mtrls}.
Adding the two equations with a factor of $1/2$ each, we recover the definition of the box coefficients given in \cref{eq:boxcoef}, thus proving the two-mass triangle relations of \cref{eq:2mrelation}.	

\bibliographystyle{JHEP}
\bibliography{biblio.bib}

\end{document}